\definecolor{sz}{rgb}{0.1,0.2,0.6}
\definecolor{blue}{rgb}{0.1,0.2,0.5}
\definecolor{brown}{rgb}{0.6,0.6,0.2}
\theoremstyle{plain}
\newtheorem{theorem}{Theorem}
\newcommand{\newtheoremwithcrefformat}[2]{%
  \newtheorem{#1}[lemma]{#2}%
  \crefformat{#1}{##2\MakeUppercase#1~##1##3}%
  \Crefformat{#1}{##2\MakeUppercase#1~##1##3}%
}
\newcommand{\newseptheoremwithcrefformat}[2]{%
  \newtheorem{#1}{#2}%
  \crefformat{#1}{##2\MakeUppercase#1~##1##3}%
  \Crefformat{#1}{##2\MakeUppercase#1~##1##3}%
}
\theoremstyle{nonumberplain}
\newtheorem{proof}{Proof}
\renewcommand{\subset}{\subseteq}
\renewcommand{\setminus}{-}
\newcommand\stout{\bgroup\markoverwith{\textcolor{red}{\rule[0.5ex]{2pt}{1.4pt}}}\ULon}
\newcommand\pstout{\bgroup\markoverwith{\textcolor{violet}{\rule[0.5ex]{2pt}{1.4pt}}}\ULon}
\tikzstyle{vertex}=[circle,inner sep=0.5,minimum size%
\tikzstyle{S}=[dash pattern=on 2pt off 1pt,blue,xshift=2mm]
\tikzstyle{G}=[black,bend left=10]
\tikzstyle{U}=[red,bend right=10,thick,densely dotted]
\tikzstyle{T}=[green!70!black,densely dashed,very thick]
\newcommand{\col}{\mathrm{col}}
\newcommand{\adm}{\mathrm{adm}}
\newcommand{\tw}{\mathrm{tw}}
\newcommand{\SReach}{\mathrm{SReach}}
\newcommand{\CCC}{\mathscr{C}}
\newcommand{\DDD}{\mathscr{D}}
\newcommand{\FPT}{\textsc{FPT}}
\newcommand{\AWs}{\textsc{AW}\ensuremath{[\ast]}} 
\newcommand{\Oh}{\mathcal{O}}
\newcommand{\Pp}{\mathcal{P}}
\newcommand{\N}{\mathbb{N}}
\renewcommand{\phi}{\varphi}
\renewcommand{\epsilon}{\varepsilon}
\newcommand{\str}{\mathbb}
\newcommand{\strA}{\str{A}}
\newcommand{\FO}{\ensuremath{\mathrm{FO}}\xspace}
\newcommand{\MSO}{\ensuremath{\mathrm{MSO}}\xspace}
\newcommand{\CMSO}{\ensuremath{\mathrm{CMSO}}\xspace}
\newcommand{\FOs}{\FO[+1]}
\newcommand{\FOsi}{\FO[+1\mathit{-inv}]}
\newcommand{\FOo}{\FO[{<}]}
\newcommand{\FOoi}{\FO[{<}\mathit{-inv}]}
\newcommand{\MSOsi}{\MSO[+1\mathit{-inv}]}
\newcommand{\MSOoi}{\MSO[{<}\mathit{-inv}]}
\newcommand{\MSOoneoi}{\MSO_1[{<}\mathit{-inv}]}
\newcommand{\MSOtwooi}{\MSO_2[{<}\mathit{-inv}]}
\newcommand{\minor}{\preccurlyeq}
\renewcommand{\mid}{\,:\,}
\newcommand{\norm}[1]{\lVert#1\rVert}
\newcommand{\npprob}[5]{%
  \begin{center}\normalfont\fbox{%
      \begin{tabular}[t]{rp{#1}}%
        \multicolumn{2}{l}{#2}\\%
        \textit{Input:} & #3\\%
        \textit{Parameter:} & #4\\%
        \textit{Problem:} & #5%
      \end{tabular}}%
\end{center}}
\renewcommand{\leq}{\leqslant}
\renewcommand{\geq}{\geqslant}
\renewcommand{\le}{\leqslant}
\newcommand{\MC}{\mathrm{MC}}
\newcommand{\LLL}{\mathcal{L}}
\newcommand{\Cdegd}{\DDD_d}
\newcommand{\ptxt}[1]{\textcolor{violet}{#1}}
\newcommand{\ar}{\mathrm{ar}}
\newcommand{\FTP}{\textsc{FTP}}
\title{Model-Checking on Ordered Structures\thanks{This paper
  subsumes the results of
  \cite{heuvel17,engelmann2012first,eickmeyer2013model,eickmeyer2016model,KreutzerPRS16}.\newline
  S.~Kreutzer and R.~Rabinovich are supported
  by the European Research Council (ERC) under the European Union's Horizon
  2020 research and innovation programme (ERC Consolidator Grant DISTRUCT,
  grant agreement No.\ 648527). P.~Ossona de Mendez is supported by
    grant ERCCZ LL-1201 and CE-ITI, and by the European Associated
    Laboratory ``Structures in Combinatorics'' (LEA STRUCO) P202/12/G061.
  M.\ Pilipczuk and S.\ Siebertz are supported by the National Science
  Centre of Poland via POLONEZ grant agreement UMO-2015/19/P/ST6/03998,
  which has received funding from the European Union's Horizon 2020
  research and innovation programme (Marie Sk\l odowska-Curie grant
  agreement No.\ 665778). D.A.~Quiroz is supported by CONICYT,
    PIA/Concurso Apoyo a Centros Cient\'\i ficos y Tecnol\'ogicos de
    Excelencia con Financiamiento Basal AFB170001.
}}
\author{Kord Eickmeyer\\Technische Universit\"at Darmstadt, Germany, \\\texttt{eickmeyer@mathematik.tu-darmstadt.de}
\and
Jan van den Heuvel\\London School of Economics and Political Science, United Kingdom\\\texttt{j.van-den-heuvel@lse.ac.uk}
\and
Ken-ichi Kawarabayashi\\National Institute of Informatics, 
Yapan\\\texttt{k\_keniti@nii.ac.jp}
\and
Stephan Kreutzer\\Technische Universit\"at Berlin, Germany\\
\texttt{stephan.kreuzter@tu-berlin.de}
\and
Patrice Ossona~de~Mendez\\Centre d'Analyse et de Math\'ematiques Sociales (CNRS, UMR
    8557), Paris, France\\ and \\Charles University Prague, Czech Republic\\\texttt{pom@ehess.fr}    
\and
Micha\l~Pilipczuk\\University of Warsaw, Poland\\
\texttt{michal.pilipczuk@mimuw.edu.pl}
\and
Daniel~A. Quiroz\\Universidad de Chile, Chile\\
\texttt{dquiroz@cmm.uchile.cl}
\and
Roman Rabinovich\\Technische Universit\"at Berlin, Germany
\\\texttt{roman.rabinovich@tu-berlin.de}
\and 
Sebastian Siebertz\\Humboldt-Universit{\"a}t zu Berlin, Germany\\ \texttt{siebertz@informatik.hu-berlin.de}}
\begin{document}
\maketitle

  \begin{textblock}{5}(12.75,14.15)
    \includegraphics[width=32px]{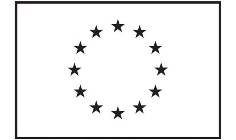}%
  \end{textblock}

\newpage
\begin{abstract}
  We study the model-checking problem for first- and monadic second-order
  logic on finite relational structures. The problem of verifying whether a
  formula of these logics is true on a given structure is considered
  intractable in general, but it does become tractable on interesting
  classes of structures, such as on classes whose Gaifman graphs have
  bounded treewidth.
  In this paper we continue this line of research and study model-checking
  for first- and monadic second-order logic in the presence of an ordering
  on the input structure. We do so in two settings: the general ordered
  case, where the input structures are equipped with a fixed order or
  successor relation, and the order invariant case, where the formulas may
  resort to an ordering, but their truth must be independent of the
  particular choice of order.
  In the first setting we show very strong intractability results for most
  interesting classes of structures. In contrast, in the order invariant
  case we obtain tractability results for order-invariant monadic
  second-order formulas on the same classes of graphs as in the unordered
  case. For first-order logic, we obtain tractability of
  successor-invariant formulas on classes whose Gaifman graphs have bounded
  expansion. Furthermore, we show that model-checking for order-invariant
  first-order formulas is tractable on coloured posets of bounded width.
\end{abstract}

\section{Introduction}
\label{sec:intro}

Pinpointing the exact complexity of the model-checking problem for
first-order and monadic second-order logic has been the object of a large
body of research. The model-checking problem for a logic~$\LLL$, denoted
$\MC(\LLL)$, is the problem of deciding for a given finite
structure~$\strA$ and a formula $\phi\in\LLL$ whether $\strA$ is a model of
$\phi$; in symbols $\strA\models\phi$. We will denote $\MC(\LLL)$
restricted to a class~$\CCC$ of input structures as $\MC(\LLL,\CCC)$.

Vardi~\cite{Vardi82} proposed to distinguish the complexity of the
model-checking problem into \emph{data}, \emph{formula}, and
\emph{combined} complexity, depending on whether we treat the structure
$\strA$ (the data) as input while considering $\phi$ as fixed, the formula
as input while considering $\strA$ as fixed, or considering both $\strA$
and~$\phi$ as part of the input. As shown by Vardi, for any fixed formula
$\phi\in\FO$ of size $\vert\phi\vert$ the model-checking problem is
solvable in polynomial time $n^{\mathcal{O}(\vert\phi\vert)}$, i.e.\@ the
data complexity of $\MC(\FO)$ is in \textsc{Ptime}. On the other hand, the
formula complexity and combined complexity of first-order logic is
\textsc{Pspace}-complete already on a fixed \mbox{2-element}
structure~\cite{ChandraM77}. Evaluating a fixed formula of monadic
second-order logic belongs to the polynomial time hierarchy. (And for each
level $\Sigma_i^p$ and $\Pi_i^p$ there exists an \MSO-formula whose
model-checking problem is complete for that
level~\cite{stockmeyer1976polynomial}.) Again, the formula complexity and
combined complexity of monadic second-order logic is
\textsc{PSpace}-complete.

A more fine-grained analysis of model-checking complexity can be achieved
through the lens of parameterised complexity. In this framework, the
model-checking problem $\MC(\LLL)$ for a logic~$\LLL$ is said to be
\emph{fixed-parameter tractable} if it can be solved in
time $f(|\phi|)\cdot|\strA|^c$, for some function~$f$ (usually required to
be computable) and a constant $c$ independent of~$\phi$ and $\strA$. The
complexity class FPT of all fixed-parameter tractable problems is the
parameterised analogue to \textsc{Ptime} as a model of efficient
solvability. Hence, parameterised complexity lies somewhere between data
and combined complexity, in that the formula is not taken to be fixed and
yet has a different influence on the complexity than the structure. Already
the model-checking problem for first-order logic is complete for the
parameterised complexity class $\AWs$, which is conjectured and widely
believed to strictly contain the class FPT. Thus it is widely believed that
model-checking for first-order logic (and thus also for monadic
second-order logic) is not fixed-parameter tractable.
% In particular, in the framework of parametrised complexity the complexity
% of first-order model-checking on a specific class $\CCC$ of structures
% can be studied in a meaningful way.

% In general, a parametrised algorithmic problem takes as input a pair
% $(x,k)$, where $x$ is an instance and $k$ is an integer parameter. The
% problem is said to be \emph{fixed-parameter tractable} (\emph{fpt} for
% short) if it can be solved in time $f(k)\cdot|x|^c$, where~$f$ and $c$
% are as before.

Perhaps the most famous result on the parameterised complexity of
model-checking is Courcelle's theorem \cite{Courcelle90}, which states that
every algorithmic property on graphs definable in monadic second-order
logic (with quantification over edge sets) can be evaluated in linear time
on any class of graphs of bounded treewidth. An equivalent statement is
that $\MC(\MSO,\CCC)$ is fixed-parameter tractable via a linear-time algorithm for any class
$\CCC$ of bounded treewidth. This result was followed by a similar result
for monadic second-order logic with only quantification over vertex sets on
graph classes of bounded clique-width~\cite{CourcelleMakRot00}. It was shown in
\cite{KreutzerT10,KreutzerT10b} that Courcelle's theorem cannot be extended
in full generality much beyond bounded treewidth.

% Starting with these foundational result, much work has gone into
% understanding the complexity of first-order and monadic second-order
% model-checking with respect to specific classes of graphs or structures.
% In particular, much of this effort has concentrated on sparse classes of
% graphs such as planar graphs, graphs of bounded treewidth, graphs of
% bounded maximum degree, or classes that exclude a fixed (topological)
% minor. Recently, more abstract notions of sparsity have been considered,
% namely classes of bounded expansion and nowhere dense classes. Sparse
% classes of graphs in this sense have in common that if a class $\CCC$ is
% sparse and we close it under taking subgraphs, then it is still sparse,
% i.e.\@ sparse graphs have no dense subgraphs.

For first-order logic, Seese \cite{Seese96} proved that first-order model-checking is fixed-parameter tractable on any class of graphs of bounded
degree. This result was the starting point of a long series of papers
establishing tractability results for first-order model-checking on sparse
classes of graphs; see e.g.\
\cite{FrickG01,FlumFG02,DawarGK07,dvovrak2013testing,grohe2014deciding},
and see \cite{grohe2011methods} for a survey. This line of research
culminated in the theorem of Grohe et al.\ \cite{grohe2014deciding} stating
that for any nowhere dense class $\CCC$ of graphs we have
  $\MC(\FO,\CCC)\in\FPT$. Moreover, for classes of graphs that are closed
under taking subgraphs, this yields a precise characterisation of
tractability for first-order model-checking~\cite{dvovrak2013testing}.

% The immediate follow-up question is whether this result can be extended
% in various ways. One line of research tries to extend tractability to
% even more general or different classes of graphs and structures, see
% e.g.\ \cite{GanianKOST15,GajarskyHLOORS15,GajarskyHOLR16,Gaj18}.

So far, most of the work on algorithmic meta-theorems has focused on
unordered structures. Many of the results mentioned above rely on locality
theorems for first-order logic, such as Gaifman's locality theorem
\cite{Gaifman82}, and the applied techniques do not readily extend to
ordered structures. In this paper we study the complexity of first-order
model-checking on structures where an ordering is available to be used in
formulas. We do so in two different settings. The first is that the input
structures are equipped with a fixed order or with a fixed successor
relation. (A successor relation is a directed Hamiltonian path on
the universe of the structure.) We show that first-order logic on ordered
structures as well as on structures with a successor relation is
essentially intractable on nearly all interesting classes.

The other case we consider is an order-\emph{invariant} or a
successor-\emph{invariant} logic. In order-invariant logics, we are allowed to use
an order relation in the formulas, but whether the formula is true in a
given structure must not depend on the particular choice of order.
% Order-invariant logics have been studied in database- and finite
% model-theory in the past. \marginpar{TODO: refs}

It is easily seen that the expressive power of order-invariant \MSO is
greater than that of plain \MSO, as, e.g. with an order we can
formalise in \MSO that a structure has an even number of elements, a
property not definable without an order. In fact, the expressive power of
order-invariant \MSO is even greater than the expressive power of the
extension of \MSO with counting quantifiers \CMSO~\cite{GanzowR08}. Over
restricted classes of structures, order-invariant \MSO and \CMSO have the
same expressive power (see e.g.\ \cite{Courcelle96}). This holds true for
successor-invariant \MSO as well, as an order is definable from a successor
relation via \MSO.

For monadic second-order logic we are able to show that order-invariant
\MSO is tractable on essentially the same classes of graphs as plain \MSO,
i.e.\@ we can increase the expressive power without restricting the
tractable cases. To be precise, we show that the model-checking problem for
order-invariant \MSO on classes of  graphs of bounded clique-width is
fixed-parameter tractable. Furthermore, combining the result of Courcelle
\cite{Courcelle90} and results that one can add the edges of 
a successor relation to a graph of bounded treewidth without 
increasing its treewidth too much, we get
that model-checking for order-invariant \MSO (with quantification over edge
sets) on classes of graphs of bounded treewidth is fixed-parameter tractable.

Also successor- and order-invariant first-order logic have both been
studied intensively in the literature, see e.g.\
\cite{potthoff94,Otto00,BenediktSeg2009,rossman2007successor,Rossman03,elberfeld2016order}.
However, the difference between the expressive powers of order-invariant,
successor-invariant, and plain \FO on various classes of structures remains
largely unexplored. An unpublished result of Gurevich states that the
expressive power of order-invariant \FO is stronger than that of plain \FO
(see e.g.\ Theorem~5.3 of \cite{libkin2013elements} for a presentation of
the result). Rossman \cite{rossman2007successor} proved the stronger result that that
successor-invariant \FO is more expressive than plain \FO.
The construction of \cite{rossman2007successor} creates dense instances
though, and no separation between successor-invariant \FO and plain \FO is
known on sparse classes, say on classes of bounded expansion. On the other
hand, collapse results in this context are known only for very restricted
settings. It is known that order-invariant \FO collapses to plain \FO on
trees~\cite{BenediktSeg2009,Niemistro05} and on graphs of bounded
treedepth~\cite{EickmeyerEH14}. Moreover, order-invariant \FO is a subset
of \MSO on graphs of bounded degree and on graphs of bounded treewidth
\cite{BenediktSeg2009}, and, more generally, on decomposable graphs in the
sense of~\cite{elberfeld2016order}.

We show that, up to a narrow gap, the model-checking results for plain
\FO carry over to successor-invariant \FO. In
particular, we show that model-checking successor-invariant~\FO is
fixed-parameter tractable on any class of graphs of bounded expansion.
Classes of bounded expansion generalise classes with excluded topological
minors, and form a natural meta-class one step below nowhere dense classes
of graphs. More precisely, we show that if $\CCC$ is a class of structures
of bounded expansion, then model-checking for successor-invariant
first-order formulas on $\CCC$ can be solved in time
$f(|\phi|)\cdot n\cdot\alpha(n)$, where $n$ is the size of the universe of
the given structure, $f$ is some function, and $\alpha(\cdot)$ is the
inverse Ackermann function. Note that model-checking for plain \FO can be done in linear time on classes of bounded expansion
\cite{dvovrak2013testing}, thus the running time of our algorithm is very
close to the best known results for plain \FO.

The natural way of proving tractability for successor-invariant \FO on a
specific class $\CCC$ of graphs is to show that any given graph $G\in\CCC$
can be augmented by a new set $S$ of coloured edges which form a successor
relation on $V(G)$ such that $G+S$ falls within a class $\DDD$ of graphs on
which plain \FO is tractable. In this way, model-checking for
successor-invariant \FO on the class~$\CCC$ is reduced to the
model-checking problem for \FO on $\DDD$. The main problem is how to
construct the set of augmentation edges $S$. For classes of bounded
expansion, to construct such an edge set, we rely on a characterisation of
bounded expansion classes by \emph{generalised colouring numbers}. The
definition of these graph parameters is roughly based on measuring
reachability properties in a linear vertex ordering of the input graph. Any
such ordering yields a very weak form of decomposition of a graph in terms
of an elimination tree. The main technical contribution of this paper is
that we find a way to control these elimination trees so that we can use
them to define, in a first step, a set $F$ of new edges with the following
properties: a) $F$ forms a spanning tree of the input graph~$G$, b) $F$ has
maximum degree at most $3$, and c) after adding all the edges of $F$ to the
graph, the increase in the colouring numbers is bounded. In a second step, from the
bounded degree spanning tree we will construct a successor relation $S$ as
desired.

This construction, besides its use in this paper, yields a new insight into
the elimination trees generated by colouring numbers. We believe it may
prove useful in future research as well.

As mentioned before, the tractability of model-checking for \FO on sparse graphs is well understood, while only few results are
available for classes of dense graphs. We review some known results for \FO
model-checking on dense graph classes in Section~\ref{sec:dense} and show
that a result by Gajarsk\'y et al.~\cite{GajarskyHLOORS15} carries over to
order-invariant \FO.

\paragraph*{Organisation of the paper.}
In Section~\ref{sec:prelims} we fix the terminology and notation used
throughout the paper. In Section~\ref{sec:order} we study the case of
ordered structures, i.e.\@ structures equipped with a fixed order or
successor relation. Order-invariant \MSO is considered in
Section~\ref{sec:order-inv}. We recall the notions from the theory of
sparse graphs, in particular the generalised colouring numbers, and prove
tractability of successor-invariant \FO on bounded expansion classes in
Section~\ref{sec:mc-sparse}. Finally, in Section~\ref{sec:dense} we
consider order-invariant \FO on posets of bounded width and other dense
classes of structures.

\paragraph*{Acknowledgements.}
We thank Christoph Dittmann and Viktor Engelmann for many hours of fruitful
discussions.

%%% Local Variables:
%%% mode: latex
%%% TeX-master: "main"
%%% End:

\section{Preliminaries}\label{sec:prelims}

\paragraph*{General notation.}
By $\N$ we denote the set of nonnegative integers. For a set $X$, by
$\binom{X}{2}$ we denote the set of unordered pairs of elements of $X$,
that is, $2$-element subsets of $X$. By $\alpha(\cdot)$ we denote the
inverse Ackermann function, i.e.\ the inverse of the function
$n\mapsto A(n,n)$ with
\[A(m,n)\coloneqq\begin{cases}
  n+1,&\text{if $m=0$};\\
  A(m-1,1),&\text{if $m>0$ and $n=0$};\\
  A\big(m-1,A(m,n-1)\big),&\text{if $m,n>0$}.
\end{cases}\]

\paragraph*{Relational structures.}
We consider finite structures over finite signatures that contain only
relation symbols and constant symbols. Hence a signature
$\tau=\{R_1,\ldots,R_k,c_1,\ldots,c_s\}$ is a finite set of relation
symbols $R_i$ and constant symbols $c_i$, where each relation symbol
$R\in\tau$ is assigned an \emph{arity $\ar(R)$} (arities are part of the signature). A $\tau$-structure
$\strA=\bigl(V(\strA),R_1(\strA),\ldots,R_k(\strA),
c_1(\strA),\ldots,c_s(\strA)\bigr)$ consists of a set $V(\strA)$, the
\emph{universe} of $\strA$, for each $R_i\in\tau$ a relation
$R_i(\strA)\subseteq V(\strA)^{\ar(R_i)}$, and for each $c_i\in\tau$ a
constant $c_i(\strA)\in V(\strA)$. If $\strA$ is a $\tau$-structure and $R$
is a relation symbol not in $\tau$ with associated arity $r$ and
$R(\strA)\subseteq V(\strA)^r$ is an $r$-ary relation over $V(\strA)$, we
write $(\strA,R(\strA))$ for the $\tau\cup\{R\}$-structure obtained by
extending $\strA$ with the relation $R(\strA)$. The \emph{order $|\strA|$} of a
$\tau$-structure $\strA$ is $|V(\strA)|$, and its \emph{size
  $\norm{\strA}$} is $|\tau|+|V(\strA)|+\sum_{R\in\tau}|R(\strA)|$, which
corresponds to the size of a representation of~$\strA$ in an appropriate
model of computation. We call a structure $G$ of signature $\{E\}$, where
$E$ is a binary relation symbol, a \emph{digraph}, and if $E(G)$ is symmetric and
irreflexive, we call $G$ a \emph{graph}. Let $V$ be a set. A \emph{successor
  relation} on $V$ is a binary relation $S\subseteq V\times V$ such that
$(V,S)$ is a directed path of length $|V|-1$. We write $\bar{a}$ for a
finite sequence $(a_1,\ldots,a_k)$ and usually leave it to the context to
determine the length of a sequence. The \emph{Gaifman-graph
  $G(\strA)$} of a $\tau$-structure $\strA$ is the graph with vertex set
$V(\strA)$ and edge set $\{(u,v)\mid\text{$u\neq v$ and there is an
  $R\in\tau$ and a tuple $\bar{a}\in R(\strA)$ such that
  $u,v\in\bar{a}$}\}$.

\paragraph*{First-order logic.}
We assume familiarity with first-order logic \FO and monadic second-order
logic \MSO. We write $\FO(\tau)$ and $\MSO(\tau)$ for the set of all \FO
and \MSO formulas over signature~$\tau$, respectively. If $\phi$ is a
formula of \FO or \MSO, we write $|\phi|$ for the length (of an encoding) of~$\phi$. If $\phi$ is a sentence of $\FO(\tau)$ or $\MSO(\tau)$ and $\strA$
a $\tau$-structure, we write $\strA\models\phi$ if $\phi$ is true 
in~$\strA$. If~$\phi(\bar{x})$ has free variables $\bar{x}$ and
$\bar{a}\in V(\strA)^k$ is a tuple of the same length as $\bar{x}$, we
write $\strA\models\phi(\bar{a})$ if $\phi$ is true in $\strA$, where the
free variables $\bar{x}$ are interpreted by the elements of $\bar{a}$ in
the obvious way. We write $\phi(\strA)$ for the relation
$R\coloneqq\{\bar{a}\mid\strA\models\phi(\bar{a})\}$. We call a formula
$\phi(\bar{x})$ over signature $\tau=\sigma\cup\{<\}$
\emph{order-invariant} if for every $\sigma$-structure $\strA$ and all linear orders $<_1,<_2$ over $V(\strA)$ we
have $\phi(\str A,<_1)=\phi(\str A,<_2)$. Analogously, we call a
formula~$\phi(\bar{x})$ over signature  $\tau=\sigma\cup\{S\}$ (where~$S$
is a binary relation symbol) \emph{successor-invariant} if for every
$\sigma$-structure $\strA$ and all
successor relations $S_1,S_2$ over $V(\strA)$ we have
$\phi(\strA,S_1)=\phi(\strA,S_2)$. We write $\FOoi$ and $\MSOoi$ for the set
of all order-invariant \FO and \MSO formulas, respectively, and $\FOsi$ and
$\MSOsi$ for the set of all successor invariant \FO and \MSO formulas,
respectively. We write $\FO[{<}]$ and $\MSO[{<}]$ for the set of all \FO
and \MSO formulas, respectively, over a signature which contains at least
the binary relation symbol $<$, and similarly for $\FO[{+1}]$ and
$\MSO[{+1}]$. For any order-invariant formula $\phi$ and any
$\tau$-structure $\strA$, we denote $\strA\models_{\mathrm{ord-inv}}\phi$
if for some (equivalently, every) order relation $<$ on the universe of
$\strA$ it holds that $(\strA,<)\models\phi$. Similarly, For a
successor-invariant formula $\phi$ and any $\tau$-structure $\strA$, we
denote $\strA\models_{\mathrm{succ-inv}}\phi$ if for some (equivalently,
every) successor relation $S$ on the universe of $\strA$ it holds that
$(\strA,S)\models\phi$.

Throughout the paper we study the complexity of order- and
successor-invariant logics on restricted classes of structures. As usual in
this type of research we focus on classes of graphs. More general
structures can be reduced to this case using their Gaifman-graph\ptxt{s}. In our
analysis we will use the framework of \emph{parameterised complexity},
see, e.g.\@ \cite{CyganFKLMPPS15, DowneyF99, FlumG06}. A
\emph{parameterised problem} is a subset of $\Sigma^*\times \N$ where
$\Sigma$ is a fixed finite alphabet. For an instance $(w,k) \in
\Sigma^*\times \N$, $k$ is called the parameter.

Let $\CCC$ be a class of graphs and $\LLL$ be one of first-order or monadic
second-order logic. The \emph{order-invariant model-checking problem}
$\MC(\LLL[{<}\textit{-inv}],\CCC)$ of the logic $\LLL$ on the class $\CCC$
of graphs is defined as the problem
\smallskip
\npprob{9.1cm}{$\MC(\LLL[{<}\textit{-inv}],\CCC)$}%
{$G\in\CCC$, order-invariant $\phi\in\LLL(\{E,<\})$}%
% \cap \LLL[{<}\textit{-inv}]$}%
{$|\phi|$}%
{$G\models_{\mathrm{ord-inv}}\phi$\,?}

\smallskip
\noindent
Analogously, we define the \emph{successor-invariant model-checking
  problem} $\MC(\LLL[+1\textit{-inv}],\CCC)$ for~$\LLL$ on~$\CCC$, where
instead the formula $\phi \in \LLL(\{E,S\})$ is required to be successor-invariant. 

Finally,
we define the \emph{ordered model-checking problem} $\MC(\LLL[{<}],\CCC)$
for $\LLL$ on $\CCC$ as

\smallskip
\npprob{9.1cm}{$\MC(\LLL[{<}], \CCC)$}%
{$G\in\CCC$, $<$ a linear order of $V(G)$ and $\phi\in\LLL(\{E,<\})$}%
{$|\phi|$}%
{$(G, <)\models\phi$?}

\smallskip
\noindent
Likewise, we define the \emph{model-checking problem}
$\MC(\LLL[{+1}], \CCC)$ \emph{with successor} for~$\LLL$ on~$\CCC$, which
gets as input a graph $G\in\CCC$, a successor relation $S$ on $V(G)$, and a
formula $\phi\in \LLL(\{E,S\})$.
% as
%
% \smallskip
% \npprob{9.1cm}{$\MC(\LLL[{+1}], \CCC)$}%
% {$G\in\CCC$, $S$ a successor relation on $V(G)$ and
%   $\phi\in\LLL(\{E,S\})$}%
% {$|\phi|$}%
% {$(G, S)\models\phi$?}

The order- and successor-invariant model-checking problems are
\emph{fixed-parameter tractable}, or in the complexity class \FPT, if there
are algorithms that correctly decide on input $(G,\phi)$  whether
${(G,<)}\models\phi$ for some linear order $<$ or $(G,S)\models\phi$ for
some successor relation $S$, respectively, in time
$f(|\phi|)\cdot\norm{G^{\mathcal{O}(1)}}$, for some function
$f:\N\to \N$ in the case where $\phi$ is order invariant or successor invariant, respectively. We have a similar definition of \FTP{} for
$\MC(\LLL[{<}])$ and $\MC(\LLL[{+1}])$. The model-checking problem for
first-order logic is complete for the parameterised complexity class
$\AWs$, which is conjectured and widely believed to strictly contain the
class \FPT. Thus it is widely believed that model-checking for first-order
logic (and thus also for monadic second-order logic) is not fixed-parameter
tractable.

%%% Local Variables:
%%% mode: latex
%%% TeX-master: "main"
%%% End:

\section{Model-Checking on Ordered Structures}\label{sec:order}

In this section we investigate the tractability of model-checking on
classes of ordered structures and on classes of structures with a successor relation.
Of course, it is easy to come up with classes of ordered structures on
which model-checking is fixed-parameter tractable, e.g.\ by taking the
class of all cliques with a linear order on the vertex set. Thus we seek
restrictions as weak as possible while still allowing us to show that
model-checking is \AWs-hard.

\subsection{Coloured Sets}

We start by observing that on the class of ordered coloured sets (and, a
forteriori, on the class of coloured sets with a successor relation), model-checking is tractable even for monadic second-order logic. This is
B\"uchi-Elgot-Trakhtenbrot's Theorem (cf.~\cite{FlumG06}), since coloured
ordered sets are just strings. Thus model-checking for \MSO is
fixed-parameter tractable on structures whose signature contains only unary
relation symbols, apart from the order relation.

\subsection{Vertex-Ordered Graphs}

The simplest case not covered by the preceding paragraph is that of ordered
graphs, i.e.\ $\{E, {<}\}$-structures where both $E$ and ${<}$ are binary
relation symbols. We show that model-checking for first-order logic is
\AWs-hard even for very simple graphs.

\begin{theorem}\label{lem:MCForest}
  Let $\CCC$ be a class of graphs. If $\CCC$ contains all partial
  matchings, then $\MC(\FOo,\CCC)$ is \AWs-hard. If $\CCC$ contains all
  star forests, then $\MC(\FOs,\CCC)$ is \AWs-hard.
\end{theorem}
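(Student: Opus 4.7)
My plan is to prove both statements by fpt-reductions from the model-checking problem for plain first-order logic on arbitrary graphs, which is \AWs-complete with the formula size as parameter. For each part I build an encoding of an arbitrary input graph $G$ into an ordered partial matching (resp.\ a star forest with successor relation), together with a translation of $\FO(\{E\})$-formulas into formulas over $\{E,<\}$ (resp.\ $\{E,S\}$) of linear size blow-up.

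For the ordered case, given $G$ with $V(G)=\{v_1,\ldots,v_n\}$ I take the matching $M$ with universe $\{x_1,\ldots,x_n\}\cup\{a_e,b_e : e\in E(G)\}$, matching edges $\{a_e,b_e\}$ for $e\in E(G)$, and the $x_i$'s isolated. I order the elements so that for every edge $e=\{v_i,v_j\}$ with $i<j$, $a_e$ lies in the ``block'' $B_i=\{u:x_i\le u<x_{i+1}\}$ and $b_e$ lies in $B_j$. The set $\{x_1,\ldots,x_n\}$ of ``vertex representatives'' is FO-definable as the set of unmatched elements, and the predicate ``$u\in B_i$'' is FO-definable from the order as ``$x_i$ is the largest vertex representative with $x_i\le u$''. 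The translation then relativises every quantifier of $\phi$ to vertex representatives and rewrites every atomic $E(u,v)$ as ``there exist matched elements $a,b$ with $a$ in the block of $u$ and $b$ in the block of $v$''. Correctness is immediate.

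For the successor case on star forests the encoding must be redesigned, because \FO with successor alone cannot express the full linear order. For every vertex $v_i$ I introduce a centre $c_i$ equipped with two ``padding'' leaves $d_i^1,d_i^2$, and for every edge $e=\{v_i,v_j\}$ with $i<j$ I add a leaf $\ell_e^i$ of $c_i$, a leaf $\ell_e^j$ of $c_j$, and an extra vertex $m_e$ that is isolated in the star forest. The successor relation first lists the blocks $c_i,d_i^1,d_i^2$ for $i=1,\ldots,n$, and then for every edge $e$ the triple $\ell_e^i,m_e,\ell_e^j$ in three consecutive positions. The padding guarantees that every $c_i$ has degree at least two, so vertex representatives are FO-defined by ``degree at least two''. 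Adjacency of $v_i$ and $v_j$ in $G$ is then captured by the symmetrised formula ``there exist $\ell_1,m,\ell_2$ with $E(c_i,\ell_1)$, $E(c_j,\ell_2)$, $m$ isolated, and $S(\ell_1,m), S(m,\ell_2)$''.

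The main obstacle is to check that no spurious instantiation of the adjacency pattern arises. The key observation is that the only positions in the successor path where an isolated vertex is flanked by two leaves are precisely the middles of the edge-triples: at the transitions between two consecutive edge-triples and between the final padding block and the first edge-triple, the would-be middle element is always either a padding leaf or a centre, hence not isolated, and the flanking elements are of the wrong type at the remaining boundary cases. The constructions are polynomial-time computable, the translated formulas have size $O(|\phi|)$, and hence we obtain fpt-reductions from the \AWs-complete problem of \FO model-checking on arbitrary graphs, establishing \AWs-hardness in both settings.
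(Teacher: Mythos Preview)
Your proposal is correct and follows essentially the same approach as the paper: both parts reduce from general \FO model-checking via a first-order interpretation of an arbitrary graph in an ordered partial matching (resp.\ a star forest with successor). Your encoding for the successor case is virtually identical to the paper's (centres with two padding leaves, edge-triples leaf--isolated--leaf at the end of the successor path); for the ordered case your use of unmatched vertices as block delimiters is a minor variant of the paper's device of marking interval boundaries by a matched pair of consecutive elements, but the underlying idea is the same.
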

Here, a \emph{partial matching} is a disjoint union of edges and
  isolated vertices (i.e.\@ a graph of maximum degree~1), while a
\emph{star forest} is a disjoint union of stars (complete bipartite
  graphs $K_{1,n}$ with $n\geq0$). Note that on both these graph
  classes, the model-checking problem for plain \FO{} is fixed-parameter tractable.

\begin{proof}
  For the first part we show how to construct in polynomial time for every
  graph $G$ a $\{E,{<}\}$-structure $\strA$ such that $G$ can be
  \FO-interpreted in $\strA$. For this, let $\{v_1,\ldots,v_n\}$ be the
  vertex set of $G$ ordered in an arbitrary way, and assume that $v_i$ has
  degree $d_i$ in $G$. To each vertex in $G$ we associate an interval of
  length ${\hat d}_i\coloneqq\max\{d_i,1\}$ in $\strA$, and separate
  the intervals by gaps of length $2$. Thus with
  \[D_k\coloneqq 2k-1+\sum_{i=1}^{k-1}{\hat d}_i,\]
  we associate with $v_i$ the interval
  $\{D_i,\ldots,D_i+{\hat d}_i-1\}$. The edge set $E(\strA)$ consists of
  the edges $\{D_i-2,D_i-1\}$ for $i\geq2$, together with the edges
  $\{D_i+k,D_j+\ell\}$ if $v_iv_j$ is an edge of $G$, $v_j$ is the $k$-th
  neighbour of $v_i$ in the ordering, and $v_i$ is the $\ell$-th neighbour of $v_i$. Notice
  that the edges $\{D_i-2,D_i-1\}$ are the only edges between consecutive
  elements, so they can be used to determine the intervals used in this
  construction.

  \begin{figure}[htb]
    \begin{center}
      %\documentclass{article}
%\usepackage{pgf}
%\usepackage{tikz}
%\usepackage{amsmath, amssymb}
%\usetikzlibrary{arrows.meta}
%\usetikzlibrary{arrows}
%\usetikzlibrary{calc}
%\usetikzlibrary{shapes}
%\usepackage[utf8]{inputenc}
%\begin{document}
    \begin{tikzpicture}[scale=0.3,auto]
        \tikzstyle{every path}=[thick]
        \tikzstyle{every node}=[circle,fill=white,text=black,thin]
        \tikzset{quadratic bezier/.style={ to path={(\tikztostart) .. controls($#1!1/3!(\tikztostart)$)
        and ($#1!1/3!(\tikztotarget)$).. (\tikztotarget)}}}

        \node  at (18cm,-16.2cm) {\textcolor{gray}{\scriptsize successor relation}};        
        \node[label={[label distance=-12pt]$v_1$}]  (n0) at (-2.0691307803897936cm,-11.769245473472093cm) {$\bullet$};
        
        \node[label={[label distance=-12pt]$v_2$}]  (n1) at (4.994188199305022cm,-11.769245473472093cm) {$\bullet$};
        
        \node  (n2) at (7.210863202033742cm,-11.769245473472093cm) {$\bullet$};
        
        \node[label={[label distance=-12pt]$v_3$}]  (n3) at (11.973460164741198cm,-11.769245473472093cm) {$\bullet$};
        
        \node[inner sep=0pt]  (n4) at (-4.492627531059064cm,-11.795703806805426cm) {$\bullet$};
        
        \node  (n5) at (0.36155279667794105cm,-11.769245473472093cm) {$\bullet$};
        
        \node  (n6) at (2.5817722686654747cm,-11.769245473472093cm) {$\bullet$};
        
        \node[label={[label distance=-12pt]$v_7$}]  (n7) at (19.053769581098052cm,-11.769245473472093cm) {$\bullet$};
        
        \node  (n8) at (9.564342135128301cm,-11.769245473472093cm) {$\bullet$};
        
        \node[inner sep=0pt]  (n9) at (21.33888369308661cm,-11.769245473472093cm) {$\bullet$};
        
        \node  (n10) at (14.250244219610188cm,-11.769245473472093cm) {$\bullet$};
        
        \node  (n11) at (16.952706454167114cm,-11.769245473472093cm) {$\bullet$};
        
        \node  (n12) at (11.973460164741198cm,-15.018203806805426cm) {$\bullet$};
        
        \node  (n13) at (0.36155279667794105cm,-15.018203806805426cm) {$\bullet$};
        
        \node[inner sep=0pt]  (n14) at (-4.492627531059064cm,-15.044662140138758cm) {$\bullet$};
        
        \node  (n15) at (-2.0691307803897936cm,-15.044662140138758cm) {$\bullet$};
        
        \node  (n16) at (7.210863202033742cm,-15.018203806805426cm) {$\bullet$};
        
        \node  (n17) at (14.250244219610188cm,-15.018203806805426cm) {$\bullet$};
        
        \node  (n18) at (16.952706454167114cm,-15.018203806805426cm) {$\bullet$};
        
        \node  (n19) at (2.5817722686654747cm,-15.018203806805426cm) {$\bullet$};
        
        \node  (n20) at (9.564342135128301cm,-15.018203806805426cm) {$\bullet$};
        
        \node[inner sep=0pt]  (n21) at (21.33888369308661cm,-15.018203806805426cm) {$\bullet$};
        
        \node  (n22) at (4.994188199305022cm,-15.018203806805426cm) {$\bullet$};
        
        \node  (n23) at (19.053769581098052cm,-15.018203806805426cm) {$\bullet$};
        
        \node[label={[label distance=-7pt]$v_1$}]  (n24) at (-16.505566661790134cm,-10.69534743635097cm) {$\bullet$};
        
        \node[label={[label distance=-7pt]$v_2$}]  (n25) at (-13.68862447942709cm,-7.744265150065875cm) {$\bullet$};
        
        \node[label={[label distance=-7pt,below]$v_3$}]  (n26) at (-13.68862447942709cm,-13.2104743848894cm) {$\bullet$};
        
        \node[label={[label distance=-7pt]$v_4$}]  (n27) at (-10.704007167161485cm,-10.69534743635097cm) {$\bullet$};
        
        \node[inner sep=0pt]  (n28) at (-4.492627531059064cm,-7.498474649425596cm) {$\bullet$};
        
        \node[inner sep=0pt]  (n29) at (-2.0691307803897936cm,-7.498474649425596cm) {$\bullet$};
        
        \node  (n30) at (0.36155279667794105cm,-7.498474649425596cm) {$\bullet$};
        
        \node  (n31) at (2.5817722686654747cm,-7.498474649425596cm) {$\bullet$};
        
        \node[inner sep=0pt]  (n32) at (4.994188199305022cm,-7.498474649425596cm) {$\bullet$};
        
        \node  (n33) at (7.210863202033742cm,-7.498474649425596cm) {$\bullet$};
        
        \node[inner sep=0pt]  (n34) at (9.564342135128301cm,-7.498474649425596cm) {$\bullet$};
        
        \node  (n35) at (11.973460164741198cm,-7.498474649425596cm) {$\bullet$};
        
        \node  (n36) at (14.250244219610188cm,-7.498474649425596cm) {$\bullet$};
        
        \node[inner sep=0pt]  (n37) at (16.952706454167114cm,-7.498474649425596cm) {$\bullet$};
        
        \node[inner sep=0pt]  (n38) at (19.053769581098052cm,-7.498474649425596cm) {$\bullet$};
        
        \node  (n39) at (21.33888369308661cm,-7.498474649425596cm) {$\bullet$};
        
        \node  (n40) at (23.690598550876693cm,-7.498474649425596cm) {$\bullet$};
        
        \node[inner sep=0pt]  (n41) at (26.22000641648959cm,-7.498474649425596cm) {$\bullet$};

        \node  (n42) at (-3.1861626058479464cm,-5.296910567502149cm) {$v_1$};

        \node  (n43) at (7.0902849032440205cm,-5.341397353255793cm) {$v_2$};

        \node  (n44) at (18.078520984394302cm,-5.474857710516726cm) {$v_3$};

        \node  (n45) at (26.130629205804027cm,-5.430370924763082cm) {$v_4$};

        \node  (n46) at (-8.763022383889018cm,-9.357527203252673cm) {};
        \node  (n47) at (-6.395281907109553cm,-8.122184345802516cm) {};
        \node  (n48) at (-8.763022383889018cm,-11.769245473472093cm) {};
        \node  (n49) at (-6.395281907109553cm,-12.754720061240597cm) {};
        \node  (n50) at (-4.492627531059064cm,-12.5cm) {};
        \node  (n53) at (21.803990707406722cm,-12.775507503129784cm) {};
        \node  (n54) at (-4.8085550042430825cm,-13.931876055731788cm) {};
        \node  (n55) at (-5.862677687314353cm,-15.018203806805426cm) {};
        \node  (n56) at (-4.8085550042430825cm,-16.094367609849936cm) {};
        \node  (n57) at (21.803990707406722cm,-16.014367498440926cm) {};
	\node[draw,dotted,rectangle,rounded corners,fill=none,fit=(n28)(n29)]{};
	\node[draw,dotted,rectangle,rounded corners,fill=none,fit=(n32)(n34)]{};
	\node[draw,dotted,rectangle,rounded corners,fill=none,fit=(n37)(n38)]{};
	\node[draw,dotted,rectangle,rounded corners,fill=none,fit=(n41)]{};

        \draw (n9.center) to (n21.center);
        \draw (n8.center) to (n3.center);
        \draw (n25.center) to (n27.center);
        \draw (n39.center) to (n40.center);
        \coordinate (crtl33)  at (14.398750000000003,-10.588750000000003);
        \draw (n33.center) .. controls($(crtl33)!1/3!(n33)$) and ($(crtl33)!1/3!(n38)$).. (n38.center);
        \draw[dashed,gray,->] (n46.center) to (n47.center);
        \coordinate (crtl28)  at (0.4022916666666658,-10.535833333333336);
        \draw (n28.center) .. controls($(crtl28)!1/3!(n28)$) and ($(crtl28)!1/3!(n32)$).. (n32.center);
        \draw (n3.center) to (n16.center);
        \draw (n3.center) to (n10.center);
        \draw[dotted,gray,->,thin] (n4.south) to (n9.south);
        \coordinate (crtl34)  at (18.235208333333333,-1.4077083333333338);
        \draw (n34.center) .. controls($(crtl34)!1/3!(n34)$) and ($(crtl34)!1/3!(n41)$).. (n41.center);
        \coordinate (crtl29)  at (7.0962499999999995,-1.8045833333333352);
        \draw (n29.center) .. controls($(crtl29)!1/3!(n29)$) and ($(crtl29)!1/3!(n37)$).. (n37.center);
        \draw (n1.center) to (n18.center);
        \draw (n1.center) to (n13.center);
        \draw[dotted,gray,->,thin] (n9.south) to (n14.north);
        \draw[dashed,gray,->] (n48.center) to (n49.center);
        \draw (n3.center) to (n17.center);
        \draw (n35.center) to (n36.center);
        \draw (n25.center) to (n26.center);
        \draw (n1.center) to (n20.center);
        \draw (n24.center) to (n25.center);
        \draw (n26.center) to (n24.center);
        \draw (n7.center) to (n9.center);
        \draw (n0.center) to (n14.center);
        \draw (n6.center) to (n1.center);
        \draw (n0.center) to (n19.center);
        \draw[dotted,gray,->,thin] (n14.south) to (n21.south);
        \draw (n4.center) to (n0.center);
        \draw (n30.center) to (n31.center);
        \draw (n1.center) to (n2.center);
        \draw (n0.center) to (n5.center);
        \draw (n11.center) to (n7.center);
    \end{tikzpicture}
%\end{document}
    \end{center}
    \caption{A sample graph (left) encoded in a linear order plus perfect
      matching (upper right) and a star forest plus successor relation
      (lower right).}
    \label{fig:foo}
  \end{figure}
  
  For the second part we construct a structure $\strA'$ consisting of a
  disjoint union of stars and a successor relation that can be used to
  recover the original graph using an \FO interpretation. Again, we assume
  the vertex set of $G$ to be $\{v_1,\ldots,v_n\}$. A vertex $v$ is
  encoded by a path $v^{-1}, v, v^{+1}$. The vertices of these 
  paths are placed at   the beginning of the successor relation in an arbitrary order. An edge $e$ is encoded by 
  three vertices $e^{-1}, e, e^{+1}$ such that $e$ is a direct
  successor of $e^{-1}$ and $e^{+1}$ is a direct successor of $e$. All
  these vertices are placed at the end of the successor relation. 
  For every edge $e = vw$, assume that $v$ is smaller than $w$ in 
  the successor relation. We connect, in $\strA'$,
  $v$ to $e^{-1}$ and $w$ to $e^{+1}$. Again, $G$ may be 
  recovered from $\strA'$ using an \FO  interpretation.
\end{proof}

As a corollary of the previous theorem we get that $\MC(\FOs,\mathscr{P})$ and
$\MC(\FOo,\mathscr{T})$ are \AWs-hard for the class $\mathscr{P}$ of planar graphs and the class $\mathscr{T}$ of graphs of treewidth $1$ (forests). However, for $\MC(\FOs,\CCC)$ to be \AWs-hard
it is essential that the graphs in the class $\mathscr{T}$ have unbounded degree.
Indeed, on graph classes of bounded degree, successor-invariant \FO
model-checking is tractable.

\begin{theorem}
  For every $d\geq0$ let $\Cdegd$ be the class of graphs of maximum degree
  at most $d$. Then for all $d\geq0$, $\MC(\FOs,\Cdegd)$ is fixed-parameter
  tractable. In fact, we can allow any (fixed) number of successor
  relations on top of $\DDD_d$ and still have tractable first-order
  model-checking.
\end{theorem}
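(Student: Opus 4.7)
The plan is very short: the key observation is that adding a successor relation to a bounded-degree graph keeps it bounded-degree, so the result follows from Seese's classical theorem on FO model-checking for bounded-degree structures.

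First I would note that if $G \in \DDD_d$ and $S \subseteq V(G) \times V(G)$ is a successor relation on $V(G)$, then in the structure $(G,S)$ every vertex has at most $d$ neighbours via $E$ plus at most one successor and one predecessor via $S$. Hence the Gaifman graph of $(G,S)$ has maximum degree at most $d+2$. More generally, for any fixed $k$ and any $k$ successor relations $S_1,\ldots,S_k$ on $V(G)$, the Gaifman graph of $(G,S_1,\ldots,S_k)$ has maximum degree at most $d+2k$, which is a constant depending only on $d$ and $k$.

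Next I would invoke Seese's theorem~\cite{Seese96}, which states that the model-checking problem for first-order logic is fixed-parameter tractable on any class of finite relational structures (over a fixed signature) whose Gaifman graphs have bounded degree. Since the signature $\{E,S_1,\ldots,S_k\}$ is fixed and the family of structures $\{(G,S_1,\ldots,S_k) : G \in \DDD_d,\ S_i \text{ a successor relation on } V(G)\}$ has Gaifman graphs of maximum degree at most $d+2k$, Seese's theorem applies directly: any formula $\phi \in \FO(\{E,S_1,\ldots,S_k\})$ can be checked on $(G,S_1,\ldots,S_k)$ in time $f(|\phi|)\cdot |V(G)|$ for some computable function~$f$, which is the desired FPT bound.

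There is no real obstacle here; the only thing worth remarking is that Seese's theorem is usually stated for graphs, but it extends to arbitrary relational structures of bounded Gaifman degree over a fixed signature (by locality arguments, or by interpreting the structure into a bounded-degree coloured graph). Since the theorem we are proving makes no invariance requirement on the formula, the successor relations are simply part of the input structure, and we need not argue anything about the choice of the~$S_i$.
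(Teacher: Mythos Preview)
Your proposal is correct and follows exactly the same approach as the paper: observe that each successor relation raises the Gaifman degree by at most two, so the resulting structures still have bounded degree, and then apply Seese's theorem. The paper's own proof is even terser than yours but makes precisely these two points.
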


\begin{proof}
  By a result of Seese \cite{Seese96} the model-checking problem for \FO on
  graphs of bounded degree and also on all structures with Gaifman-graph of
  bounded degree is fixed-parameter tractable. Adding a successor relation
  increases the degree of the Gaifman-graph of a structure by at most two.
\end{proof}

\section{Order-Invariant MSO}
\label{sec:order-inv}

In this section we consider order-invariant logics. The most expressive
logic studied in the context of algorithmic meta-theorems is monadic
second-order logic, the extension of first-order logic by quantification
over sets of elements. With respect to graphs, there are two variants of
\MSO usually considered, one, called $\MSO_1$, where we can quantify over
sets of vertices, and the other, called $\MSO_2$, where we can additionally
quantify over sets of edges. It was shown by Courcelle~\cite{Courcelle90}
that $\MSO_2$ is fixed-parameter tractable on every class of graphs of
bounded treewidth. Later, Courcelle et al.~\cite{CourcelleMakRot00} showed
that $\MSO_1$ is fixed-parameter tractable on every class of graphs of
bounded clique-width, a concept more general than bounded treewidth. In this
section, we show that for both logics we can allow order-invariance without
increase in complexity.

We first consider the case of $\MSO_2$. As shown in Theorem 5.1.1.\ of~\cite{quiroz2017chromatic}, for every graph $G$ of treewidth
$k$ there is a successor relation $S$ on $V(G)$ such that the graph
obtained from $G$ by adding the edges in~$S$ has treewidth at most $k+2$. From the proof one can easily derive an algorithm that 
takes as input a a graph $G$ and a tree decomposition of $G$ 
of width $k$ and outputs a successor relation as desired in polynomial
time. We can use the algorithm of 
Bodlaender~\cite{bodlaender1996linear} to compute an 
optimal tree decomposition in time 
$2^{\mathcal{O}(k^3)}\cdot n$ first, and then 
compute the desired successor relation. We also refer to 
\cite{Makowsky05} and \cite{ChenF12} for proofs that 
for every graph $G$ of treewidth~$k$ there is a successor relation 
$S$ on $V(G)$ such that the graph
obtained from $G$ by adding the edges in $S$ has treewidth at most $k+5$.
In combination with Courcelle's theorem, this
implies the following result.

\begin{theorem}\label{theo:mso2-oi}
  For any class $\CCC$ of bounded treewidth, $\MC(\MSOtwooi,\CCC)$ is
    fixed-parameter tractable.
\end{theorem}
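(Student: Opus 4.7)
The plan is to reduce the order-invariant model-checking problem on $\CCC$ to the plain $\MSO_2$ model-checking problem on a slightly enlarged graph that still has bounded treewidth, and then invoke Courcelle's theorem. The ingredients are essentially already assembled in the paragraph preceding the theorem.

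First I would take the input graph $G\in\CCC$ and compute, using Bodlaender's algorithm, a tree decomposition of width $k=\tw(G)$ in time $2^{\Oh(k^3)}\cdot n$. Then, invoking the result of Quiroz (alternatively Makowsky or Chen--Flum), I would compute from this tree decomposition a successor relation $S$ on $V(G)$ together with a tree decomposition of the edge-coloured graph $G^+$ obtained from $G$ by adding the edges of $S$ (marked by a fresh unary predicate on edges distinguishing them from the edges of $G$). By the cited result, the width of this decomposition is at most $k+2$, and the construction is polynomial-time. Thus $\{G^+\mid G\in\CCC\}$ is still a class of bounded treewidth, equipped with tree decompositions of bounded width.

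Next I would translate the given order-invariant formula $\phi\in\MSO_2(\{E,<\})$ into an $\MSO_2$-formula $\widehat{\phi}$ over the signature of $G^+$. The key observation is that a linear order on $V(G)$ is $\MSO$-definable from the successor relation $S$: in $\MSO_2$ one writes $x<y$ as "every vertex set that contains $x$ and is closed under taking $S$-successors also contains $y$, and $x\neq y$". Moreover, the edge relation of $G$ itself is definable in $G^+$ by requiring the edge not to carry the distinguishing predicate that marks edges of $S$. Replacing every atomic formula of $\phi$ by its definition in the signature of $G^+$ yields an $\MSO_2$-formula $\widehat{\phi}$ of size $\Oh(|\phi|)$ such that $G^+\models\widehat{\phi}$ iff $(G,<_S)\models\phi$, where $<_S$ is the order induced by $S$. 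By order-invariance of $\phi$, this is equivalent to $G\models_{\mathrm{ord-inv}}\phi$.

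Finally I would invoke Courcelle's theorem to decide $G^+\models\widehat{\phi}$ in time $f(|\widehat{\phi}|,k)\cdot|G^+|$ for some computable $f$, using the precomputed tree decomposition of $G^+$ of width at most $k+2$. Composing the running times gives the desired fixed-parameter tractability bound. The only nontrivial step is the citation of the treewidth-preserving successor-augmentation result; everything else is a straightforward syntactic translation plus a black-box application of Courcelle's theorem, so I do not expect any obstacles beyond verifying that the cited construction is indeed algorithmic and produces the augmented tree decomposition along with $S$.
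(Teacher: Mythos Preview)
Your proposal is correct and follows essentially the same approach as the paper: compute a tree decomposition via Bodlaender, augment $G$ with a successor relation $S$ that increases the treewidth by only an additive constant (via the Quiroz / Makowsky / Chen--Flum result), and then apply Courcelle's theorem. The paper simply states that the combination of these ingredients yields the theorem; your explicit description of how to $\MSO$-define the linear order from the successor relation is a routine detail the paper leaves implicit.
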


In fact, $\MC(\MSOtwooi)$ is fixed-parameter tractable with parameter
$|\phi|+\tw(G)$, where $\tw(G)$ is the treewidth of a graph $G$. We prove next that also for $\MSO_1$ and clique-width we can
allow order-invariance without loss of tractability.

\begin{theorem}\label{theo:mso-oi}
  For any class $\CCC$ of graphs of bounded clique-width,
    $\MC(\MSOoneoi,\CCC)$ is fixed-parameter tractable.
\end{theorem}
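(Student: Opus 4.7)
The plan is to parallel the strategy used for Theorem~\ref{theo:mso2-oi}, adapted to the setting of clique-width rather than treewidth. Concretely, I would reduce $\MC(\MSOoneoi,\CCC)$ to plain $\MSO_1$ model-checking on an appropriate class $\CCC'$ of $\{E,S\}$-structures of bounded clique-width, obtained by augmenting each $G\in\CCC$ with a suitable successor relation $S$ on $V(G)$. The tractability of the latter then follows from the Courcelle--Makowsky--Rotics theorem~\cite{CourcelleMakRot00} (applied to multi-relational structures, combined with Oum's FPT approximation algorithm for clique-width expressions).

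The first ingredient is that the linear order induced by a successor relation $S$ is definable in $\MSO_1$ over the signature $\{E,S\}$: one sets $u\leq v$ iff every vertex set containing $u$ that is closed under $S$-successors also contains $v$. Hence for every order-invariant sentence $\phi\in\MSO_1(\{E,<\})$ we can effectively compute a sentence $\tilde\phi\in\MSO_1(\{E,S\})$ such that for every graph $G$ and every successor relation $S$ on $V(G)$, $G\models_{\mathrm{ord\text{-}inv}}\phi$ iff $(G,S)\models\tilde\phi$. By order-invariance of $\phi$, the truth of $\tilde\phi$ on $(G,S)$ does not depend on the choice of $S$.

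The main technical step is the following lemma: for every graph $G$ of clique-width at most $k$, there exists a successor relation $S$ on $V(G)$ such that the edge-$2$-coloured structure $(G,S)$ has clique-width at most $f(k)$, for some function~$f$; moreover, a clique-width expression witnessing this can be computed in FPT time from a clique-width expression of $G$. I would prove this by induction on a given $k$-expression for $G$, maintaining the invariant that every subexpression carries two distinguished labels $\min$ and $\max$, each occupied by exactly one vertex, such that the $S$-edges added so far form a directed Hamiltonian path from the $\min$-vertex to the $\max$-vertex of the associated subgraph. Leaves trivially satisfy the invariant; relabel and edge operations of the original expression are inherited, extended to act trivially on the marker labels. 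The critical case is disjoint union: in $\psi_1\oplus\psi_2$ we first disjointify the label alphabets of the two operands (via two parallel copies of the original alphabet), perform the union, then add a single $S$-edge between the uniquely labelled $\max$-vertex of $\psi_1$ and the $\min$-vertex of $\psi_2$, and finally merge labels back and re-declare the $\min$ of $\psi_1$ and the $\max$ of $\psi_2$ as the new markers. This at most multiplies the number of labels by a small constant, yielding $f(k)=O(k)$.

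The main obstacle is precisely this disjoint-union step: because clique-width operations act uniformly on entire label classes, adding a single $S$-edge between two designated vertices requires that at that moment exactly those two vertices carry the relevant labels. This is what forces the marker-maintenance invariant and the alphabet-doubling trick on the two operands. Given the lemma, the theorem follows in a standard way: for an input $(G,\phi)$, compute (approximately) a clique-width expression for $G$ in FPT time via Oum's algorithm, apply the lemma to obtain an expression of width $f(k)$ for $(G,S)$, translate $\phi$ into $\tilde\phi$, and invoke the algorithm of~\cite{CourcelleMakRot00} on $(G,S)$ and $\tilde\phi$.
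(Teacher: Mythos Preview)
Your high-level strategy coincides with the paper's: augment a given $k$-expression for $G$ into an $O(k)$-expression that also builds an ordering on $V(G)$, then invoke Courcelle--Makowsky--Rotics. The paper, however, constructs the linear order $<$ \emph{directly} rather than going through a successor. At each $\oplus$-node it simply declares every vertex of the left operand to be $<$-smaller than every vertex of the right operand; such complete-bipartite edge sets between two colour classes are exactly what clique-width edge-operations produce, so only product labels $(i,a)/(i,b)$ (recording ``original colour'' and ``left/right operand'') are needed, doubling the width. No singleton markers and no MSO translation from $S$ to $<$ are required.

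As written, your construction has a gap. If $\min$ and $\max$ are standalone labels on which the inherited relabel/edge operations ``act trivially'', then a vertex currently carrying $\min$ (or $\max$) has \emph{lost} its original $E$-colour. When at a later $\oplus$ this vertex ceases to be an endpoint of the partial Hamiltonian path, you have no way to know which of the $k$ colours it should re-enter, and all subsequent $\mathrm{edge}_{E,i\to j}$ operations of the original expression will miss it; hence you do not rebuild $G$. (Already at a leaf the problem shows: the single vertex must simultaneously be $\min$, $\max$, and carry its $E$-colour.) The fix is exactly the device the paper uses: take product labels, here $\{1,\dots,k\}\times\{\min,\max,\text{both},\text{mid}\}$, so that the marker is an additional coordinate rather than a replacement; inherited operations then expand over all marker values instead of ``acting trivially''. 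With this correction your invariant can be maintained, the single $S$-edge at each $\oplus$ is added via the family $\mathrm{edge}_{S,(i,\max)\to(j,\min)}$ over all $i,j$ (exactly one such pair of vertices exists), and $f(k)=O(k)$ holds as claimed. So your route is salvageable, but the paper's choice of encoding $<$ rather than $S$ sidesteps the whole marker-maintenance issue.
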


We first review the definition of clique-width. For the rest of this section
we fix a relational signature $\sigma$ in which every relation symbol has
arity at most $2$.

\begin{definition}[$\sigma$-clique-expression of width $k$]
  Let $k\in\N$ be fixed. A $\sigma$-clique-expression of width~$k$ is a
  pair $(T,\lambda)$, where $T$ is a directed rooted tree in which
    all edges are directed away from the root and
  \begin{align*}
    \lambda:V(T)\to{} &\{\mathbf{1},\dots,\mathbf{k}\}\cup
    \{\oplus\}\\
    &\cup \{\textup{edge}_{R,i\to j}\mid R\in\sigma,\:
    i,j=1,\ldots,k\}\cup \{\textup{rename}_{i\to j}\mid i,j=1,\ldots,k\},
  \end{align*}
  such that for all $t\in V(T)$ we have: if
  $\lambda(t)\in\{\mathbf{1},\dots,\mathbf{k}\}$, then $t$ is a leaf of
  $T$; if $\lambda(t)=\oplus$, then $t$ has exactly two successors; and in
  all other cases $t$ has exactly one successor.
\end{definition}  

\pagebreak
\begin{definition}
  Let $(T,\lambda)$ be a $\sigma$-clique-expression of width $k$. With
  every $t\in V(T)$ we associate a $\sigma$-structure $G(t)$ in which
  vertices are coloured by colours $\mathbf{1},\dots,\mathbf{k}$ as
  follows.
  \begin{itemize}
  \item If $t$ is a leaf, then $G(t)$ consists of one element coloured by
    $\lambda(t)$.
  \item If $\lambda(t)=\oplus$ and $t$ has successors $t_1,t_2$, then
    $G(t)$ is the disjoint union $G(t_1)\mathop{\dot\cup}G(t_2)$.
  \item If $\lambda(t)=\textup{edge}_{R,i\to j}$ and $t_1$ is the successor
    of $t$, then $G(t)$ is the structure obtained from $G(t_1)$ by adding
    to the relation $R(G(t))$ all pairs $(u,v)$ such that $u$ has colour
    $i$ and $v$ has colour $j$.
  \item If $\lambda(t)=\textup{rename}_{i\to j}$ and $t_1$ is the successor
    of $t$, then $G(t)$ is the structure obtained from~$G(t_1)$ by changing
    the colour of all vertices $v$ which have colour $i$ in $G(t_1)$ to
    colour $j$ in~$G(t)$.
  \end{itemize}
  The \emph{$\sigma$-structure generated by $(T, \lambda)$} is the
  structure $G(r)$, where $r$ is the root of $T$, from which we remove
  all colours $\{ \mathbf{1}, \dots, \mathbf{k}\}$.  Finally, the
  \emph{clique-width} of a $\sigma$-structure $G$ is the minimal width
  of a clique-expression generating $G$.
\end{definition}

Combining results from \cite{HlinenyO08} and \cite{OumS06} yields the
following well-known result.

\begin{theorem}\label{theo:comp-clique}
  There is a computable function $f:\N\to\N$ and an algorithm which, given
  a graph~$G$ of clique-width at most $k$ as input, computes a
  clique-expression of width at most $2^{k+1}$ in time $f(k)\cdot|G|^3$.
\end{theorem}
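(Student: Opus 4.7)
The plan is to combine two established results, following the strategy suggested by the cited references. The intermediate object is a rank-decomposition of $G$. The key parameter relation is the standard inequality $\mathrm{rw}(G) \leq \mathrm{cw}(G)$ between rank-width and clique-width; in particular, if $\mathrm{cw}(G) \leq k$ then $\mathrm{rw}(G) \leq k$ as well, so we may use rank-width as a proxy for the property we actually have in hand.

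First, I would invoke the approximation algorithm of \cite{OumS06}: given $G$ and a parameter $k$, it runs in time $f_1(k) \cdot |G|^3$ and either certifies that $\mathrm{rw}(G) > k$, or returns a rank-decomposition of $G$ of width bounded by a computable function of $k$ (and, with the refinements used in \cite{HlinenyO08}, of width at most $k$ itself). Since our input satisfies $\mathrm{cw}(G) \leq k$ and hence $\mathrm{rw}(G) \leq k$, this call cannot fail, and we obtain a rank-decomposition $(T',\mu)$ of $G$ of width $O(k)$.

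Next, I would apply the conversion procedure of \cite{HlinenyO08}, which turns a rank-decomposition of width $r$ into a $\sigma$-clique-expression of width at most $2^{r+1}-1$ in time polynomial in $|G|$ (after $f_2(k)$ preprocessing). The guiding principle is as follows: every edge of $T'$ induces a cut of $V(G)$ whose adjacency matrix over $\mathrm{GF}(2)$ has rank at most $r$, and therefore the vertices on one side of the cut split into at most $2^r$ equivalence classes according to their adjacency pattern to the other side. These classes play the role of colours in the clique-expression. A bottom-up traversal of $T'$ then assembles the expression: leaves are handled by the singleton constructors $\mathbf{i}$, two subtrees are merged by $\oplus$, edges across the current cut are introduced by $\mathrm{edge}_{R,i\to j}$, and labels are consolidated between levels by $\mathrm{rename}_{i\to j}$. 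Tracking the number of classes that can ever appear simultaneously along this traversal yields an expression of width at most $2^{k+1}$, and summing the two running times gives the announced $f(k)\cdot|G|^3$ bound.

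The main technical obstacle is the bookkeeping inside the conversion step: one must show that the equivalence classes on each side of a cut can be computed efficiently from the rank-decomposition and, more delicately, that they can be relabelled coherently when one moves from a child to its parent, so that the local operations of the clique-expression faithfully reproduce the entire edge set of $G$ without the palette ever exceeding $2^{k+1}$. The linear-algebraic structure of rank-decompositions over $\mathrm{GF}(2)$ is precisely what constrains this relabelling and drives the doubling $r \mapsto 2^{r+1}$; once the matching between adjacency-equivalence classes across a refinement of the cut is made explicit, the construction and its width bound follow.
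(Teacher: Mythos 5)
Your proposal is correct and is essentially the argument the paper intends: Theorem~\ref{theo:comp-clique} is stated as the standard combination of the exact $f(k)\cdot n^3$ rank-decomposition algorithm with the conversion of a rank-decomposition of width $r$ into a clique-expression of width at most $2^{r+1}-1$, using $\mathrm{rw}(G)\le\mathrm{cw}(G)$. The only quibble is attribution: the cubic-time exact rank-width algorithm is due to Hlin\v{e}n\'y and Oum \cite{HlinenyO08}, while the cut-rank-to-colour-classes conversion you describe is from Oum and Seymour \cite{OumS06}, not the other way around.
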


Combining this theorem with results of 
 Courcelle et al.~\cite{CourcelleMakRot00} 
yields the following (also well-known) result.
 
\begin{theorem}\label{theo:mso-clique}
  For any class $\CCC$ of graphs of bounded clique-width,
    $\MC(\MSO_1,\CCC)$ is fixed-parameter tractable.
\end{theorem}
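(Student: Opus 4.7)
The plan is to combine \cref{theo:comp-clique} with a bottom-up dynamic programming on a clique-expression, where the table entries are $\MSO_1$-types of bounded rank on vertex-coloured structures with at most $k'=2^{k+1}$ colours. First, given $G\in\CCC$ of clique-width at most $k$, invoke \cref{theo:comp-clique} to compute, in time $f(k)\cdot|G|^3$, a clique-expression $(T,\lambda)$ of width at most $k'$ generating $G$. It then suffices to solve $\MSO_1$ model-checking given the clique-expression as input, with parameter $|\phi|+k$.

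Let $q$ be the quantifier rank of the input sentence $\phi$. For any coloured $\sigma$-structure $H$ whose vertices are labelled by colours in $\{\mathbf{1},\dots,\mathbf{k'}\}$, let $\tau_q(H)$ denote its $\MSO_1$-type of rank $q$ in the signature enriched by $k'$ unary predicates for the colour classes. Since both the signature and the quantifier rank are bounded by a function of $|\phi|$ and $k$, the set $\mathcal{T}$ of possible types is finite, of size bounded by some $h(|\phi|,k)$, and $\tau_q(H)$ determines whether $H\models\phi$ (after forgetting the colours at the root).

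The heart of the argument is a composition lemma for $\MSO_1$-types along the operations of a clique-expression. Concretely, I would show, via a standard Ehrenfeucht--Fra\"\i{}ss\'e game argument in the spirit of Feferman--Vaught, that for every node $t\in V(T)$ the type $\tau_q(G(t))$ is determined by $\lambda(t)$ together with the types of the children of $t$. The three cases are routine: for $\oplus$-nodes, the disjoint-union composition theorem for $\MSO$ gives $\tau_q(G(t))$ from $\tau_q(G(t_1))$ and $\tau_q(G(t_2))$; for $\textup{rename}_{i\to j}$, the type is obtained by a definable relabelling of colour predicates (each colour-$i$ atom is replaced by colour-$j$), which is a purely syntactic transformation; and for $\textup{edge}_{R,i\to j}$, the new relation $R$ is $\MSO_1$-definable from the old structure by the formula $R(x,y)\leftrightarrow \bigl(R_{\mathrm{old}}(x,y)\vee (C_i(x)\wedge C_j(y))\bigr)$, so a syntactic substitution within formulas of rank $q$ translates $\tau_q(G(t))$ to $\tau_q(G(t_1))$. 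This gives, for each operation symbol, a computable function $\mathcal{T}\times\mathcal{T}\to\mathcal{T}$ or $\mathcal{T}\to\mathcal{T}$ that propagates types upward.

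With the composition lemma in hand, we process $T$ from the leaves to the root, storing $\tau_q(G(t))$ at each node. Leaves are handled in $O(1)$, internal nodes in time bounded by a function of $|\phi|$ and $k$, and the total work is $g(|\phi|,k)\cdot|T|=g(|\phi|,k)\cdot O(|G|)$. Combined with the cubic-time clique-expression computation, this yields an overall runtime of $F(|\phi|,k)\cdot|G|^3$, which is fixed-parameter tractable with parameter $|\phi|$ on any class of bounded clique-width. I expect the main technical obstacle to be a clean statement and proof of the composition lemma, particularly ensuring that the $\textup{edge}$-operation is correctly handled at the level of types; once this is established, the rest is a straightforward automaton-style evaluation on $T$.
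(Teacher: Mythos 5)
Your proposal is correct and follows essentially the same route as the paper: compute a clique-expression of width at most $2^{k+1}$ via \cref{theo:comp-clique} and then evaluate the $\MSO_1$ sentence by a bottom-up pass over the expression. The only difference is that the paper simply cites Courcelle et al.~\cite{CourcelleMakRot00} for this second step, whereas you sketch its standard proof (type composition for $\oplus$, rename and edge operations), which is exactly the argument underlying that cited result.
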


In fact, the result applies to any $\sigma$-structure of bounded
clique-width, provided that a clique-expression generating the structure (whose width is bounded by a computable function of the clique-width of the structure)
is given or computable in polynomial time.

\medskip
The next lemma is the main technical ingredient for the proof of
  Theorem~\ref{theo:mso-oi} above.

\begin{lemma}\label{lem:order-clique}
  There is an algorithm which, on input a graph $G$ of clique-width at most
  $k$, computes a linear order $<$ on $V(G)$ and a clique-expression of
  width at most $2^{k+2}$ generating the structure~$(G, <)$.
\end{lemma}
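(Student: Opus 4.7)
The plan is to first invoke \Cref{theo:comp-clique} to obtain, in time $f(k)\cdot|G|^3$, a clique-expression $(T,\lambda)$ of width $K\le 2^{k+1}$ generating $G$, and then enhance it by doubling the palette into a clique-expression of width at most $2K\le 2^{k+2}$ that additionally encodes a total order on $V(G)$.

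The order I would generate is the natural one coming from $T$. Each vertex of $G$ corresponds bijectively to the unique leaf of $T$ at which it is introduced, and I order the leaves so that at every $\oplus$-node of $T$ with children $t_1,t_2$, all leaf descendants of $t_1$ precede all leaf descendants of $t_2$; an easy structural induction shows that this is a total order on $V(G)$.

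Next I would build the enhanced expression $(T',\lambda')$ over the doubled palette $\{1,\dots,K\}\cup\{\bar 1,\dots,\bar K\}$ by structural induction on $T$. For each $t\in V(T)$, the constructed subexpression $\tau(t)$ will generate the structure $G(t)$ enriched by the restriction of $<$ to $V(G(t))$, and its output coloring will agree with that produced by $T$ at $t$, so that only unprimed colors appear at the ``interface'' of $\tau(t)$. Leaves and unary operations of $T$ are copied verbatim. For a node $t$ with $\lambda(t)=\oplus$ and children $t_1,t_2$, I replace the single $\oplus$-step by the following block: emit $\tau(t_1)$; apply $\textup{rename}_{i\to \bar i}$ for every $i\in\{1,\dots,K\}$, so that the vertices of $G(t_1)$ now carry only primed colors while $\tau(t_2)$ will, by the induction hypothesis, still use only unprimed colors; take the $\oplus$ with $\tau(t_2)$; apply $\textup{edge}_{<,\bar i\to j}$ for all $i,j\in\{1,\dots,K\}$, which adds a single $<$-edge from every vertex of $G(t_1)$ to every vertex of $G(t_2)$; and finally apply $\textup{rename}_{\bar i\to i}$ for every $i$ to restore the coloring expected by enclosing operations inherited from $T$.

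The main obstacle I expect is the careful bookkeeping needed to verify, by structural induction, that this invariant genuinely holds: after every replacement block the color assignment must coincide with the one produced by the corresponding original $\oplus$-step in $T$, so that operations in $T'$ inherited from $T$ act on the vertices exactly as in $(T,\lambda)$. Granted this, correctness of the generated edge relations and of the total order follows directly from the induction, since at the LCA-node of any two vertices $u,v\in V(G)$ an $<$-edge is introduced in exactly the direction prescribed by the order. The width bound is immediate, because primed colors appear only in the transient window between the two blocks of renames surrounding each $\oplus$, so at most $2K\le 2^{k+2}$ colors are ever in simultaneous use. Finally, $|T'|=O(K^2\cdot|T|)$ and the rewriting is straightforward, so the whole procedure runs in polynomial time after the call to \Cref{theo:comp-clique}.
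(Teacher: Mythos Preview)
Your proposal is correct and follows essentially the same approach as the paper: double the colour palette and, at each $\oplus$-node, use the two halves of the palette to separate the left subtree from the right so that all cross-$<$-edges can be inserted. The only difference is bookkeeping---the paper propagates a left/right tag down to the leaves (so leaves are created in the appropriate half and every unary operation is duplicated for both halves), whereas you maintain the cleaner invariant that only unprimed colours appear at the interface of each $\tau(t)$ and do the priming transiently around each $\oplus$; both yield width at most $2K\le 2^{k+2}$.
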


\begin{proof}
  Let $G$ and $k$ be given. Using Theorem~\ref{theo:comp-clique} we first
  compute an $\{E\}$-clique-expression~$(T,\lambda)$ of width at most
  $2^{k+1}$ generating $G$. Let $r$ be the root of $T$. For every node
  $t\in V(T)$ we fix an ordering of its successors. Let $\prec$ be the
  partial order on $V(T)$ induced by this ordering.

  Let $t\in V(T)$ be a node and let $s\ne t$ be the first node on the path
  $P$ from $t$ to $r$ with $\lambda(s)=\oplus$, if it exists. Let $t_1,t_2$
  be the successors of $s$ with $t_1\prec t_2$. We call $t$ a \emph{left
    node} if $t_1\in V(P)$, and a \emph{right node} otherwise. If there is
  no node labelled $\oplus$ strictly above $t$, then we call $t$ a left
  node as well.

  For every $t\in V(T)$ let $T_t$ be the subtree of $T$ with root $t$, and
  let $\lambda_{|T_t}$ be the restriction of $\lambda$ to the subtree
  $T_t$. We recursively define a transformation $\rho(T_t,\lambda_{|T_t})$
  on the subtrees of $T$ defined as follows. Intuitively, we produce a new clique-expression $(T',\lambda')$ over the signature
  $\{E,<\}$ using colours $\{(i,a), (i, b)\mid 1\leq i\leq k\}$.
  Essentially, the new clique-expression will generate the same graph as
  $(T, \lambda)$, but so that if $t$ is a node in $T$ and $T_t$ generates
  the graph $G_t$, then $T'$ contains a node~$t'$ generating an ordered
  version $G'_t \coloneqq (G_t, <)$ of $G_t$ so that if $v\in V(G_t)$ has
  colour $i$, then, in~$G'_t$, $v$ has colour $(i,a)$ if $t$ is a left node
  and $(i,b)$ if $t$ is a right node. Hence, whenever in $T$ we take the
  disjoint union of $G_t$ and $G_s$ and $t\prec s$, then we can define the
  ordering on $G'_t\mathop{\dot\cup}G'_s$ by adding all edges from nodes in
  $G'_t$ to $G'_s$, i.e.\ all edges from vertices coloured $(i,a)$ to
  $(j,b)$ for all pairs $i,j$.

  Formally, the transformation is defined as follows.
  \begin{itemize}
  \item If $t\in V(T)$ is a leaf, then $\rho(t)\coloneqq(T',\lambda')$,
    where $T'$ consists only of $t$ and
    $\lambda'(t)\coloneqq(\lambda(t),a)$ if $t$ is a left node, and
    $\lambda'(t)\coloneqq(\lambda(t),b)$ if $t$ is a right node.
  \item Suppose $\lambda(t)=\textup{rename}_{i\to j}$ and let $s$ be the
    successor of $t$. Then
    $\rho(T_t,\lambda_{|T_t})\coloneqq(T',\lambda')$, where~$T'$ is a tree
    defined as follows. Let $(T'',\lambda'')\coloneqq
    \rho(T_s,\lambda_{|T_s})$ and let $r''$ be the root of $T''$. Then~$T'$
    is obtained from $T''$ by adding a new root $r'$, a new vertex $v$, and
    new edges $(r',v)$ and $(v,r'')$. We define
    $\lambda'(r')\coloneqq\textup{rename}_{(i,a)\to(j,a)}$, 
    $\lambda'(v)\coloneqq\textup{rename}_{(i,b)\to(j,b)}$, and
    $\lambda'(u)\coloneqq\lambda''(u)$ for $u\in V(T'')$.
  \item Suppose $\lambda(t)=\textup{edge}_{E,i\to j}$ and let $s$ be the
    successor of $t$. Then
    $\rho(T_t,\lambda_{|T_t})\coloneqq(T',\lambda')$, where~$T'$ is a tree
    defined as follows. Let
    $(T'',\lambda'')\coloneqq\rho(T_s,\lambda_{|T_s})$ and let $r''$ be the
    root of $T''$. Then~$T'$ is obtained from $T''$ by adding a path
    $(v_1,v_2,v_3,v_4)$ of length $3$ and making $r''$ a successor
    of~$v_4$. We define
    $\lambda'(v_1)\coloneqq\textup{edge}_{E,(i,a)\to(j,a)}$,
    $\lambda'(v_2)\coloneqq\textup{edge}_{E,(i,b)\to(j,a)}$,
    $\lambda'(v_3)\coloneqq\textup{edge}_{E,(i,a)\to(j,b)}$,
    $\lambda'(v_4)\coloneqq\textup{edge}_{E,(i,b)\to(j,b)}$, and
    $\lambda'(u)\coloneqq\lambda''(u)$ for $u\in V(T'')$.
  \item Finally, suppose $\lambda(t)=\oplus$ and let $t_1,t_2$ be the
    successors of $t$ such that $t_1\prec t_2$. Then
    $\rho(T_t,\lambda_{|T_t})\coloneqq(T',\lambda')$, where $T'$ is a tree
    defined as follows. For $i=1,2$, let
    $(T_i,\lambda_i)\coloneqq\rho(T_{t_i},\lambda_{|T_{t_i}})$ and let
    $r_i$ be the root of $T_i$. $T'$ consists of the union of $T_1$
    and $T_2$, and the additional vertices $v_i$,
    $1\leq i\leq k$, $w_{i,j}$, $1\leq i,j\leq k$, and $v_{\oplus}$. We add
    the edges $(v_i,v_{i+1})$ for $1\leq i<k$, the edges
    $(w_{i,j},w_{i,j+1})$ for $1\leq i\leq k$, $1\leq j<k$, and
    the edges $(w_{i,k},w_{i+1,1})$ for $1\leq i<k$, and
    finally the edges $(v_k,w_{1,1})$, $(w_{k,k},v_\oplus)$, and
    $(v_\oplus,r_i)$ for $i=1,2$. For every node $s\in V(T_i)$ we define
    $\lambda'(s)\coloneqq\lambda_i(s)$, $i=1,2$. Furthermore, we define
    $\lambda(v_\oplus)\coloneqq\oplus$ and
    $\lambda'(w_{i,j})\coloneqq\textup{edge}_{<,(i,a)\to(j,b)}$ for
    $1\leq i,j\leq k$. Finally, if $t$ is a left node, then we
    define $\lambda(v_i)\coloneqq\textup{rename}_{(i,b)\to(i,a)}$ for
    $i\leq k$, and if $t$ is a right node, then we define
    $\lambda(v_i)\coloneqq\textup{rename}_{(i,a)\to(i,b)}$ for
      $i\leq k$.
  \end{itemize}

  Now, it is easily seen that $(T',\lambda')$ generates an
  $\{E,<\}$-structure $(V, E,<)$ where $(V, E)$ is the graph generated by
  $(T, \lambda)$ and $<$ is a linear order on $V$. The width of
  $(T',\lambda')$ is twice the width of $(T,\lambda)$, and hence at most
  $2^{k+2}$.
\end{proof}

We are now ready to prove Theorem~\ref{theo:mso-oi}.

\begin{proof}[Proof of Theorem~\ref{theo:mso-oi}] \
  Let $\CCC$ be a class of graphs of clique-width at most $k$. On input
  $G\in\CCC$ and $\phi\in\MSOoneoi$, we apply
  Lemma~\ref{lem:order-clique} to obtain a clique-expression $(T,\lambda)$
  of width $2^{k+2}$ generating an ordered copy $(G,<)$ of $G$ in
    time $f(k)\cdot|G|^3$, for some computable function $f$. We can now
  apply Theorem~\ref{theo:mso-clique} to decide whether $(G,<)\models\phi$
  in time $g(2^{k+2})\cdot p(|G|)$, where $g$ is a computable
  function and $p$ a polynomial. As $\phi$ is order-invariant, if
  $(G,<)\models\phi$, then $(G,<')\models\phi$ for any linear order $<'$ on
  $G$. Hence, if $(G,<)\models \phi$ we can return ``yes'' and otherwise
  reject the input. This concludes the proof.
\end{proof}

It is worth pointing out the following feature of the model-checking
algorithms established in Theorems~\ref{theo:mso2-oi}
and~\ref{theo:mso-oi}. Instead of designing new model-checking
algorithms, we reduce the verification of order-invariant \MSO on classes
of small treewidth or clique-width to the standard model-checking algorithms
for \MSO on classes of (slightly larger) treewidth and clique-width,
respectively. The advantage of this approach is that we can reuse existing
results on \MSO on such classes of graphs. For instance, in
\cite{KneisLR11} the authors report on a practical implementation of
Courcelle's theorem, i.e.\ on the implementation of a model-checker for
$\MSO_2$ on graph classes of bounded treewidth, and obtain astonishing
performance results in practical tests. Our technique allows us to reuse
this implementation so that with minimal effort it is possible to implement
our algorithm on top of the work in \cite{KneisLR11}.

Furthermore, in \cite{FlumFG02} it is shown that on graph classes $\CCC$ of
bounded treewidth, the set of all satisfying assignments of a given \MSO
formula $\phi(X)$ with free variables in a graph $G\in \CCC$ can be
computed in time linear in the size of the output and the size of $G$.
Again we can use the same algorithm to obtain the same result for
order-invariant \MSO.

%%% Local Variables:
%%% mode: latex
%%% TeX-master: "main"
%%% End:

\section{Successor-invariant FO on classes of bounded expansion}
\label{sec:mc-sparse}

Classes of bounded expansion are classes of uniformly sparse graphs that
have very good structural and algorithmic properties. Most notably, these
classes admit efficient first-order model-checking, as shown by
Dvo\v{r}\'ak et al.~\cite{dvovrak2013testing}. In this section we are going
to lift this result to successor-invariant formulas.
Let us give the required definitions first.

\paragraph*{Shallow minors and bounded expansion.}
A graph $H$ is a \emph{minor of $G$}, written $H\minor G$, if there are
pairwise disjoint connected subgraphs $(I_u)_{u\in V(H)}$ of $G$, called
\emph{branch sets}, such that whenever $uv\in E(H)$, then there are
$x_u\in I_u$ and $x_v\in I_v$ with $x_ux_v\in E(G)$. We call the family
$(I_u)_{u\in V(H)}$ a \emph{minor model} of $H$ in~$G$. A graph $H$ is a
\emph{depth-$r$ minor} of~$G$, denoted $H\minor_r G$, if there is a minor
model $(I_u)_{u\in V(H)}$ of~$H$ in $G$ such that each subgraph $I_u$ has
radius at most $r$.
For a graph $G$ and $r\in\N$, we write $\nabla_r(G)$ for the maximum edge
density $|E(H)|/|V(H)|$ of a graph $H\minor_r G$.

\begin{definition}
  A class of graphs $\CCC$ has \emph{bounded expansion} if there is a
  function $f:\N\to\N$ such that $\nabla_r(G)\leq f(r)$ for all
    $r\in\N$ and all $G\in\CCC$.
\end{definition}

Let $\tau$ be a finite and purely relational signature and let $\CCC$ be a
class of $\tau$-structures. We say that~$\CCC$ has \emph{bounded expansion}
if the class $\{G(\strA)\mid \strA\in\CCC\}$ of the Gaifman graphs of the
structures from~$\CCC$ has bounded expansion.

\paragraph*{Generalised colouring numbers.}
We will mainly rely on an alternative characterisation of bounded expansion
classes via \emph{generalised colouring numbers}. Let us fix a graph $G$.
By $\Pi(G)$ we denote the set of all linear orderings of $V(G)$. For
$L\in\Pi(G)$, we write $u<_L v$ if $u$ is smaller than $v$ in $L$, and
$u\le_L v$ if $u<_L v$ or $u=v$.

For $r\in\N$, we say that a vertex $u$ is \emph{strongly
  $r$-reachable} from a vertex~$v$ with respect to~$L$ if $u\le_L v$ and
there is a path $P$ of length at most $r$ that starts in $v$, ends in $u$,
and all whose internal vertices are larger than $v$ in $L$. By
$\SReach_r[G,L,v]$ we denote the set of vertices that are strongly
$r$-reachable from~$v$ with respect to $L$. Note that
$v\in\SReach_r[G,L,v]$ for any vertex~$v$.

We define the \emph{$r$-colouring number} of $G$ with respect to
  $L$ as
\[\col_r(G,L)=\max_{v\in V(G)}\bigl|\SReach_r[G,L,v]\bigr|,\]
and the \emph{$r$-colouring number} of $G$ (sometimes called
  \emph{strong $r$-colouring number}) as
\[\col_r(G)=\min_{L\in \Pi(G)} \col_r(G,L).\]

For $r\in\N$ and ordering $L\in\Pi(G)$, the \emph{$r$-admissibility}
$\adm_r[G,L,u]$ of a vertex~$v$ with respect to~$L$ is defined as the
maximum size of a family $\Pp$ of paths that satisfies the following two
properties:
\begin{itemize}
\item each path $P\in\Pp$ has length at most $r$, starts in $v$, and is
  either the trivial length-zero path or ends in a vertex $u<_L v$ and all
  its internal vertices are larger than~$v$ in $L$;
\item the paths in $\Pp$ are pairwise vertex-disjoint, apart from
  sharing the start vertex $v$.
\end{itemize}
The \emph{$r$-admissibility of $G$} with respect to $L$
is defined similarly to the $r$-colouring number:
\[\adm_r(G,L)=\max_{v\in V(G)} \adm_r[G,L,v],\]
and the \emph{$r$-admissibility of $G$} is given by
\[\adm_r(G)=\min_{L\in \Pi(G)} \adm_r(G,L).\]

The $r$-colouring numbers were introduced by Kierstead and Yang
\cite{kierstead2003orders}, while $r$-admissibility was first studied by
Dvo\v{r}\'ak~\cite{dvovrak13}. It was shown that those parameters are
related as follows.
\begin{lemma}[\rm Dvo\v{r}\'ak \cite{dvovrak13}]\label{lem:gen-col-ineq}
  For any graph $G$, $r\in\N$ and vertex ordering
  $L\in\Pi(G)$, we have
  \[\adm_r(G,L)\le \col_r(G,L)\le \bigl(\adm_r(G,L)\bigr)^r.\]
\end{lemma}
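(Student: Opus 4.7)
The first inequality $\adm_r(G,L)\le\col_r(G,L)$ is the easy direction, and I would verify it pointwise. Fix a vertex $v$ and let $\Pp$ be a family realising $\adm_r[G,L,v]$. Its paths are pairwise internally vertex-disjoint (sharing only the starting vertex $v$), so their endpoints are pairwise distinct. Each such endpoint is either $v$ itself (from the trivial length-zero path) or a vertex $u<_L v$ reached by a path of length at most $r$ with internal vertices above $v$ in $L$; in either case the endpoint lies in $\SReach_r[G,L,v]$. Hence $\adm_r[G,L,v]\le|\SReach_r[G,L,v]|$ for every $v$, and taking the maximum over $v$ gives $\adm_r(G,L)\le\col_r(G,L)$.

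For the harder inequality $\col_r(G,L)\le\bigl(\adm_r(G,L)\bigr)^r$, abbreviate $k=\adm_r(G,L)$ and fix a vertex $v$; the plan is to show $|\SReach_r[G,L,v]|\le k^r$ by induction on $r$. The case $r=1$ holds with equality, since $\SReach_1[G,L,v]=\{v\}\cup(N(v)\cap V_{<v})$ and the obvious length-$\le 1$ paths from $v$ realise $|\SReach_1[G,L,v]|=\adm_1[G,L,v]\le k$. For the inductive step, I would select a maximum family $\Pp=\{P_0,P_1,\ldots,P_m\}$ of internally disjoint witness paths at $v$, so $m+1\le k$ by the definition of admissibility. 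For every $u\in\SReach_r[G,L,v]$ that is not already an endpoint of some $P_i$, maximality of $\Pp$ forces each shortest witness path for $u$ to share at least one internal vertex with some $P_i$; let $t$ be the earliest such vertex on the witness path. The sub-path from $t$ to $u$ then has length at most $r-1$, its internal vertices lie above $v$ in $L$, and its endpoint $u$ lies below $v$ in $L$. The count of new vertices $u$ is then driven by bounding, for each branch out of $v$, how many $u$ can be reached through it.

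The principal obstacle is that the sub-path from $t$ to $u$ does \emph{not} directly witness $u\in\SReach_{r-1}[G,L,t]$: its internal vertices are only guaranteed to be above $v$ in $L$ rather than above $t$. To overcome this, I would introduce an auxiliary relativised quantity, $\SReach^{(v)}_{r-1}[G,L,t]$, defined as the set of vertices $u\le_L v$ reachable from $t$ by a path of length at most $r-1$ whose internal vertices lie above $v$ in $L$, together with its admissibility analogue $\adm^{(v)}_{r-1}[G,L,t]$. By construction $\SReach^{(v)}_{r-1}$ is stable under taking sub-paths from deeper vertices (the barrier remains at $v$), so it admits a clean induction on $r$. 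Moreover, concatenating any path counted by $\adm^{(v)}_{r-1}[G,L,t]$ with a fixed path from $v$ to $t$ through $V_{>v}$ produces a genuine admissibility witness at $v$, so a Menger-style rearrangement yields $\adm^{(v)}_{r-1}[G,L,t]\le k$. Combining the inductive bound $|\SReach^{(v)}_{r-1}[G,L,t]|\le k^{r-1}$ for each of the at most $k$ first-step branches out of $v$, and noting that these branches cover $\SReach_r[G,L,v]$, gives the desired estimate $k\cdot k^{r-1}=k^r$.
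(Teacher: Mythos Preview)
The paper does not prove this lemma; it is quoted from Dvo\v{r}\'ak~\cite{dvovrak13} and used as a black box, so there is no in-paper argument to compare against. Your first inequality is fine. For the second, the relativised quantities $\SReach^{(v)}_s$ and $\adm^{(v)}_s$ are a sound idea, and the bound $\adm^{(v)}_{r-1}[G,L,t]\le k$ is true --- though your justification via concatenation with a fixed $v$--$t$ path does not work as stated: the concatenated walks may have length exceeding $r$ and all share that prefix, so no ``Menger rearrangement'' turns them into an admissibility family at $v$ of radius $r$. The clean argument is to truncate each path of the relativised family at its first vertex below $t$ in $L$; the truncated paths form a genuine admissibility family at $t$ of the same radius, giving $\adm^{(v)}_s[G,L,t]\le\adm_s[G,L,t]\le k$ directly.

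The real gap is your final count. You assert ``at most $k$ first-step branches out of $v$'', but none of the natural readings is justified. If the branches are the internal vertices $t$ where the witness path first meets $\Pp$, there can be up to $(k-1)(r-1)$ of them, and your inductive bound then only gives $k+(k-1)(r-1)\,k^{r-1}$ rather than $k^r$. If the branches are the first neighbours of $v$ along witness paths, their number is unbounded (a vertex may have arbitrarily many neighbours above it in $L$). If they are the second vertices $w(P_i)$ of the paths in $\Pp$, you would need every $u$ to lie in $\SReach^{(v)}_{r-1}[G,L,w(P_i)]$ for some $i$, but the obvious rerouting through $P_i$ does not preserve the length bound. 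Dvo\v{r}\'ak's actual argument arranges the induction so that each step branches into at most $k$ options while decreasing the residual radius by exactly one; your sketch conflates the branching with the intersection structure of $\Pp$ and does not achieve this.
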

(Note that in Dvo\v{r}\'ak's work, the reachability sets do
  not include the starting vertex, hence the above inequality is stated
slightly differently in \cite{dvovrak13}.)

As proved by Zhu \cite{zhu2009coloring}, the generalised colouring numbers
are tightly related to densities of low-depth minors, and hence they can be
used to characterise classes of bounded expansion.

\begin{theorem}[\rm Zhu \cite{zhu2009coloring}]\label{thm:b_exp_b_deg}
  A class $\CCC$ of graphs has bounded expansion if and only if there is a
  function $f:\N\to\N$ such that $\col_r(G)\le f(r)$ for all $r\in\N$ and
  all $G\in\CCC$.
\end{theorem}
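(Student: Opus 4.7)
The plan is to prove both implications by exhibiting quantitative links between $\col_r$ and $\nabla_r$, possibly at different depth parameters. For the direction $(\Leftarrow)$, I would show that $\col_{2r+1}(G)\le k$ implies $\nabla_r(G)\le k$. Fix an ordering $L$ witnessing $\col_{2r+1}(G)\le k$, and let $H\minor_r G$ with branch sets $(I_u)_{u\in V(H)}$, each of radius at most $r$ around some centre. For every $u\in V(H)$ let $c_u$ denote the $L$-minimum of $I_u$. Orient each edge $uv\in E(H)$ from the larger to the smaller centre, say $c_u<_L c_v$, and charge it to $c_v$. For such an oriented edge, concatenate a shortest path inside $I_v$ from $c_v$ to the bridging endpoint $x_v$, the bridging edge $x_v x_u$, and a shortest path inside $I_u$ from $x_u$ to $c_u$; this yields a walk of length at most $2r+1$ from $c_v$ to $c_u$. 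Since $c_v$ is $L$-minimum in $I_v$, every vertex of the walk that still lies in $I_v$ is $\ge_L c_v$, so the first vertex $y$ after $c_v$ with $y\le_L c_v$ must lie in $I_u$. Then $y\in\SReach_{2r+1}[G,L,c_v]$, and because the branch sets are pairwise disjoint, the map $u\mapsto y$ is injective for fixed $v$. Hence $c_v$ receives at most $|\SReach_{2r+1}[G,L,c_v]|\le k$ oriented edges, and summing over $v\in V(H)$ gives $|E(H)|\le k\,|V(H)|$.

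For the direction $(\Rightarrow)$, I would assume $\nabla_s(G)\le d(s)$ for every $s\in\N$ and $G\in\CCC$, and construct an ordering $L$ with $\adm_r(G,L)$ bounded by a function of $r$ and the sequence $d(\cdot)$; \cref{lem:gen-col-ineq} then converts this into a bound on $\col_r(G)$. The natural approach is greedy peeling from the top: having placed a suffix $Y\subseteq V(G)$ of $L$, pick the next vertex from the remaining graph $G'=G-Y$ to be one whose $r$-admissibility into $G'-v$ is as small as possible, then recurse on $G'-v$. The heart of the argument is the claim that in any subgraph $G'$ of $G$ there is a vertex $v$ with $\adm_r[G',L,v]$ bounded by a function of $\nabla_r(G)$: if every vertex hosted many pairwise (at $v$) internally disjoint length-$\le r$ paths inside $G'$, then a simultaneous contraction of suitable such paths across many vertices would produce a depth-$r$ minor of $G$ with average degree exceeding $\nabla_r(G)$, a contradiction. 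Iterating yields an ordering $L$ with $\adm_r(G,L)$ bounded, and hence $\col_r(G)\le\bigl(\adm_r(G,L)\bigr)^r$ is bounded as well.

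The $(\Leftarrow)$ direction is essentially combinatorial bookkeeping, and the only subtlety is that truncating the walk at its first escape below $c_v$ inflates the depth parameter from $r$ to $2r+1$. The genuinely harder step, and the main obstacle, is $(\Rightarrow)$: the path-contraction construction that turns pairwise disjoint short paths from many vertices into a dense shallow minor of $G$ must be orchestrated globally rather than vertex by vertex, so that the resulting branch sets are pairwise disjoint and all of radius at most $r$. This is the technical content of Zhu's original proof in \cite{zhu2009coloring} and Dvo\v{r}\'ak's admissibility version in \cite{dvovrak13}; I would follow the latter because it directly produces the admissibility bound that feeds into \cref{lem:gen-col-ineq}.
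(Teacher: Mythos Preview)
The paper does not actually prove this theorem; it is quoted from Zhu~\cite{zhu2009coloring} and used as a black box. So there is no ``paper's own proof'' to compare against. That said, your outline is essentially the standard argument, and the paper does contain closely related pieces: your $(\Leftarrow)$ direction is precisely \cref{lem:col-vs-nabla}, and the $(\Rightarrow)$ direction is covered (in the admissibility formulation you invoke) by \cref{lem:adm-nabla} together with \cref{lem:gen-col-ineq}.

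There is one genuine slip in your $(\Leftarrow)$ argument. You claim the concatenated walk from $c_v$ to $c_u$ has length at most $2r+1$, but $c_v$ and $c_u$ are the $L$-minima of their branch sets, not the radius centres. A branch set of radius $r$ can have diameter up to $2r$, so the walk $c_v \to x_v \to x_u \to c_u$ has length at most $2r+1+2r=4r+1$, not $2r+1$. This is exactly why the paper's \cref{lem:col-vs-nabla} states $\nabla_r(G)\le\col_{4r+1}(G)$ rather than $\col_{2r+1}(G)$. The rest of your argument (truncating at the first vertex $\le_L c_v$, which must lie in $I_u$ by minimality of $c_v$ in $I_v$, and injectivity via disjointness of branch sets) is correct and is the same reasoning the paper uses in the proof of \cref{lem:col-vs-nabla}.

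Your $(\Rightarrow)$ sketch via greedy peeling and a global contraction argument is the Dvo\v{r}\'ak route and is sound as a plan; the paper simply cites the finished inequality $\adm_r(G)\le 6r\,(\nabla_r(G))^3$ as \cref{lem:adm-nabla} and combines it with \cref{lem:gen-col-ineq}, so nothing further would be expected of you here beyond invoking those lemmas.
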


We need to be a bit more precise and use the following
lemma.

\begin{lemma}[\rm Grohe et al.~\cite{grohe2015colouring}]\label{lem:adm-nabla}
  For any graph $G$ and $r\in\N$ we have
  $\adm_r(G)\leq 6r\cdot(\nabla_r(G))^3$.
\end{lemma}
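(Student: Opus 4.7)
The plan is to build an ordering $L$ of $V(G)$ iteratively, placing vertices from smallest to largest, and at each step picking a remaining vertex whose admissibility can be controlled by the density of a suitable depth-$r$ minor. Concretely, for $j=1,2,\ldots,n$, let $W_j\coloneqq V(G)\setminus\{v_1,\ldots,v_{j-1}\}$ denote the set of not-yet-placed vertices, and let $B_j\coloneqq V(G)\setminus W_j$ denote the already-placed ones. Since any path witnessing $\adm_r[G,L,v_j]$ must start at $v_j$, have internal vertices in $W_j\setminus\{v_j\}$ (larger than $v_j$), and end in $B_j$ (smaller than $v_j$), the value $\adm_r[G,L,v_j]$ depends only on $v_j$ and the current partition $(B_j,W_j)$, not on how the remaining vertices are ordered later. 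Hence we may choose $v_j\in W_j$ greedily.

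The core of the argument is a density lemma: in every such configuration $(B_j,W_j)$, some vertex $v\in W_j$ admits at most $6r\cdot\nabla_r(G)^3$ internally disjoint length-$\leq r$ paths through $W_j\setminus\{v\}$ to $B_j$. I would prove this by contraposition: assume every $v\in W_j$ admits strictly more than $t\coloneqq 6r\cdot\nabla_r(G)^3$ such paths, and derive a depth-$r$ minor of $G$ of edge density exceeding $\nabla_r(G)$. The idea is to contract each of the witnessing fans into a branch set: for each $v\in W_j$, a collection of length-$\leq r$ paths from $v$ ending at distinct vertices of $B_j$ yields, after suitably identifying overlaps and contracting, a vertex in a depth-$r$ minor of high degree.

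The main obstacle is precisely the interference between fans of different vertices: naively contracting each $v$'s fan produces branch sets that share many vertices, so one cannot directly read off a valid depth-$r$ minor model. To overcome this, I would perform the construction in layers. First, extract a large subset $U\subseteq W_j$ and for each $u\in U$ a limited number of disjoint paths (about $t/(r\cdot\nabla_r(G)^2)$ per vertex) so that the paths chosen globally respect a disjointness budget; a double-counting argument (charging each shared vertex to the degrees it contributes) keeps this budget proportional to the total path length $r$ and to two applications of the density bound $\nabla_r(G)$, which is the source of the factor $r\cdot\nabla_r(G)^2$ lost in the passage to the minor. Combined with the final factor $\nabla_r(G)$ bounding the edge density of the resulting depth-$r$ minor, this accounts precisely for the shape $6r\cdot\nabla_r(G)^3$ of the bound.

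Given the density lemma, the lemma follows: at each step $j$, pick $v_j\in W_j$ achieving the bound. Since $G[W_j]$ is a subgraph of $G$ and the depth-$r$ minor relation is monotone under taking subgraphs, $\nabla_r(G[W_j])\leq\nabla_r(G)$, so $\adm_r[G,L,v_j]\leq 6r\cdot\nabla_r(G)^3$ for every $j$, yielding $\adm_r(G,L)\leq 6r\cdot\nabla_r(G)^3$ and hence $\adm_r(G)\leq 6r\cdot\nabla_r(G)^3$.
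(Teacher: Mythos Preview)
The paper does not prove this lemma at all: it is quoted verbatim from Grohe, Kreutzer, Rabinovich, Siebertz and Stavropoulos~\cite{grohe2015colouring} and used as a black box. So there is no proof in the present paper to compare against.

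As for your proposal on its own merits: the outer scaffolding is correct and standard. Building $L$ greedily from the smallest end and observing that $\adm_r[G,L,v_j]$ depends only on the bipartition $(B_j,W_j)$ is exactly the right reduction, and it correctly isolates the core statement you call the ``density lemma'': if every $v\in W$ admits more than $t$ internally disjoint length-$\le r$ paths through $W\setminus\{v\}$ to $B$, then $G$ has a depth-$r$ minor of edge density exceeding $\nabla_r(G)$.

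The gap is that you do not prove this core statement. Everything after ``The main obstacle is precisely the interference between fans'' is programmatic rather than an argument. You assert that one can ``extract a large subset $U\subseteq W_j$'' with ``about $t/(r\cdot\nabla_r(G)^2)$'' paths per vertex ``so that the paths chosen globally respect a disjointness budget'', and that ``a double-counting argument (charging each shared vertex to the degrees it contributes)'' makes this work --- but you never say what the branch sets of the claimed minor are, why they are pairwise disjoint and connected with radius $\le r$, or why the resulting minor has edge density exceeding $\nabla_r(G)$. The difficulty you yourself name --- that fans from different vertices of $W$ share both internal vertices and endpoints in $B$ in uncontrolled ways --- is not resolved by anything you wrote; a single double-counting step does not produce a valid minor model out of overlapping fans. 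Your derivation of the shape $6r\cdot\nabla_r(G)^3$ reads as reverse-engineering the exponent (``one factor of $r$, two of $\nabla_r$ here, one more $\nabla_r$ there'') rather than as the outcome of a construction. In the cited source the passage from high back-connectivity to a dense shallow minor is the substantive part of the proof and requires a genuinely careful iterative construction; what you have written does not yet contain that construction.
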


As shown by Dvo\v{r}\'ak \cite{dvovrak13}, on classes of bounded expansion
one can compute $\adm_r(G)$ in linear fixed-parameter time, parameterised by
$r$. More precisely, we have the following.

\begin{theorem}[\rm Dvo\v{r}\'ak \cite{dvovrak13}]\label{thm:adm-compute}
  Let $\CCC$ be a class of bounded expansion. Then there is an algorithm
  that, given a graph $G\in \CCC$ and $r\in\N$, computes a vertex
  ordering $L\in \Pi(G)$ with $\adm_r(G,L)=\adm_r(G)$ in time
  $f(r)\cdot |V(G)|$, for some computable function $f$.
\end{theorem}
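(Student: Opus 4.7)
The plan is to construct $L$ via a greedy top-down procedure, at each step selecting a vertex whose back-reachability into the already-placed top portion is small, and then to invoke the bounded density of $r$-shallow minors to certify both the existence of a good choice and a suitable running time. By \Cref{lem:adm-nabla} combined with the bounded-expansion hypothesis, there is a constant $c=c(\CCC,r)$ with $\adm_r(G)\le c$ for every $G\in\CCC$, so the task is to output an ordering attaining a bound of this order.

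My algorithm maintains a set $S\subseteq V(G)$ of vertices already assigned the largest positions of $L$. For each $v\in V(G)\setminus S$, let $a(v,S)$ denote the maximum number of pairwise internally vertex-disjoint paths in $G$, each of length at most $r$, starting at $v$, with all internal vertices in $S$, and ending either at $v$ (the trivial path) or at some vertex of $V(G)\setminus(S\cup\{v\})$. If $v$ is placed immediately below $S$ in $L$, this is exactly the value $\adm_r[\,G,L,v\,]$ in the final ordering. The algorithm picks a $v\in V(G)\setminus S$ minimising $a(v,S)$, assigns it the next free position, and updates $S\leftarrow S\cup\{v\}$. The central combinatorial claim is that some $v\in V(G)\setminus S$ always satisfies $a(v,S)\le c'$ for a constant $c'=c'(\CCC,r)$. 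To prove it I would build a multigraph $M$ on $V(G)\setminus S$ whose multi-edge count between $u$ and $w$ records the number of internally vertex-disjoint $(u,w)$-paths of length at most $r$ through $S$; by choosing path-realising branch sets inside $S$, $M$ embeds as an $r$-shallow minor of $G$, so $|E(M)|/|V(M)|\le \nabla_r(G)$. An averaging argument then produces a vertex of small degree in $M$, i.e., the desired vertex $v$.

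The main obstacle will be to achieve running time $f(r)\cdot|V(G)|$ rather than quadratic. A naive recomputation of the $a$-values after every placement is far too slow; instead, the algorithm must maintain them incrementally. The idea is to keep, for each $v\in V(G)\setminus S$, a short witness list of paths realising its current $a$-value, and to refresh only those witnesses passing through the newly placed vertex. Each such refresh event is charged to a short path in $G$ of a specific form, and the total number of such paths over the full execution is bounded by a function of $r$ times $|V(G)|$ thanks to bounded expansion. Making this amortised analysis precise — in particular, maintaining the witness lists with constant update cost and verifying that the invariant $a(v,S)\le c'$ is preserved throughout — is the most delicate part of the argument, and is what turns the greedy scheme into a genuine linear-time fixed-parameter algorithm.
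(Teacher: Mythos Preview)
The paper does not give a proof of this theorem; it is quoted as a result of Dvo\v{r}\'ak~\cite{dvovrak13} and used as a black box, with only the remark that an inspection of Dvo\v{r}\'ak's argument shows it is uniform in~$r$. So there is no in-paper proof to compare your proposal against.

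On the merits of your sketch: the top-down greedy is indeed the right algorithm, but your correctness argument does not establish what the theorem actually claims. The statement asserts that the output ordering is \emph{optimal}, $\adm_r(G,L)=\adm_r(G)$, whereas your shallow-minor averaging would at best yield $\adm_r(G,L)\le c'(\CCC,r)$. (Moreover, the assertion that your multigraph $M$ embeds as an $r$-shallow minor of $G$ is itself dubious: the paths realising different edges of $M$ need not be globally disjoint, so there is no evident minor model.) The actual optimality argument is elementary and does not use bounded expansion at all: if $L^*$ is an optimal ordering and $v^*$ is the $L^*$-largest vertex of $V(G)\setminus S$, then every path witnessing $a(v^*,S)$ can be truncated at its first vertex that is $L^*$-smaller than $v^*$ to obtain a valid path for $\adm_r[G,L^*,v^*]$, and truncation preserves vertex-disjointness; hence $a(v^*,S)\le \adm_r[G,L^*,v^*]\le \adm_r(G)$, so the greedy minimum is at most $\adm_r(G)$ at every step and the output is optimal.

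Bounded expansion is needed only for the running-time analysis: one must compute each $a(v,S)$ (a bounded-length Menger-type quantity) and keep these values current in total time $f(r)\cdot|V(G)|$. Your incremental-witness idea points in the right direction but leaves the core accounting---bounding the total work of all recomputations by a function of $r$ times $|V(G)|$---entirely unspecified, and that is precisely where the substance of Dvo\v{r}\'ak's proof lies.
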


We remark that Dvo\v{r}\'ak states the result in \cite{dvovrak13} as the
existence of a linear-time algorithm for each fixed value of~$r$. However,
an inspection of the proof reveals that it is actually a single
fixed-parameter algorithm that can take $r$ as input. To the best of our
knowledge, a similar result for computing $\col_r(G)$ is not known, but by
Lemma~\ref{lem:gen-col-ineq} we can use admissibility to obtain an
approximation of the $r$-colouring number of a given graph from a class of
bounded expansion.

% \medskip
% \noindent\textbf{Stability under local changes.}
Bounded expansion classes are very robust under local changes, e.g.\ under
taking lexicographic products, as defined below.

\begin{definition}
  Let $G$ and $H$ be graphs. The \emph{lexicographic product $G\bullet H$}
  of $G$ and $H$ is the graph with vertex set $V\coloneqq V(G)\times V(H)$
  and edge set
  \[E\coloneqq \bigl\{\{(v,v'),(u,u')\}\mid \text{$\{v,u\}\in E(G)$, or $v=u$ and $\{v',u'\}\in E(H)$}\bigr\}.\]
\end{definition}

The following lemma shows that taking lexicographic products preserves the
edge density of shallow minors. This was first proved
in~\cite{nevsetvril2008grad}; the following improved bounds are given
in~\cite{har2017approximation}.

\begin{lemma}[\rm Har-Peled and Quanrud
  \cite{har2017approximation}]\label{lem:stability-lex}
  For any graph $G$ and $r,t\in\N$ we have
  $\nabla_r(G\bullet K_t)\leq 5t^2(r+1)^2\nabla_r(G)$.
\end{lemma}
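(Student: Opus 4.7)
The plan is to take any depth-$r$ minor $H$ of $G\bullet K_t$, project its branch sets onto $V(G)$, and carefully deconflict the resulting overlapping family to recover a depth-$r$ minor of $G$ whose edge density differs from $|E(H)|/|V(H)|$ by at most a factor of $5t^2(r+1)^2$.

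First, I would fix a depth-$r$ minor model $(I_u)_{u\in V(H)}$ together with a centre $c_u\in I_u$ satisfying $d_{I_u}(c_u,\cdot)\le r$ for every $u$. Let $\pi\colon V(G)\times V(K_t)\to V(G)$ be the first projection, and set $J_u:=\pi(I_u)$ and $v_u:=\pi(c_u)$. An edge of $G\bullet K_t$ either projects to an edge of $G$ or collapses to a single vertex (when it lies inside a column $\{v\}\times V(K_t)$), so $\pi$ sends paths of length $\ell$ to walks of length at most $\ell$ in $G$. Consequently each $J_u$ is a connected subgraph of $G$ of radius at most $r$ centred at $v_u$. Two observations drive the count: (a)~since the $t$ elements of $\pi^{-1}(v)$ lie in distinct branch sets, every $v\in V(G)$ belongs to at most $t$ of the sets $J_u$, and the map $u\mapsto v_u$ has fibres of size at most $t$; and (b)~for every edge $uu'\in E(H)$, its witness in $G\bullet K_t$ projects either to a shared vertex of $J_u\cap J_{u'}$ or to an edge of $G$ between $J_u$ and $J_{u'}$, and for any pair $\{v,w\}\subseteq V(G)$ at most $t^2$ edges of $H$ can project their witnesses onto $\{v,w\}$.

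Using (a) and (b) I would then construct a depth-$r$ minor $M$ of $G$ as follows. Choose one representative $u(v)\in V(H)$ for every $v$ in the image of $u\mapsto v_u$, producing a set $U^*\subseteq V(H)$ with $|U^*|\ge |V(H)|/t$. Deconflict the projected branch sets by a greedy rule: assign every $x\in V(G)$ that lies in several $J_{u(v)}$'s to the representative whose centre is nearest to $x$ in $G$, breaking ties arbitrarily; then repair connectedness by adjoining a shortest $J_u$-path from each kept vertex back to $v_u$. A standard argument shows that the resulting sets $(B_u)_{u\in U^*}$ are pairwise disjoint, each connected in $G$, and each of radius at most $r$ from $v_u$, so $(B_u)_u$ is a genuine depth-$r$ minor model of a graph $M$ on $U^*$ in which two representatives are adjacent whenever their deconflicted branch sets are. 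Combined with (b), each edge of $H$ between representatives either survives in $M$ or is absorbed into an overlap, with multiplicity at most $t^2$ per edge of $M$.

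Collecting these losses gives $|E(M)|\ge |E(H)|/\bigl(5t^2(r+1)^2\bigr)$ and $|V(M)|\le |V(H)|$, so $|E(H)|/|V(H)|\le 5t^2(r+1)^2\cdot|E(M)|/|V(M)|\le 5t^2(r+1)^2\,\nabla_r(G)$ directly by definition of $\nabla_r$. The main obstacle is the deconfliction step: one has to show that the greedy reassignment preserves both connectedness of each $B_u$ and the bound of $r$ on its radius, and simultaneously that the combinatorial information needed to recover $E(H)$ from $E(M)$ is lost only by a controlled amount. The factor $(r+1)^2$ (rather than $(r+1)$) reflects the fact that radius inflation is paid once when paths are projected from $G\bullet K_t$ into $G$ and once more when overlapping $J_u$'s are separated into pairwise disjoint branch sets; pinning both losses down with the absolute constant $5$ is the technical crux.
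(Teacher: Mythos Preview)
The paper does not prove this lemma; it is quoted from Har-Peled and Quanrud (with the earlier, weaker version attributed to Ne\v{s}et\v{r}il and Ossona de Mendez) and used as a black box. There is therefore no in-paper argument to compare against, so let me comment on the soundness of your sketch on its own.

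There is a genuine gap at the edge-accounting step. You assert that the single depth-$r$ minor $M$ of $G$ built on the representative set $U^*$ satisfies $|E(M)|\ge |E(H)|/\bigl(5t^2(r+1)^2\bigr)$, but nothing you have written supports this. Two separate leaks occur. First, you only discuss edges of $H$ \emph{between representatives}; any edge of $H$ with at least one endpoint outside $U^*$ is silently discarded, and such edges can be essentially all of $E(H)$ (take $H$ complete bipartite with $U^*$ one side). Second, even for $u,u'\in U^*$, after nearest-centre reassignment the witness pair $\{v,w\}$ of the edge $uu'$ may sit entirely inside some third cell $B_{u''}$, so $B_u$ and $B_{u'}$ need not be adjacent; you give no charging scheme that maps such a lost edge to a specific edge of $M$ with bounded multiplicity. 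Your observation~(b) only says that at most $t^2$ edges of $H$ share the same projected witness $\{v,w\}$; it says nothing about how many witnesses collapse onto one edge of $M$, which is the missing link. The sentence ``a standard argument shows\ldots'' is exactly where the proof would have to happen, and there is no such standard argument for the Voronoi-style deconfliction you describe. (As a side remark, the inequality as stated even fails when $G$ is a single vertex, so any honest proof must produce an additive lower-order term or assume $G$ has an edge; this is harmless for the application but is another sign that a one-line charging cannot work.)

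The published arguments do not try to extract one global depth-$r$ minor of $G$ that retains a fixed fraction of $E(H)$. A workable route closer to yours is to first pass to the quotient of $H$ by the fibres of $u\mapsto v_u$ (so that \emph{no} edge is thrown away for having a non-representative endpoint): intra-fibre edges contribute at most $\binom{t}{2}$ per fibre and are handled as an additive term, while inter-fibre edges have multiplicity at most $t^2$ over the simple quotient graph. One is then left with the genuinely delicate step of bounding the density of a graph whose ``branch sets'' in $G$ are connected, have radius $\le r$, and overlap with multiplicity $\le t$; disentangling these is where the polynomial-in-$r$ factor legitimately enters, and it requires more than a nearest-centre rule.
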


Bounded expansion classes are also stable under taking shallow minors, as
expressed in the following lemma.

\begin{lemma}[\rm see Ne\v{s}et\v{r}il and Ossona de Mendez~{\cite[Proposition~4.1]{nevsetvril2012sparsity}}]
\label{lem:stability-minors}
    If $J,H$ and $G$ are graphs and $r,s\in\N$ such that $J$ is a depth-$r$
    minor of $H$ and $H$ is a depth-$s$ minor of $G$, then $J$ is a
    depth-$(2rs+r+s)$-minor of $G$.
\end{lemma}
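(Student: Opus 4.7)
The plan is to compose the two shallow minor models directly, taking the branch sets in the composed model to be unions of branch sets indexed by branch sets, and then to carefully bound the resulting radius. Concretely, fix a minor model $(I_u)_{u\in V(J)}$ of $J$ in $H$ witnessing $J\minor_r H$, and a minor model $(K_v)_{v\in V(H)}$ of $H$ in $G$ witnessing $H\minor_s G$. For each $u\in V(J)$ I will define the new branch set in $G$ as
\[
L_u\coloneqq\bigcup_{v\in V(I_u)} V(K_v),
\]
together with, for every edge $v_1v_2\in E(I_u)$, a single edge $x_{v_1}x_{v_2}\in E(G)$ connecting $K_{v_1}$ to $K_{v_2}$ (which exists by definition of a minor model).

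First I would verify that $(L_u)_{u\in V(J)}$ is indeed a minor model of $J$ in $G$. Pairwise disjointness is immediate: the sets $I_u$ are pairwise disjoint in $H$, so the families of $K_v$'s appearing in different $L_u$'s are disjoint, and the $K_v$'s themselves are pairwise disjoint in $G$. Connectivity of each $L_u$ follows from connectivity of $I_u$ combined with the edges between consecutive $K_v$'s. Finally, for $u_1u_2\in E(J)$, the minor model of $J$ in $H$ gives $v_1\in I_{u_1}$, $v_2\in I_{u_2}$ with $v_1v_2\in E(H)$, and then the minor model of $H$ in $G$ supplies an edge between $K_{v_1}\subseteq L_{u_1}$ and $K_{v_2}\subseteq L_{u_2}$.

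The actual content of the lemma lies in the radius bound. For each $u\in V(J)$ pick a centre $c_u$ of $I_u$ (so every $v\in I_u$ is at distance at most $r$ from $c_u$ inside $I_u$), and pick a centre $d_{c_u}$ of $K_{c_u}$. I claim that $d_{c_u}$ witnesses radius at most $2rs+r+s$ in $L_u$. Given any target $y\in V(K_v)$ with $v\in I_u$, take an $I_u$-path $c_u=v_0,v_1,\ldots,v_k=v$ of length $k\le r$. I will concatenate: a path of length at most $s$ from $d_{c_u}$ to the endpoint of the $K_{v_0}$--$K_{v_1}$ connecting edge inside $K_{v_0}$; then the connecting edge; then, in each intermediate $K_{v_i}$, a path of length at most $2s$ (since $K_{v_i}$ has radius $s$ and therefore diameter at most $2s$) from the incoming attachment to the outgoing attachment; then, in $K_{v_k}$, a path of length at most $2s$ to $y$. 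Summing gives $s + k + 2s(k-1) + 2s = k(2s+1) + s \le r(2s+1) + s = 2rs + r + s$, which is exactly the bound claimed.

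The only delicate point is the diameter-versus-radius step in the intermediate branch sets: the two attachment vertices inside $K_{v_i}$ need not be close to each other directly, but because $K_{v_i}$ has radius $s$ around its centre, both lie within distance $s$ of the centre, so they are within distance $2s$ of each other. I do not expect any real obstacle beyond bookkeeping; this is the step that must be carried out carefully so that the $2rs+r+s$ figure is matched rather than overshot.
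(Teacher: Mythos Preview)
Your argument is correct and is the standard composition-of-minor-models proof. Note, however, that the paper does not actually supply a proof of this lemma: it is quoted as a known fact from Ne\v{s}et\v{r}il and Ossona de Mendez~\cite[Proposition~4.1]{nevsetvril2012sparsity}, so there is no in-paper proof to compare against. Your write-up matches the argument one finds in that reference, including the key step of routing through the centre of each intermediate $K_{v_i}$ to get the diameter bound $2s$, which is exactly what yields the constant $2rs+r+s$ rather than a looser bound.
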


The following lemma is folklore, we provide a proof for completeness.

\begin{lemma}\label{lem:col-vs-nabla}
  For any graph $G$ and $r\in\N$ we have
    $\nabla_r(G)\leq \col_{4r+1}(G)$.
\end{lemma}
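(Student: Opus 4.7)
My plan is to bound the edge density of any depth-$r$ minor $H \minor_r G$ by $\col_{4r+1}(G)$ through a direct charging argument in an optimal ordering. Fix $L \in \Pi(G)$ achieving $c := \col_{4r+1}(G)$, together with a depth-$r$ minor model $(I_u)_{u \in V(H)}$. Inside each branch set $I_u$, which is connected of radius at most $r$, fix a BFS tree $T_u$ of depth at most $r$ rooted at a center; for each edge $uv \in E(H)$ fix a representative edge $e_{uv} \in E(G)$ with one endpoint in $I_u$ and one in $I_v$. As the canonical representative of $I_u$ take $m_u$, the $L$-minimum vertex of $I_u$; these representatives are pairwise distinct since the branch sets are disjoint.

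For each edge $uv \in E(H)$, walking from $m_v$ to the endpoint of $e_{uv}$ inside $T_v$, crossing $e_{uv}$, and walking inside $T_u$ to $m_u$ yields a simple path $P_{uv}$ in $G$ of length at most $2r + 1 + 2r = 4r+1$ (each tree has diameter $\le 2r$). Charge the edge $uv$ to its larger endpoint $v$ (i.e.\ the one with $m_u <_L m_v$), and define $w_{uv}$ as the first vertex encountered along $P_{uv}$, starting from $m_v$, that satisfies $w_{uv} \le_L m_v$; such a vertex exists because $m_u \le_L m_v$. By construction the prefix of $P_{uv}$ from $m_v$ to $w_{uv}$ has length at most $4r+1$ and all its internal vertices are $>_L m_v$, so $w_{uv} \in \SReach_{4r+1}[G, L, m_v]$.

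The step that needs the most care, and where the right choice of representative pays off, is to argue that $w_{uv} \in I_u$. Indeed, every vertex of $P_{uv}$ that lies in $I_v$ other than $m_v$ itself is strictly larger than $m_v$ in $L$ (by minimality of $m_v$ in $I_v$); therefore no vertex $\le_L m_v$ can appear before $P_{uv}$ leaves $I_v$, forcing $w_{uv}$ into $I_u$. Since branch sets are pairwise disjoint, distinct edges $uv, u'v$ charged to the same $v$ then produce distinct elements $w_{uv} \in I_u$ and $w_{u'v} \in I_{u'}$, both in $\SReach_{4r+1}[G, L, m_v] \setminus \{m_v\}$. Hence the number of edges charged to $v$ is at most $|\SReach_{4r+1}[G, L, m_v]| - 1 \le c - 1$. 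Summing over $v \in V(H)$ gives $|E(H)| \le (c - 1)|V(H)|$, so $|E(H)|/|V(H)| \le c - 1 \le c$; taking the maximum over all depth-$r$ minors yields $\nabla_r(G) \le \col_{4r+1}(G)$.

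The main obstacle I want to flag is the choice of $m_u$ as the \emph{$L$-minimum} of $I_u$, rather than some natural geometric representative like a center. With a geometric choice, internal vertices of the $I_u$-portion of $P_{uv}$ could lie below $m_v$ in $L$ and destroy strong reachability of $m_v$ to $m_u$; the $L$-minimum choice is precisely what guarantees the $I_v$-portion is $L$-clean above $m_v$, so the first dip below $m_v$ must happen after crossing $e_{uv}$, which is what makes $w_{uv}$ a unique fingerprint of the edge. The factor $4r+1$ rather than $2r+1$ enters exactly because $P_{uv}$ may need to traverse both entire branch sets from their respective minima.
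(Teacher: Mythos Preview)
Your proof is correct and follows essentially the same approach as the paper: pick the $L$-minimum $m_u$ in each branch set, and for each edge of $H$ charged to its larger endpoint $v$, exhibit a strongly $(4r+1)$-reachable witness in the other branch set via a path of length at most $2r+1+2r$, concluding that $H$ is $c$-degenerate. Your version is more detailed (the paper compresses this to two sentences) and in fact yields the slightly sharper bound $|E(H)|\le(c-1)|V(H)|$ by observing $w_{uv}\neq m_v$; the only cosmetic point is that your definition of $w_{uv}$ should explicitly exclude $m_v$ (or use strict inequality), which your subsequent reasoning makes clear you intend.
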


\begin{proof}
  Set $c=\col_{4r+1}(G)$ and let $L$ be a linear order of
    $V(G)$ for which $\col_{4r+1}(G,L)=c$. Next let $H\minor_r G$,
  say with a minor model $(I_u)_{u\in V(H)}$. We will show that
  $|E(H)|\leq c\cdot|V(H)|$.

  For each $u\in V(H)$ let $m_u$ be the $<_L$-minimal vertex in $I_u$. We
  define a linear order on $V(H)$ by setting $u<v$ if $m_u <_L m_v$.
  Observe that since each branch set has radius at most $r$ and $m_u$ and
  $m_v$ are minimum in their respective branch sets, if $u < v$, there
  exists a vertex in $I_u$ which is strongly $(4r+1)$-reachable from~$m_v$.
  Hence, $H$ is $c$-degenerate and can have at most $c\cdot|V(H)|$ edges.
\end{proof}

We are going to prove the following theorem. 

\begin{theorem}\label{thm:main}
  Let $\tau$ be a finite and purely relational signature and let $\CCC$ be
  a class of $\tau$-structures of bounded expansion. Then there exists an
  algorithm that, given a finite $\tau$-structure $\strA\in\CCC$ and a
  successor-invariant formula $\phi\in \FOs$, verifies whether
  $\strA\models_{\mathrm{succ-inv}}\phi$ in time
  $f(|\phi|)\cdot n\cdot \alpha(n)$, where~$f$ is a function and $n$ is the
  size of the universe of $\strA$.
\end{theorem}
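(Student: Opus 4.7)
\medskip
\noindent\textbf{Proof plan.}
The idea is to reduce to ordinary first-order model-checking on a (still) bounded expansion class. Since $\phi$ is successor-invariant, once we exhibit a concrete successor relation $S$ on $V(\strA)$ such that the augmented structure $(\strA,S)$ still belongs to a bounded expansion class $\DDD$ depending only on $\CCC$, we can invoke the linear-time first-order model-checking algorithm of Dvo\v{r}\'ak, Kr\'a\v{l}, and Thomas on $\DDD$, as $\strA\models_{\mathrm{succ\text{-}inv}}\phi$ iff $(\strA,S)\models\phi$. Replacing $\strA$ by its Gaifman graph $G\coloneqq G(\strA)$ is harmless: we need to build $S$ so that $G+S$ lies in a bounded expansion class, and then apply the algorithm to the extended signature $\tau\cup\{S\}$ directly.

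\medskip
\noindent\textbf{Step 1: a good vertex ordering.}
Using Theorem~\ref{thm:adm-compute}, compute in time $f(r)\cdot n$ an ordering $L\in\Pi(G)$ realising $\adm_r(G,L)=\adm_r(G)$ for a suitable fixed radius $r$; by Lemma~\ref{lem:gen-col-ineq} the strong $r$-reachable sets have size bounded by a constant depending only on $r$ and $\CCC$. The ordering $L$ gives us an elimination-tree-like structure on $V(G)$, where the ``ancestors'' of $v$ are the vertices strongly reachable from $v$.

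\medskip
\noindent\textbf{Step 2 (main technical step): the degree-$3$ spanning tree.}
Working with $L$, construct a set $F\subseteq\binom{V(G)}{2}$ such that
\begin{enumerate}[(a)]
\item $F$ is a spanning tree of $G$,
\item every vertex has degree at most $3$ in $F$, and
\item there exist an ordering $L'$ of $V(G)$ and a radius $r'$ with $\col_{r'}(G+F,L')$ bounded by a constant depending only on $\CCC$ and $r$.
\end{enumerate}
This is the heart of the argument. The plan is to process vertices from largest to smallest in $L$, attaching each vertex $v$ to one of its strongly-reachable ancestors in such a way that the resulting tree has maximum degree $3$, while bookkeeping that no vertex absorbs too many ``children'' from its reachability set. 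The control on reachability in $G+F$ comes from the fact that each new edge of $F$ lies entirely inside a single strongly-reachable bag of $L$, so each new tree edge can only locally perturb the reachability order $L'$. I expect this step to be the main obstacle: simultaneously enforcing the degree-$3$ constraint and the bounded-reachability invariant requires careful amortised accounting on the elimination forest, and is where the inverse Ackermann factor enters the runtime (through a union-find implementation of the tree construction).

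\medskip
\noindent\textbf{Step 3: from the tree to a successor relation.}
Given a spanning tree $F$ of maximum degree $3$, turn $F$ into a successor relation $S$ on $V(G)$ so that every edge of $S$ is realised by a bounded-length path in $F$; for instance, use Sekanina's theorem (the cube of any tree is Hamiltonian) to extract a Hamiltonian path $S$ inside $F^3$. Then $G+S$ is obtained from $G+F$ by adding edges each of which is a path of length at most $3$ in $F$; equivalently, $G+S$ is a depth-$O(1)$ topological minor closure of $G+F$. Combining Lemma~\ref{lem:col-vs-nabla} with the fact that $\col_{r'}(G+F)$ is bounded (Step~2), and using Lemmas~\ref{lem:stability-lex} and~\ref{lem:stability-minors} to absorb the short shortcut edges, yields that $G+S$ belongs to a class $\DDD$ of bounded expansion depending only on $\CCC$.

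\medskip
\noindent\textbf{Step 4: invoke known model-checking.}
Finally, apply the algorithm of Dvo\v{r}\'ak et al.~\cite{dvovrak2013testing} to decide $(\strA,S)\models\phi$ in time $g(|\phi|)\cdot n$ on the bounded expansion class $\DDD$. Combined with the $\Oh(n\cdot\alpha(n))$ cost of Steps~1--3, this gives the claimed running time $f(|\phi|)\cdot n\cdot\alpha(n)$. Correctness follows from successor-invariance of $\phi$.
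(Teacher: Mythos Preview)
Your high-level architecture matches the paper's: compute a good ordering, build a degree-$3$ spanning tree $F$, turn it into a successor via the cube of the tree, then model-check $(\strA,S)$. Two points, however, separate your outline from a proof.

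\textbf{Step 2.} Your sketch (``attach each vertex to one of its strongly-reachable ancestors so that the tree has degree $\le 3$'') is not the paper's construction, and it is unclear how to make it work: if many vertices share the same small ancestor you cannot bound its degree without sending some of them elsewhere, and no amortisation is given. The paper uses a two-stage construction. From $L$ it first builds the elimination tree $S(G,L)$, and from it a spanning tree $U\subseteq G$ (of \emph{unbounded} degree) by picking, for each $u$ and each child $v$ of $u$ in $S$, one $G$-neighbour of $u$ inside the subtree below $v$; the Union--Find implementation of this step is where the $\alpha(n)$ factor enters. Then $U$ is flattened into a degree-$3$ tree $T=(V(G),F)$ by replacing, for each $u$, the star on the $U$-children $x_1<_L\cdots<_L x_p$ of $u$ by the path $u,x_1,\ldots,x_p$. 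The bound $\adm_r(G+F,L)\le 2\,\col_{2r}(G,L)$ is obtained by rerouting each new sibling-edge $x_ix_{i+1}$ through the common $U$-parent, which at most doubles path lengths.

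\textbf{Step 4.} This is a genuine gap. You assert that $G+S$ lies in a fixed bounded-expansion class $\DDD$ depending only on $\CCC$, and then invoke the Dvo\v{r}\'ak--Kr\'a\v{l}--Thomas algorithm on $\DDD$. But nothing in Steps 2--3 gives this: the ordering $L$ is computed for one fixed radius, and your chain via Lemma~\ref{lem:col-vs-nabla} and Lemmas~\ref{lem:stability-lex}, \ref{lem:stability-minors} only controls $\nabla_r(G+S)$ (equivalently $\adm_r$) for radii up to a fixed multiple of that one, not for all $r$. The paper explicitly notes that it is \emph{not} able to control all colouring numbers simultaneously. Instead it isolates a refined statement (Theorem~\ref{thm:plain-FO-MC}): for every $\phi$ there is a computable $r(\phi)$ such that model-checking runs in time $f(|\phi|,\adm_{r(\phi)}(G(\strA)))\cdot n$. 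One first computes $r=r(\phi)$, then chooses $L$ for radius $80r+2$, so that after both augmentations $\adm_r(G(\strA)+S)$ is bounded by a constant depending only on $\phi$ and $\CCC$, and only then invokes Theorem~\ref{thm:plain-FO-MC}. Without this refinement your Step 4 does not go through.
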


In the language of parameterised complexity, Theorem~\ref{thm:main}
essentially states that the model-checking problem for successor-invariant
first-order formulas is fixed-parameter tractable on classes of finite
structures whose underlying Gaifman graph belongs to a fixed class of
bounded expansion. There is a minor caveat, though. The formal definition
of fixed-parameter tractability, see e.g.~\cite{FlumG06}, requires the
function $f$ to be computable, which is not asserted by
Theorem~\ref{thm:main}. In order to have this property, it suffices to
assume that the class $\CCC$ is \emph{effectively of bounded expansion}. In
the characterisation of Theorem~\ref{thm:b_exp_b_deg}, this means that
there exist a \emph{computable} function $f:\N\to\N$ such that
$\col_r(G(\strA))\le f(r)$ for each $\strA\in \CCC$. See
\cite{grohe2014deciding} for a similar discussion regarding model-checking
first-order logic on (effectively) nowhere dense classes of graphs.

\medskip
In principle, our approach follows that of the earlier results on
successor-invariant model-checking. As $\phi$ is successor-invariant, to
verify whether $\strA\models_{\mathrm{succ-inv}}\phi$, we may compute an arbitrary successor relation~$S$ on $V(\strA)$, and verify whether
$(\strA,S)\models \phi$. Of course, we will try to compute a successor
relation $S$ so that adding it to $\strA$ preserves the structural
properties as much as possible, so that model-checking on~$(\strA,S)$ can
be done efficiently.

Our construction of such a structure preserving successor relation is based
on the above described characterisation of bounded expansion classes
by the generalised colouring numbers.
% The definition of these graph parameters is roughly based on measuring
% reachability properties in a linear vertex ordering of the input graph.
% Any such ordering yields a very weak form of decomposition of a graph in
% terms of an elimination tree.
As a first step, we show how to
% control these elimination trees so that we can use them to
define a set $F$ of new edges with the following properties:
\begin{itemize}
\item $F$ forms a tree on the vertex set $V(G)$ of
  the input graph $G$,
\item $F$ has maximum degree at most $3$, and 
\item after adding all the edges of $F$ to $G$, the
  colouring numbers are still bounded.
\end{itemize}
In a second step, we construct from the bounded degree spanning tree a
successor relation on~$V(G)$, again ensuring that the relevant
parameters remain bounded.

\subsection{Constructing a low-degree spanning tree}\label{sec:tree}

\newcommand{\Zup}{Z^{\uparrow}}
\newcommand{\Zdown}{Z^{\downarrow}}
\newcommand{\Fnew}{F_{\textrm{new}}}

In this section we prove the following theorem. 

\begin{theorem}\label{thm:main-technical}
  There exists an algorithm that, given a graph $G$, $r\in\N$, and
  ordering $L$ of $V(G)$, computes a set of unordered pairs
  $F\subseteq\binom{V(G)}{2}$ such that the graph $T=(V(G),F)$ is a tree of
  maximum degree at most $3$ and
  \[\adm_r(G+F,L)\le 2+2\cdot\col_{2r}(G,L),\]
  where $G+F$ is the graph $(V(G),E(G)\cup F)$. The running time of the
  algorithm is 
  $\Oh((n+m)\cdot\alpha(m))$, where $m=|E(G)|$ and $n=|V(G)|$.
\end{theorem}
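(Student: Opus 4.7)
The plan is to build $T$ in two stages: first construct an elimination-forest $T'$ based on $L$, then binarise it to obtain maximum degree $3$. For the first stage, I would process vertices in increasing $L$-order and assign to each $v$ a parent $p(v)$: the $L$-smallest element of $\SReach_r[G,L,v]\setminus\{v\}$ if this set is nonempty, otherwise $v$ is declared a root to be linked via auxiliary edges afterwards so that $T'$ spans $V(G)$. By construction, every non-auxiliary $T'$-edge $\{v,p(v)\}$ is witnessed by a $G$-path of length at most $r$ from $v$ to $p(v)$ whose internal vertices are $>_L v$.

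For the second stage, I would apply a first-child/next-sibling binarisation to $T'$: for each $u$ with children $c_1 <_L \cdots <_L c_k$ in $T'$, keep the edge $\{u,c_1\}$ and replace each $\{u,c_i\}$ (for $i \ge 2$) by the sibling edge $\{c_{i-1},c_i\}$. The resulting tree $T=(V(G),F)$ has maximum degree~$3$: each vertex has at most one downward $F$-edge (to its $T'$-parent if it is a first child, otherwise to its previous sibling) and at most two upward $F$-edges (to its first $T'$-child and to its next sibling).

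For the admissibility bound, fix $v$ and let $\mathcal{P}$ be any vertex-disjoint family of admissible $r$-paths from $v$ in $G+F$. I would split $\mathcal{P}=\mathcal{P}_{\mathrm{up}}\cup\mathcal{P}_{\mathrm{rest}}$, where $\mathcal{P}_{\mathrm{up}}$ consists of paths whose first edge is an upward $F$-edge at $v$; since there are at most two such edges, $|\mathcal{P}_{\mathrm{up}}|\le 2$, accounting for the $+2$ in the bound. For each $P\in\mathcal{P}_{\mathrm{rest}}$, I would unfold every $F$-edge it uses into its corresponding short $G$-walk---a parent edge unfolds to a $G$-path of length $\le r$, and a sibling edge unfolds to a $G$-walk of length $\le 2r$ through the common $T'$-parent---mapping $P$ to a $G$-walk from $v$ which first dips below $v$ at a target $y_P\in\SReach_{2r}[G,L,v]$. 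Vertex-disjointness of $\mathcal{P}$ together with the binary structure of $T$ would then force $|\mathcal{P}_{\mathrm{rest}}|\le 2\col_{2r}(G,L)$.

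The running time $\Oh((n+m)\alpha(m))$ is achieved by implementing the first stage via a union-find structure sweeping $V(G)$ in $L$-order, in the manner of Dvo\v{r}\'ak's algorithm for admissibility (\Cref{thm:adm-compute}); the binarisation is a local transformation running in linear time. The main obstacle is to make the admissibility analysis fully rigorous: a sibling-edge unfolding passes through the common $T'$-parent $u$, which is $<_L$ both sibling endpoints, so the unfolded walk may contain intermediate vertices smaller in $L$ than $v$, breaking the ``internal vertices larger than $v$'' requirement of strong reachability. Overcoming this will likely require a careful case analysis showing that such problematic unfoldings can be charged to genuine admissible $G$-walks ending in $\SReach_{2r}[G,L,v]$, or a refinement of the binarisation that restricts the shape of sibling chains so that only ``safe'' unfoldings occur.
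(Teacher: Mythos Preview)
Your two-stage plan (elimination-type forest, then first-child/next-sibling binarisation) matches the paper's shape, but there is a genuine gap in the first stage that breaks the $\col_{2r}$ bound. You take the edges of $T'$ to be $\SReach_r$-edges, each witnessed by a $G$-path of length up to $r$; hence a sibling edge in the binarisation unfolds to a $G$-walk of length up to $2r$, and a parent edge to one of length up to $r$. A length-$r$ path $P$ in $G+F$ may traverse many such $F$-edges, so after unfolding you obtain a $G$-walk whose length is of order $r^2$, not $2r$. Truncating at the first dip below $v$ does not save this: the $T'$-parent $u$ of an internal vertex $a >_L v$ is the $L$-minimum of $\SReach_r[G,L,a]$, and nothing forces $u <_L v$, so the unfolded walk may pass through several complete $F$-edge detours before first dipping. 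Your argument therefore yields at best $y_P\in\SReach_{cr^2}[G,L,v]$ for some constant $c$, not $\SReach_{2r}$. (Secondary issues: even for connected $G$ your $T'$ can have several roots, since $\SReach_r[G,L,w]\setminus\{w\}$ can be empty for vertices $w$ other than the $L$-minimum; and the ``at most two paths per target $y_P$'' claim needs its own structural argument, which you have not supplied.)

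The paper avoids the length blowup by making the intermediate tree a spanning \emph{subgraph} of $G$, so that every new sibling edge unfolds to a detour of length exactly~$2$. For connected $G$ one takes the elimination tree $S=S(G,L)$ and, for each vertex $u$ and each $S$-child $c$ of $u$, picks an actual $G$-neighbour $w_{u,c}$ of $u$ inside the subtree $G_c$; these edges form a spanning tree $U\subseteq G$. Binarising $U$ exactly as you describe gives $T=(V(G),F)$, and now every edge $xy\in F\setminus E(G)$ has an ``origin'' $a=a(xy)$ with $xa,ya\in E(G)$, so the unfolding is the length-$2$ walk $x{-}a{-}y$ and a length-$r$ path in $G+F$ becomes a $G$-walk of length at most $2r$. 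The obstacle you flag---that $a$ may be $<_L v$---is handled by truncating at the first vertex leaving the $S$-subtree $G_v$ (necessarily a strict $S$-ancestor of $v$, hence in $\SReach_{2r}[G,L,v]$). The factor~$2$ then comes from an elimination-tree fact your $T'$ lacks: for each strict $S$-ancestor $w$ of $v$, at most one $U$-child of $w$ lies in $G_v$, and that child is incident to at most two sibling edges of $F$; together with at most one path ending in $w$, at most two paths of $\Pp$ have $w$ as their truncation point. Finally, the additive $+2$ in the statement arises from linking connected components, not from upward $F$-edges at $v$.
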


The main step towards this goal is the corresponding statement for
connected graphs, as expressed in the following lemma.

\begin{lemma}\label{lem:connected}
  There exists an algorithm that, given a connected graph $G$,
    $r\in\N$, and ordering $L$ of $V(G)$, computes a set of
  unordered pairs $F\subseteq\binom{V(G)}{2}$ such that the graph
  $T=(V(G),F)$ is a tree of maximum degree at most $3$ and
  \[\adm_r(G+F,L)\le 2\cdot\col_{2r}(G,L).\]
  The running time of the algorithm is $\Oh((n+m)\cdot\alpha(m))$, where
  $m=|E(G)|$ and $n=|V(G)|$.
\end{lemma}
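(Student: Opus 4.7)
The plan is to construct $T$ in two stages. First, compute a spanning tree $T_0\subseteq G$ by BFS rooted at the $L$-minimum vertex $v_1$; this takes $\Oh(n+m)$ time since $G$ is connected. Second, convert $T_0$ into a tree $T$ of maximum degree at most three via the \emph{first-child, next-sibling} transformation in which the $T_0$-children of each parent $p$ are sorted in increasing order according to $L$. Concretely, if $p$ has $T_0$-children $c_1<_L c_2<_L\cdots<_L c_k$, the tree $T$ retains the edge $p\mathrel{-}c_1$ and adds the chain $c_1\mathrel{-}c_2\mathrel{-}\cdots\mathrel{-}c_k$. Every vertex then has at most three $T$-neighbours---at most one $T$-parent (its $T_0$-parent if it is a first child, and otherwise its previous $L$-sibling), at most one first $T_0$-child, and at most one next $L$-sibling---so $T$ has maximum degree at most three. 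Set $F:=E(T)$.

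The admissibility bound rests on the observation that every edge of $F$ either is a $T_0$-edge (and hence an edge of $G$) or is a sibling edge $(c,c')$ that can be replaced by the length-$2$ path $c\mathrel{-}p\mathrel{-}c'$ in $G$ through the common $T_0$-parent $p$. Fix a vertex $v$ and a family $\Pp$ of vertex-disjoint paths of length at most $r$ in $G+F$ witnessing $\adm_r(G+F,L,v)$. For each $P_j\in\Pp$, expand every sibling edge appearing in $P_j$ into the corresponding length-$2$ $G$-path, obtaining a walk $W_j$ in $G$ of length at most $2r$ that starts at $v$ and ends at the endpoint $u_j<_L v$ of $P_j$. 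Let $\psi(P_j)$ be the first vertex of $W_j$ strictly after $v$ that is $\leq_L v$; extracting a simple subpath from the prefix of $W_j$ yields $\psi(P_j)\in\SReach_{2r}[G,L,v]$. It then suffices to show that this assignment $\psi$ is at most $2$-to-$1$, as this would give $|\Pp|\leq 2\,|\SReach_{2r}[G,L,v]|\leq 2\col_{2r}(G,L)$.

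The main obstacle lies in bounding each fibre $\psi^{-1}(q)$ by two. At most one $P_j\in\psi^{-1}(q)$ can have $q$ as its original endpoint (by vertex-disjointness of $\Pp$), so the crux is to show that at most one further path in $\Pp$ has $q$ arise in its expansion as the common $T_0$-parent of an encountered sibling edge. If two such paths both used sibling edges at $q$, those edges would be vertex-disjoint and would form a matching in the sibling chain at $q$; a priori this matching could have size as large as $\lfloor\deg_{T_0}(q)/2\rfloor$, which is exactly where control is required. I expect the resolution to combine the $L$-ordering of siblings (so that sibling-chain edges locally respect $L$) with a refinement of $T_0$ that caps the effective branching at each $q<_L v$ relevant for~$\Pp$, using the minimality in the definition of $\psi$ to derive a contradiction if two disjoint matched edges arise; any small blow-up beyond this cap should be absorbable into the factor of~$2$. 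Running-time-wise, the BFS and the first-child--next-sibling transformation are both linear, and the extra $\alpha(m)$ factor stems from auxiliary union-find bookkeeping used during the refined construction of $T_0$.
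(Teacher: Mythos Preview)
Your two-stage construction and the walk-expansion map $\psi$ mirror the paper's argument, and you have correctly located the crux: bounding $|\psi^{-1}(q)|$ by~$2$. But the gap you flag is real and is not closed by $L$-sorting the siblings or by any local ``refinement'' of a BFS tree. Concretely, take $v_1 <_L q <_L u_2 <_L \cdots <_L u_{2k} <_L v <_L c_1 <_L \cdots <_L c_{2k}$ with $G$-edges $v_1q$, $v_1v$, $v_1u_{2i}$, $qc_j$ (all $j$), $vc_{2i-1}$, and $u_{2i}c_{2i}$. A BFS from $v_1$ processed in $L$-order makes every $c_j$ a $T_0$-child of $q$, so $F$ contains the sibling chain $c_1c_2\cdots c_{2k}$. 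For $r=3$ the $k$ paths $P_i=(v,c_{2i-1},c_{2i},u_{2i})$ in $G+F$ are pairwise internally disjoint with all internal vertices $>_L v$, and expanding each sibling edge gives $\psi(P_i)=q$ for every $i$. Yet one checks $\col_6(G,L)=3$, so $\adm_3(G+F,L)\ge k$ while $2\col_{6}(G,L)=6$: the bound fails once $k\ge 7$. Nothing in a BFS layering prevents a single $q<_L v$ from having arbitrarily many $T_0$-children above $v$ in $L$, and that is exactly what makes the fibre blow up.

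The paper's resolution is to abandon BFS and let the \emph{elimination tree} $S=S(G,L)$ guide the first spanning tree. For each vertex $u$ and each $S$-child $w$ of $u$, one selects a single $G$-neighbour of $u$ inside the $S$-subtree $G_w$; these edges form a spanning tree $U$ (built with Union--Find, which is where the $\alpha(m)$ genuinely comes from), and only then is the first-child/next-sibling transform applied. The key structural gain is that the $U$-children of any $q$ that are $S$-descendants of $q$ lie in pairwise distinct $S$-subtrees below~$q$. One then redefines $\psi(P)$ as the first vertex on the expanded walk that leaves the $S$-subtree $G_v$; any such $\psi(P)=q$ is a strict $S$-ancestor of $v$, and since $G_v$ sits inside a single $S$-subtree below $q$, exactly one $U$-child of $q$ lies in $G_v$. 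Only two sibling-chain edges at $q$ touch that child, so at most one path of $\Pp$ can produce $q$ by expansion, and at most one more can have $q$ as its endpoint. This interaction between the spanning tree and the elimination tree is the missing idea; it does not emerge from BFS distance layers.
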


We first show that Theorem~\ref{thm:main-technical} follows easily from
Lemma~\ref{lem:connected}.

\begin{proof}[Proof of Theorem~\ref{thm:main-technical}, assuming
  Lemma~\ref{lem:connected}] \
  Let $G$ be a (possibly disconnected) graph, and let $G_1,\ldots,G_p$ be
  its connected components. For $i=1,\ldots,p$,
  let $L_i$ be the ordering obtained by restricting $L$ to $V(G_i)$.
  Obviously $\col_{2r}(G_i,L_i)\le \col_{2r}(G,L)$.

  Apply the algorithm of Lemma~\ref{lem:connected} to $G_i$ and $L_i$,
  obtaining a set of unordered pairs $F_i$ such that $T_i=(V(G_i),F_i)$ is
  a tree of maximum degree at most $3$ and
  \[\adm_r(G_i+F_i,L_i)\le 2\cdot\col_{2r}(G_i,L_i)\le
  2\cdot\col_{2r}(G,L).\]
  For $i=1,\ldots,p$, select a vertex $v_i$ of $G_i$ with degree at
  most $1$ in $T_i$; since $T_i$ is a tree, such a vertex exists. Define
  \[F= \{v_1v_2,v_2v_3,\ldots,v_{p-1}v_p\}\cup\bigcup_{i=1}^p F_i.\]
  Obviously we have that $T=(V(G),F)$ is a tree. Observe that it has
  maximum degree at most $3$. This is because each vertex $v_i$ had degree
  at most $1$ in its corresponding tree $T_i$, and hence its degree can
  grow to at most $3$ after adding edges $v_{i-1}v_i$ and $v_iv_{i+1}$. By
  Lemma~\ref{lem:connected}, the construction of each $T_i$ takes time
  $\Oh((n_i+m_i)\cdot\alpha(m_i))$, where $m_i=|E(G_i)|$. It follows that the construction of $T$ takes time $\Oh((n+m)\cdot\alpha(m))$.

  It remains to argue that $\adm_r(G+F,L)\le 2+2\col_{2r}(G,L)$. Take any
  vertex $u$ of $G$, say $u\in V(G_i)$, and let $\Pp$ be a set of paths of
  length at most $r$ that start in $u$, are pairwise vertex-disjoint (apart
  from $u$), and end in vertices smaller than $u$ in~$L$, while internally
  traversing only vertices larger than $u$ in~$L$. Observe that at most two
  of the paths from $\Pp$ can use any of the edges from the set
  $\{v_1v_2,v_2v_3,\ldots,v_{p-1}v_p\}$, since any such path has to use
  either $v_{i-1}v_i$ or $v_iv_{i+1}$. The remaining paths are entirely
  contained in $G_i+F_i$, and hence their number is bounded by
  $\adm_r(G_i+F_i,L_i)\le 2\col_{2r}(G,L)$. The theorem follows.
\end{proof}

In the remainder of this section we focus on Lemma~\ref{lem:connected}.

\begin{proof}[Proof of Lemma~\ref{lem:connected}] \
  We begin our proof by showing how to compute the set $F$. This will be a
  two step process, starting with an \emph{elimination tree}. For a
  connected graph~$G$ and an ordering~$L$ of $V(G)$, we define the
  \emph{(rooted) elimination tree} $S(G,L)$ of $G$ imposed by~$L$ (cf.~\cite{bodlaender1998rankings,schaffer1989optimal}) as follows. If
  $V(G)=\{v\}$, then the rooted elimination tree $S(G,L)$ is just the tree
  on the single vertex $v$. Otherwise, the root of $S(G,L)$ is the vertex
  $w$ that is the smallest with respect to the ordering~$L$ in~$G$. For
  each connected component $C$ of $G-w$ we construct a rooted elimination
  tree $S(C,L|_{V(C)})$, where $L|_{V(C)}$ denotes the restriction of $L$
  to the vertex set of $C$. These rooted elimination trees are attached
  below $w$ as subtrees by making their roots into children of $w$. Thus,
  the vertex set of the elimination tree $S(G,L)$ is always equal to the
  vertex set of $G$. See Figure~\ref{fig:G_S_U} for an illustration. The
  solid black lines are the edges of $G$; the dashed blue lines are the
  edges of $S$. The ordering $L$ is given by the numbers written in the
  vertices.

  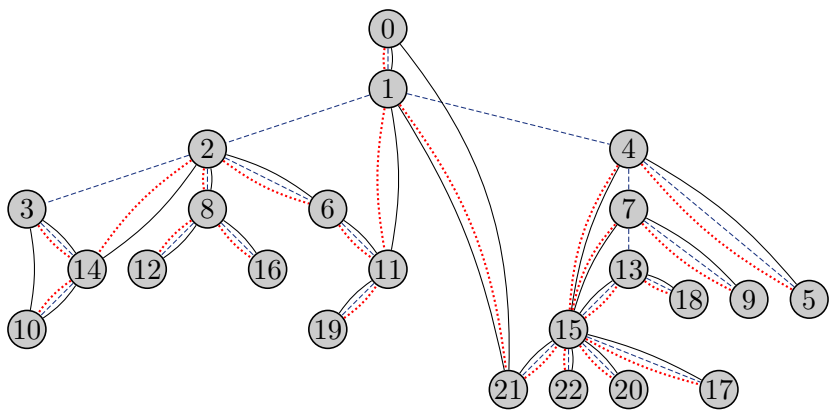
\begin{figure*}
    \centering
    \begin{tikzpicture}[scale=0.8]
      \node[vertex] (0) at (0,0){$0$};
      \node[vertex] (1) at (0,-1){$1$};
      \node[vertex] (2) at (-3,-2){$2$};
      \node[vertex] (3) at (-6,-3){$3$};
      \node[vertex] (4) at (4,-2){$4$};
      \node[vertex] (5) at (7,-4.5){$5$};
      \node[vertex] (6) at (-1,-3){$6$};
      \node[vertex] (7) at (4,-3){$7$};
      \node[vertex] (8) at (-3,-3){$8$};
      \node[vertex] (9) at (6,-4.5){$9$};
      \node[vertex] (10) at (-6,-5){$10$};
      \node[vertex] (19) at (-1,-5){$19$};
      \node[vertex] (12) at (-4,-4){$12$};
      \node[vertex] (13) at (4,-4){$13$};
      \node[vertex] (14) at (-5,-4){$14$};
      \node[vertex] (15) at (3,-5){$15$};
      \node[vertex] (16) at (-2,-4){$16$};
      \node[vertex] (17) at (5.5,-6){$17$};
      \node[vertex] (18) at (5,-4.5){$18$};
      \node[vertex] (11) at (0,-4){$11$};
      \node[vertex] (20) at (4,-6){$20$};
      \node[vertex] (21) at (2,-6){$21$};
      \node[vertex] (22) at (3,-6){$22$};

      \draw[S] (0) -- (1);
      \draw[S] (1) -- (2);
      \draw[S] (1) -- (4);
      \draw[S] (2) -- (3);
      \draw[S] (2) -- (6);
      \draw[S] (4) -- (5);
      \draw[S] (14) -- (10);
      \draw[S] (3) -- (14);
      \draw[S] (2) -- (8);
      \draw[S] (8) -- (12);
      \draw[S] (8) -- (16);
      \draw[S] (11) -- (19);
      \draw[S] (6) -- (11);
      \draw[S] (4) -- (7);
      \draw[S] (7) -- (13);
      \draw[S] (7) -- (9);
      \draw[S] (13) -- (15);
      \draw[S] (13) -- (18);
      \draw[S] (15) -- (17);
      \draw[S] (15) -- (21);
      \draw[S] (15) -- (22);
      \draw[S] (15) -- (20);

      \draw[G] (0) to (1);
      \draw[G] (1) to (11);
      \draw[G] (6) to (11);
      \draw[G] (19) to (11);
      \draw[G] (2) to (6);
      \draw[G] (2) to (8);
      \draw[G] (8) to (12);
      \draw[G] (8) to (16);
      \draw[G] (2) to (14);
      \draw[G] (3) to (14);
      \draw[G] (3) to (10);
      \draw[G] (1) to (21);
      \draw[G] (21) to (15);
      \draw[G] (15) to (22);
      \draw[G] (15) to (20);
      \draw[G,bend left=7] (15) to (17);
      \draw[G] (15) to (13);
      \draw[G] (15) to (7);
      \draw[G] (15) to (4);
      \draw[G] (4) to (5);
      \draw[G] (7) to (9);
      \draw[G] (13) to (18);
      \draw[G] (14) to (10);
      \draw[G,bend left=20] (0) to (21);

      \draw[U] (0) to (1);
      \draw[U] (1) to (11);
      \draw[U] (6) to (11);
      \draw[U] (19) to (11);
      \draw[U] (2) to (6);
      \draw[U] (2) to (8);
      \draw[U] (8) to (12);
      \draw[U] (8) to (16);
      \draw[U] (2) to (14);
      \draw[U] (3) to (14);
      \draw[U] (14) to (10);
      \draw[U,bend left=15] (1) to (21);
      \draw[U] (21) to (15);
      \draw[U] (15) to (22);
      \draw[U] (15) to (20);
      \draw[U,bend right=7] (15) to (17);
      \draw[U,bend left=15] (15) to (4);
      \draw[U,bend left=15] (15) to (7);
      \draw[U] (15) to (13);
      \draw[U] (13) to (18);
      \draw[U] (7) to (9);
      \draw[U] (4) to (5);
    \end{tikzpicture}
    \caption[A graph $G$, the elimination tree $S$ and a tree $U$]%
    {A graph $G$ (solid black lines), the elimination tree $S$ (dashed blue
      lines), and a tree $U$ (dotted red lines).}
  \label{fig:G_S_U}
\end{figure*}
  
Let $S=S(G,L)$ be the rooted elimination tree of $G$ imposed by $L$. For a
vertex $u$, by $G_u$ we denote the subgraph of $G$ induced by all
descendants of $u$ in $S$, including $u$. The following properties follow
easily from the construction of a rooted elimination tree.

\begin{claim}\label{prop:eltree}
  The following assertions hold.
  \begin{enumerate}
  \item For each $u\in V(G)$, the subgraph $G_u$ is
    connected.\label{item:connected}
  \item Whenever a vertex $u$ is an ancestor of a vertex $v$ in $S$, we
    have $u\le_Lv$.\label{item:SL}
  \item For each $uv\in E(G)$ with $u<_Lv$, $u$ is an ancestor of~$v$ in
    $S$.\label{item:acestor}
  \item For each $u\in V(G)$ and each child $v$ of $u$ in $S$, $u$ has at
    least one neighbour in $V(G_v)$.\label{item:exists_child}
  \end{enumerate}
\end{claim}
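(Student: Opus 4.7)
The plan is to prove all four assertions simultaneously by induction on $|V(G)|$, following the recursive definition of the elimination tree. In the base case $|V(G)|=1$ all four statements are immediate, so assume $|V(G)|\ge 2$. Let $w$ be the $L$-smallest vertex of $G$, which by construction is the root of $S=S(G,L)$, and let $C_1,\ldots,C_k$ be the connected components of $G-w$. For each $i$, the rooted elimination tree $S(C_i,L|_{V(C_i)})$ has its root attached as a child of $w$, and by the inductive hypothesis it already satisfies the four assertions.

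Assertion \ref{item:SL} is essentially built in: the root $w$ is $L$-smallest by choice, and within each subtree $S(C_i,L|_{V(C_i)})$ induction applies directly. For assertion \ref{item:connected}, if $u=w$ then $G_u=G$ is connected by assumption; if $u\ne w$, then the descendants of $u$ in $S$ coincide with its descendants in $S(C_i,L|_{V(C_i)})$ for the unique $i$ with $u\in V(C_i)$, and since no edge of $G$ between vertices of $V(C_i)$ uses $w$, the induced subgraph $G_u$ equals the analogous subgraph inside $C_i$, which is connected by induction.

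For assertion \ref{item:acestor}, I would do a short case split. Suppose $uv\in E(G)$ with $u<_L v$; since $w$ is $L$-smallest we have $w\le_L u$. If $u=w$, then $v\ne w$ lies in some $C_i$ and is a descendant of the root of $S(C_i,L|_{V(C_i)})$, which is a child of $w=u$, so $u$ is an ancestor of $v$. If $u\ne w$, both $u$ and $v$ are vertices of $G-w$, and since $uv$ is an edge they lie in the same component $C_i$; the inductive hypothesis applied to $C_i$ yields that $u$ is an ancestor of $v$ in $S(C_i,L|_{V(C_i)})$, hence also in $S$. Assertion \ref{item:exists_child} follows along the same lines: if $u=w$ and $v$ is a child of $w$, then $v$ is the root of $S(C_i,L|_{V(C_i)})$ for some $i$ and $V(G_v)=V(C_i)$; since $G$ is connected and $C_i$ is a component of $G-w$, there must be at least one edge from $w$ to $V(C_i)$. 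If $u\ne w$, apply induction inside the component containing $u$.

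I do not expect any serious obstacle here; the lemma is a routine unpacking of the recursive definition. The only spot that requires more than a line is assertion \ref{item:acestor}, where one must carefully observe that an edge $uv$ of $G$ with $u,v\ne w$ cannot cross between two distinct components $C_i$ and $C_j$, so that induction can be invoked inside a single component. Once that observation is in place, the argument for all four parts proceeds uniformly from the recursive structure of $S(G,L)$.
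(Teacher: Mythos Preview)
Your proof is correct. The paper takes a slightly different route: it declares assertions~(\ref{item:connected}) and~(\ref{item:SL}) to be immediate from the construction, proves~(\ref{item:acestor}) non-inductively via a lowest-common-ancestor argument (if $u$ and $v$ were not in the ancestor--descendant relation, they would lie in different components of $G_w-w$ for their lowest common ancestor $w$, contradicting $uv\in E(G)$), and deduces~(\ref{item:exists_child}) directly from~(\ref{item:connected}) by observing that $G_v$ is a connected component of $G_u-u$ and that connectivity of $G_u$ forces an edge from $u$ into $V(G_v)$. Your uniform induction on $|V(G)|$ is a perfectly good alternative; it is a bit more verbose but has the advantage of treating all four items in one pass, whereas the paper's argument is shorter but chains the assertions (using~(\ref{item:SL}) inside the proof of~(\ref{item:acestor}) and~(\ref{item:connected}) inside the proof of~(\ref{item:exists_child})). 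Both approaches are routine unpackings of the recursive definition and neither offers a real advantage over the other.
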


\begin{proof}
  Assertions~(\ref{item:connected}) and~(\ref{item:SL}) follow immediately
  from the construction of $S$. For assertion~(\ref{item:acestor}), suppose
  that $u$ and $v$ are not bound by the ancestor-descendant relation in
  $S$, and let $w$ be their lowest common ancestor in $S$. Then $u$ and $v$
  would be in different connected components of $G_w-w$, hence $uv$ could
  not be an edge; a contradiction. It follows that $u$ and $v$ are bound by
  the ancestor-descendant relation, implying that~$u$ is an ancestor of
  $v$, due to $u<_L v$ and assertion~(\ref{item:SL}). Finally, for
  assertion~(\ref{item:exists_child}), recall that by
  assertion~(\ref{item:connected}) we have that $G_u$ is connected, whereas
  by construction $G_v$ is one of the connected components of $G_u-u$.
  Hence, in $G$ there is no edge between $V(G_v)$ and any of the other
  connected components of $G_u-u$. If there was no edge between $V(G_v)$
  and $u$ as well, then there would be no edge between $V(G_v)$ and
  $V(G_u)\setminus V(G_v)$, contradicting the connectivity of~$G_u$.
\end{proof}

We now define a set of edges $B\subseteq E(G)$ as follows. For every vertex
$u$ of $G$ and every child $v$ of $u$ in~$S$, select an arbitrary neighbour
$w_{u,v}$ of $u$ in $G_v$; such a neighbour exists by
Claim~\ref{prop:eltree}\,(\ref{item:exists_child}). Then let~$B_u$ be the
set of all edges $uw_{u,v}$, for~$v$ ranging over the children of $u$
in~$S$. Define
\[B=\bigcup_{u\in V(G)} B_u.\]
Let $U$ be the graph spanned by all the edges in $B$, that is,
$U=(V(G),B)$. In Figure~\ref{fig:G_S_U}, the edges of $U$ are represented
by the dotted red lines.

\begin{claim}\label{lem:is-tree}
  The graph $U$ is a tree.
\end{claim}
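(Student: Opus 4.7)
The plan is to prove by induction on the elimination tree $S$ that for every $u\in V(G)$, the subgraph $U_u$ of $U$ on vertex set $V(G_u)$ with edge set $\bigcup_{u'\in V(G_u)} B_{u'}$ is a spanning tree of $V(G_u)$. Applying this to the root of $S$ then yields the claim, since the root's subtree in $S$ covers all of $V(G)$.

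For the base case, when $u$ is a leaf of $S$, the graph $G_u$ consists of just $u$, and $B_u=\emptyset$, so $U_u$ is the single-vertex tree. For the inductive step, suppose $u$ has children $v_1,\ldots,v_k$ in $S$, and assume by induction that each $U_{v_i}$ is a tree on $V(G_{v_i})$. By construction of the elimination tree, the vertex sets $V(G_{v_1}),\ldots,V(G_{v_k})$ are pairwise disjoint (they are vertex sets of distinct connected components of $G_u-u$), and together with $\{u\}$ they partition $V(G_u)$. The set $B_u$ contributes exactly $k$ edges $uw_{u,v_1},\ldots,uw_{u,v_k}$, where $w_{u,v_i}\in V(G_{v_i})$ by the choice of $w_{u,v_i}$. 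Thus $U_u$ is obtained from the disjoint union of the trees $U_{v_1},\ldots,U_{v_k}$ by adding the new vertex $u$ and attaching it via a single edge to each $U_{v_i}$. This is a standard tree construction, so $U_u$ is connected, and since
\[
|E(U_u)| \;=\; k+\sum_{i=1}^k |E(U_{v_i})| \;=\; k+\sum_{i=1}^k\bigl(|V(G_{v_i})|-1\bigr) \;=\; |V(G_u)|-1,
\]
$U_u$ is a tree.

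The only substantive thing to verify is that each edge in $B_u$ really attaches $u$ to a distinct component $U_{v_i}$, which is immediate since $w_{u,v_i}$ lies in $V(G_{v_i})$ and the sets $V(G_{v_i})$ are pairwise disjoint; this is where the elimination-tree properties recorded in \cref{prop:eltree} (in particular items~(\ref{item:connected}) and~(\ref{item:exists_child})) come in. I do not anticipate any genuine obstacle here: the lemma is a structural consequence of the recursive definition of $S(G,L)$, and the main work of the section will lie in the subsequent bound on $\adm_r(G+F,L)$, not in this claim.
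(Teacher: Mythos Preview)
Your proposal is correct and follows essentially the same approach as the paper: a bottom-up induction on the elimination tree $S$, showing that each $U_u$ is built from the inductively-assumed trees $U_{v_i}$ by attaching the new vertex $u$ via the edges of $B_u$. The only cosmetic difference is that the paper separates the global edge count $|B|\le |V(G)|-1$ from the inductive connectivity argument, whereas you fold both into the induction; the content is the same.
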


\begin{proof}
  Observe that for each $u\in V(G)$, the number of edges in $B_u$ is equal
  to the number of children of $u$ in $S$. Since every vertex of $G$ has
  exactly one parent in $S$, apart from the root of~$S$, we infer that
  \[|B|\le \sum_{u\in V(G)}|B_u|= |V(G)|-1.\]
  Therefore, since $B$ is the edge set of $U$, to prove that~$U$ is a tree
  it suffices to prove that $U$ is connected. To this end, we prove by a
  bottom-up induction on $S$ that for each $u\in V(G)$, the subgraph
  $U_u=\bigl(V(G_u),B\cap \binom{V(G_u)}{2}\bigr)$ is connected. Note that
  for the root~$w$ of~$S$ this claim is equivalent to $U_w=U$ being
  connected.

  Take any $u\in V(G)$, and suppose by induction that for each child $v$
  of~$u$ in $S$, the subgraph~$U_v$ is connected. Observe that~$U_u$ can be
  constructed by taking the vertex $u$ and, for each child $v$ of~$u$
  in~$S$, adding the connected subgraph~$U_v$ and connecting~$U_v$
  to $u$ via the edge $uw_{u,v}\in B_u$. Thus, $U_u$ constructed in
  this manner is also connected, as claimed.
\end{proof}

Next, we verify that $U$ can be computed within the claimed running time.
Note that we do not need to compute $S$, as we use it only in the analysis.
We remark that this is the only place in the algorithm where the running
time is not linear.

\begin{claim}
  The tree $U$ can be computed in time $\Oh(m\cdot\alpha(m))$.
\end{claim}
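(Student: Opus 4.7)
The plan is to avoid computing the elimination tree $S$ explicitly, and instead to produce the edge set $B$ of $U$ via a single Kruskal-style sweep over vertices in decreasing order of $L$, backed by a union-find data structure. The $\alpha(m)$ factor in the target running time is a strong hint that union-find is the right tool, and it matches the ``this is the only place where the algorithm is not linear'' remark preceding the claim.

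First, I would establish the following structural fact: for every $u\in V(G)$, the vertex set $V(G_u)$ of descendants of $u$ in $S$ coincides with the connected component of $u$ in the subgraph of $G$ induced by $\{w\mid w\ge_L u\}$. One direction follows directly from Claim~\ref{prop:eltree}\,(\ref{item:connected}) and (\ref{item:SL}); for the other direction I would induct along a path from $u$, using Claim~\ref{prop:eltree}\,(\ref{item:acestor}) and (\ref{item:SL}) to argue that every vertex reached from $u$ through vertices $\ge_L u$ must lie in the subtree rooted at $u$ in $S$. Combined with Claim~\ref{prop:eltree}\,(\ref{item:exists_child}), this yields a bijection between the children $v_1,\ldots,v_k$ of $u$ in $S$ and the connected components of the subgraph of $G$ induced by $\{w\mid w>_L u\}$ that are adjacent to $u$ in $G$, namely $V(G_{v_1}),\ldots,V(G_{v_k})$.

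The algorithm is then natural. I would preprocess the adjacency lists of $G$ so that each edge $uv$ is stored in the ``upward'' list of its $L$-smaller endpoint (in $\Oh(n+m)$ time); initialize a union-find on $V(G)$ with singletons; and iterate over vertices in decreasing order of $L$. When processing $u$, scan its upward list; for each edge $uv$, find the root of $v$'s component, and use a temporary scratch array to detect the first occurrence of each distinct root, adding the triggering edge $uv$ to $B$; after the scan, union $u$ with each collected root and clear the scratch. Correctness will follow from the invariant that immediately before $u$ is processed, the union-find components are exactly the connected components of the subgraph of $G$ induced by $\{w\mid w>_L u\}$: the structural observation then implies that the edges added while processing $u$ are exactly one per child of $u$ in $S$, each of the required form $uw_{u,v_i}$ with $w_{u,v_i}\in V(G_{v_i})$, and after the unions the invariant is restored with $u$'s new component equal to $V(G_u)$.

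For the running time, I would account $\Oh(m)$ find operations (each edge is examined only from its smaller endpoint) and $\Oh(n)$ unions, yielding $\Oh((n+m)\alpha(m))=\Oh(m\cdot\alpha(m))$ by the standard bound for union-find with path compression and union by rank, using $m\ge n-1$ since $G$ is connected. The main conceptual obstacle is the structural lemma identifying $V(G_u)$ with the connected component of $u$ in the subgraph induced by $\{w\mid w\ge_L u\}$, because it is exactly what lets us replace any knowledge of the shape of $S$ by reachability queries that union-find can answer incrementally; once this is pinned down, the rest is a routine union-find implementation.
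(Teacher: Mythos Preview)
Your proposal is correct and follows essentially the same approach as the paper: both avoid constructing $S$ explicitly and instead build $B$ by a union--find sweep that, when processing a vertex $u$, merges $u$ with the components of its $L$-larger neighbours and records one edge per newly encountered component. Your explicit identification of $V(G_u)$ with the connected component of $u$ in $G[\{w\mid w\ge_L u\}]$ spells out precisely what the paper leaves to ``a straightforward induction''.
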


\begin{proof}
  We use the classic Union~\&~Find data structure on the set $V(G)$.
  Recall that in this data structure, at each moment we maintain a
  partition of $V(G)$ into a number of equivalence classes, each with a
  prescribed representative, where initially each vertex is in its own
  class. The operations are a) for a given $u\in V(G)$, find the
  representative of the class to which $u$ belongs, and b) merge two
  equivalence classes into one. Tarjan~\cite{Tarjan75} gave an
  implementation of this data structure where both operations run in
  amortised time $\alpha(k)$, where $k$ is the total number of operations
  performed.

  Having initialised the data structure, we process the vertex ordering $L$
  from the smallest end, starting with an empty prefix. For an
  already processed prefix $X$ of $L$, the maintained classes
  within~$X$ will represent the partition of $G[X]$ into connected
  components, while every vertex outside~$X$ will still be in its own
  equivalence class. Let us consider one step, when we process a vertex
  $u$, thus moving from a prefix $X$ to the prefix
  $X'=X\cup \{u\}$. Iterate through all the neighbours of~$u$, and for each
  neighbour $v$ of $u$ such that $u<_L v$, verify whether the equivalence
  classes of $u$ and~$v$ are different. If this is the case, merge these
  classes and add the edge $uv$ to $B$. A straightforward induction shows
  that the claimed invariant holds. Moreover, when processing $u$ we add
  exactly the edges of $B_u$ to $B$, hence at the end we obtain the set $B$
  and the tree $U=(V(G),B)$.

  For the running time analysis, observe that in total we perform $\Oh(m)$
  operations on the data structure, thus the running time is
  $\Oh(m\alpha(m))$. We remark that we assume that the ordering $L$ is
  given as a bijection between $V(G)$ and numbers $\{1,2,\ldots,|V(G)|\}$,
  thus for two vertices $u,v$ we can check in constant time whether
  $u<_L v$.
\end{proof}

By Lemma~\ref{lem:is-tree} we have that $U$ is a spanning tree
of~$G$. However its maximum degree may be too large. The idea is to
use $U$ to construct a new tree~$T$ with maximum degree at most~$3$ (on the
same vertex set $V(G)$). The way we constructed~$U$ will enable us to argue
that adding the edges of~$T$ to the graph $G$ does not change the
generalised colouring numbers too much.

Give $U$ the same root as the elimination tree $S$. From now on we treat
$U$ as a rooted tree, which imposes parent-child and ancestor-descendant
relations in $U$ as well. Note that the parent-child and
ancestor-descendant relations in~$S$ and in $U$ may be completely
different. For instance, consider vertices $4$ and $15$ in the example from
Figure~\ref{fig:G_S_U}: $4$ is a child of $15$ in $U$, and an ancestor of
$15$ in $S$.

\begin{figure*}
  \centering
  \begin{tikzpicture}[scale=0.8]
    \node[vertex] (0) at (0,0){$0$};
    \node[vertex] (1) at (0,-1){$1$};
    \node[vertex] (2) at (-3,-2){$2$};
    \node[vertex] (3) at (-6,-3){$3$};
    \node[vertex] (4) at (4,-2){$4$};
    \node[vertex] (5) at (7,-4.5){$5$};
    \node[vertex] (6) at (-1,-3){$6$};
    \node[vertex] (7) at (4,-3){$7$};
    \node[vertex] (8) at (-3,-3){$8$};
    \node[vertex] (9) at (6,-4.5){$9$};
    \node[vertex] (10) at (-6,-5){$10$};
    \node[vertex] (19) at (-1,-5){$19$};
    \node[vertex] (12) at (-4,-4){$12$};
    \node[vertex] (13) at (4,-4){$13$};
    \node[vertex] (14) at (-5,-4){$14$};
    \node[vertex] (15) at (3,-5){$15$};
    \node[vertex] (16) at (-2,-4){$16$};
    \node[vertex] (17) at (5.5,-6){$17$};
    \node[vertex] (18) at (5,-4.5){$18$};
    \node[vertex] (11) at (0,-4){$11$};
    \node[vertex] (20) at (4,-6){$20$};
    \node[vertex] (21) at (2,-6){$21$};
    \node[vertex] (22) at (3,-6){$22$};

    \draw[G] (0) to (1);
    \draw[G] (1) to (11);
    \draw[G] (6) to (11);
    \draw[G] (19) to (11);
    \draw[G] (2) to (6);
    \draw[G] (2) to (8);
    \draw[G] (8) to (12);
    \draw[G] (8) to (16);
    \draw[G] (2) to (14);
    \draw[G] (3) to (14);
    \draw[G] (3) to (10);
    \draw[G] (1) to (21);
    \draw[G] (21) to (15);
    \draw[G] (15) to (22);
    \draw[G] (15) to (20);
    \draw[G,bend left=7] (15) to (17);
    \draw[G] (15) to (13);
    \draw[G] (15) to (7);
    \draw[G,bend left=30] (15) to (4);
    \draw[G] (4) to (5);
    \draw[G] (7) to (9);
    \draw[G] (13) to (18);
    \draw[G] (14) to (10);
    \draw[G,bend left=20] (0) to (21);

    \draw[U] (0) to (1);
    \draw[U] (1) to (11);
    \draw[U] (6) to (11);
    \draw[U] (19) to (11);
    \draw[U] (2) to (6);
    \draw[U] (2) to (8);
    \draw[U] (8) to (12);
    \draw[U] (8) to (16);
    \draw[U] (2) to (14);
    \draw[U] (3) to (14);
    \draw[U] (14) to (10);
    \draw[U,bend left=15] (1) to (21);
    \draw[U] (21) to (15);
    \draw[U] (15) to (22);
    \draw[U] (15) to (20);
    \draw[U,bend right=7] (15) to (17);
    \draw[U,bend left=15] (15) to (4);
    \draw[U,bend left=15] (15) to (7);
    \draw[U] (15) to (13);
    \draw[U] (13) to (18);
    \draw[U] (7) to (9);
    \draw[U] (4) to (5);

    \draw[T] (0) to (1);
    \draw[T] (1) to (11);
    \draw[T] (6) to (11);
    \draw[T] (6) to (19);
    \draw[T] (2) to (6);
    \draw[T] (14) to (8);
    \draw[T] (8) to (12);
    \draw[T] (12) to (16);
    \draw[T] (2) to (8);
    \draw[T] (3) to (14);
    \draw[T] (3) to (10);
    \draw[T] (11) to (21);
    \draw[T] (21) to (15);
    \draw[T] (13) to (17);
    \draw[T] (22) to (20);
    \draw[T] (20) to (17);
    \draw[T,bend left=23] (15) to (4);
    \draw[T] (4) to (7);
    \draw[T] (7) to (13);
    \draw[T] (13) to (18);
    \draw[T] (7) to (9);
    \draw[T] (4) to (5);
  \end{tikzpicture}
  \caption[The graph $G$, the tree $U$ and a tree $T$]%
  {A graph $G$ (solid black lines), a tree $U$ (dotted red lines), and the
    tree $T$ (thick dashed green lines).}
  \label{fig:G_U_T}
\end{figure*}
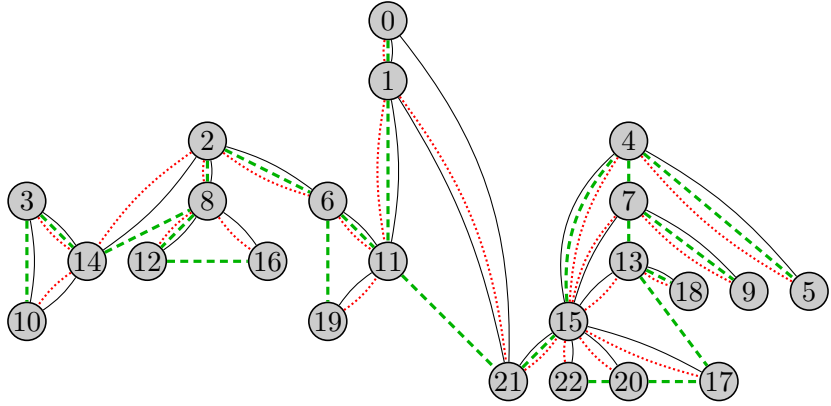

For every $u\in V(G)$, let $(x_1,\ldots,x_p)$ be an enumeration of the
children of $u$ in $U$, such that $x_i<_L x_j$ if $i<j$. Let
$F_u=\{ux_1,x_1x_2,x_2x_3,\ldots,x_{p-1}x_{p}\}$, and define
\[F= \bigcup_{u\in V(G)}F_u\quad \text{and}\quad T=(V(G),F).\]
See Figure~\ref{fig:G_U_T} for an illustration.

\begin{claim}\label{lem:correctness}
  The graph $T$ is a tree with maximum degree at most~$3$.
\end{claim}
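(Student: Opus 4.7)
The plan is to prove the two assertions separately: first, that $T=(V(G),F)$ is a tree, and second, that every vertex has degree at most $3$ in $T$. For the tree part, I would establish connectivity and count edges to conclude $|F|=n-1$. For the degree bound, I would do a local analysis at every vertex $v$, identifying all sets $F_u$ that can contribute an edge incident to $v$.

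For connectivity, fix the root $r$ of $U$ (which is also the root of $S$). Take any non-root vertex $v$, let $u$ be the parent of $v$ in $U$, and let $v=x_i$ in the enumeration $(x_1,\dots,x_p)$ of the $U$-children of $u$. Then the prefix $u,x_1,x_2,\dots,x_i$ of the path built by $F_u$ connects $v$ to $u$ in $T$. By induction on the depth in $U$, every vertex is connected to $r$ in $T$, so $T$ is connected. For the edge count, observe that $|F_u|$ is exactly the number of $U$-children of $u$ (the path $u,x_1,\dots,x_p$ has $p$ edges). Moreover the sets $F_u$ are pairwise disjoint, since every edge of $F_u$ is either of the form $ux_1$ or $x_{i-1}x_i$ where $x_{i-1},x_i$ are $U$-children of $u$; the endpoint $v$ that is ``newer'' in each such edge (namely $x_1$ or $x_i$) has $u$ as its unique $U$-parent, which pins down $u$ from the edge. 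Summing $|F_u|$ over $u\in V(G)$ therefore counts every non-root vertex exactly once, yielding $|F|=n-1$. A connected graph on $n$ vertices with $n-1$ edges is a tree.

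For the degree bound, fix a vertex $v$ and enumerate the edges of $F$ incident to $v$. There are two sources. First, edges from $F_v$: only the initial edge $vx_1$ of the path constructed at $v$ is incident to $v$, contributing at most one edge. Second, if $v$ is not the root of $U$, let $u$ be its $U$-parent and let $v=x_i$ in the $L$-ordered enumeration of $u$'s children. Then the edges of $F_u$ that touch $v$ are among $\{x_{i-1}x_i,\,x_ix_{i+1}\}$ (replacing $x_0x_1$ with $ux_1$ when $i=1$), giving at most two further edges incident to $v$. No other set $F_w$ can contribute: an edge of $F_w$ that contains $v$ would force $v$ to be either $w$ itself or a $U$-child of $w$, and $v$ has a unique $U$-parent. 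Hence the total degree of $v$ in $T$ is at most $1+2=3$, which completes the argument.

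The main obstacle is really just bookkeeping — making sure the counting of edges is correct and that the $F_u$'s are disjoint — and I do not foresee a genuinely hard step. Everything else follows cleanly from the fact that $U$ is a rooted tree and that $F_u$ replaces the $U$-star at $u$ by a Hamiltonian path through $\{u\}\cup\{x_1,\dots,x_p\}$.
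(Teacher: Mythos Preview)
Your proof is correct and follows essentially the same approach as the paper's: both argue connectivity plus an edge count to get a tree, and both bound the degree by noting that a vertex $v$ picks up at most one edge from $F_v$ and at most two from $F_u$, where $u$ is its $U$-parent. The only cosmetic difference is that the paper avoids proving disjointness of the sets $F_u$ by simply using the inequality $|F|\le\sum_u|F_u|=|V(G)|-1$, which together with connectivity already forces $T$ to be a tree.
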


\begin{proof}
  Observe that for each $u\in V(G)$ we have that $|F_u|$ is equal to the
  number of children of~$u$ in $U$. Every vertex of $G$ apart from the root
  of $U$ has exactly one parent in $U$, hence
  \[|F|\le \sum_{u\in V(G)}|F_u|=|V(G)|-1.\]
  Therefore, to prove that $T$ is a tree, it suffices to argue that it is
  connected. This, however, follows immediately from the fact that $U$ is
  connected, since for each edge in $U$ there is a path in $T$ that
  connects the same pair of vertices.

  Finally, it is easy to see that each vertex $u$ is incident to at most
  $3$ edges of $F$: at most one leading to a child of $u$ in $U$, and at
  most $2$ belonging to $F_v$, where $v$ is the parent of $u$ in $U$.
\end{proof}

Observe that once the tree $U$ is constructed, it is straightforward to
construct $T$ in time $\Oh(n)$. Thus, it remains to check that adding $F$
to $G$ does not change the generalised colouring numbers too much.

Take any vertex $u\in V(G)$ and examine its children in~$U$. We partition
them as follows. Let~$\Zup_u$ be the set of those children of~$u$ in $U$
that are its ancestors in $S$, and let $\Zdown_u$ be the set of those
children of $u$ in $U$ that are its descendants in~$S$. By the construction
of $U$ and by Claim~\ref{prop:eltree}.\ref{item:acestor}, each child of~$u$
in $U$ is either its ancestor or descendant in $S$. By
Claim~\ref{prop:eltree}.\ref{item:SL}, this is equivalent to saying that
$\Zup_u$, respectively $\Zdown_u$, comprise the children of $u$ in $U$ that
are smaller, respectively larger, than $u$ in $L$. Note that by the
construction of $U$, the vertices of $\Zdown_u$ lie in pairwise different
subtrees rooted at the children of $u$ in $S$, thus $u$ is the lowest
common ancestor in $S$ of every pair of vertices from $\Zdown_u$. On the
other hand, all vertices of $\Zup_u$ are ancestors of $u$ in~$S$, thus
every pair of them is bound by the ancestor-descendant relation in $S$.

\begin{claim}\label{lem:adm-bound}
  The graph union $G+F$ satisfies $\adm_r(G+F,L)\le 2\cdot\col_{2r}(G,L)$.
\end{claim}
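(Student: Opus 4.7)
The plan is to fix an arbitrary $u\in V(G)$ and show $\adm_r[G+F,L,u]\le 2\col_{2r}(G,L)$; taking the maximum over $u$ then yields the claim. Let $\Pp$ be a family realising $\adm_r[G+F,L,u]$, i.e.\ pairwise internally-disjoint paths in $G+F$ of length at most $r$ from $u$ to vertices strictly smaller than $u$ in $L$, with every internal vertex strictly greater than $u$ in $L$. The strategy is to translate each $P\in\Pp$ into a walk in $G$ of length at most $2r$, extract from it a witness $v_P\in\SReach_{2r}[G,L,u]$, and bound the multiplicity of the assignment $P\mapsto v_P$ by two.

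The translation replaces each edge of $P$ not belonging to $G$ by a length-two walk in $G$. Every such edge is a sibling edge $x_a x_{a+1}\in F_w$ for a unique $U$-parent $w$, and the edges $x_a w$ and $w x_{a+1}$ are $U$-edges (hence $G$-edges), so we may replace $x_a x_{a+1}$ by the subwalk $x_a\,w\,x_{a+1}$. The resulting walk $P'$ goes from $u$ to the endpoint $v_P^{\mathrm{end}}$ of $P$ and has length at most $2r$. I then define $v_P$ to be the first vertex strictly less than $u$ in $L$ appearing on $P'$ after the \emph{last} occurrence of $u$ on $P'$; such a vertex exists because $v_P^{\mathrm{end}}<_L u$. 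By the choice of starting position, the subwalk of $P'$ from this last $u$-occurrence to $v_P$ has no internal vertex equal to $u$ (by maximality) and no internal vertex strictly smaller than $u$ (by first-occurrence), hence all internal vertices are strictly greater than $u$ in $L$; after shortcutting to a simple path, this certifies $v_P\in\SReach_{2r}[G,L,u]$.

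The heart of the argument is showing that $P\mapsto v_P$ is at most $2$-to-$1$. Fix $v\in\SReach_{2r}[G,L,u]$ and consider paths $P$ with $v_P=v$. The vertex $v$ appears on $P'$ either as an original vertex of $P$ (Case~A) or as a $U$-parent introduced by a sibling substitution (Case~B). In Case~A, since every internal vertex of $P$ is $>_L u$ while $v<_L u$, we are forced to have $v=v_P^{\mathrm{end}}$; by pairwise vertex-disjointness of $\Pp$, at most one path can realise this. In Case~B, the walk enters $v$ through a $U$-edge $\ell_P v$ where $\ell_P\in V(P)$ is a $U$-child of $v$ with $\ell_P>_L u$, so $\ell_P\in\Zdown_v$; vertex-disjointness of $\Pp$ forces distinct Case~B paths to use distinct incoming children $\ell_P$. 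I would then argue that at most one Case~B path is possible per $v$ by combining the defining ``first $<_L u$ on $P'$'' condition of $v_P$ (which dictates that every substituted $U$-parent preceding $v$ on $P'$ must be $\ge_L u$) with the fact that the sibling chain $F_v$ forms a single path in the auxiliary tree $T$ through $v$'s $U$-children; together these constraints rule out all but one crossing of $F_v$ by $\Pp$. Adding the contributions of Cases~A and~B then gives at most two preimages per $v$.

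Summing over $v$ yields $|\Pp|\le 2|\SReach_{2r}[G,L,u]|\le 2\col_{2r}(G,L)$, as required. The main technical obstacle is the Case~B bound: a priori the chain $F_v$ may be long and could be crossed by many vertex-disjoint paths, so the proof must carefully exploit both the first-descent condition defining $v_P$ and the linear structure of $F_v$ within $T$ to rule out all but one such crossing per target~$v$.
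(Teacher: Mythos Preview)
Your overall strategy is correct and matches the paper: replace each non-$G$ edge of $P$ by the length-$2$ detour through its $U$-parent, extract a target vertex in $\SReach_{2r}[G,L,u]$, and show the assignment is at most $2$-to-$1$. Case~A is fine. The gap is in Case~B, and you have correctly identified it yourself as the ``main technical obstacle'' without actually closing it.

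Your proposed justification for Case~B --- combining the ``first $<_L u$'' condition with the fact that $F_v$ is a single path in $T$ --- does not work. The linear structure of $F_v$ inside $T$ is irrelevant: paths in $\Pp$ live in $G+F$, not in $T$, so they can reach the chain $F_v$ at arbitrary positions via $G$-edges. Nothing you have written prevents many vertex-disjoint paths from each touching $F_v$ at a different sibling edge, each producing $v$ as the first substituted $U$-parent below $u$. You have reduced to showing that the entry vertex $\ell_P\in\Zdown_v$ is the same for every Case~B path, but you have not argued this.

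The paper closes this gap using the elimination tree $S$, which you never invoke. Its target vertex $v(P)$ is defined as the first vertex on $P'$ lying \emph{outside $G_u$} (the set of $S$-descendants of $u$); this coincides with your ``first $<_L u$'' but immediately yields that the predecessor $x$ of $v(P)$ on $P'$ lies in $G_u$. The decisive observation is then purely structural: by construction of $U$, the vertices of $\Zdown_v$ lie in pairwise distinct subtrees of $S$ rooted at children of $v$, while $G_u$ is entirely contained in one such subtree (since $v$ is a strict $S$-ancestor of $u$). Hence $\Zdown_v\cap V(G_u)$ has at most one element, so $x$ is uniquely determined by $u$ and $v$, independently of $P$. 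At most two edges of $F_v$ are incident to this unique $x$, and since both pass through $x$, vertex-disjointness of $\Pp$ gives at most one Case~B path. This elimination-tree uniqueness argument is the missing idea; without it your Case~B bound does not go through.
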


\begin{proof}
  Write $H=G+F$. Let $\Fnew=F\setminus E(G)$ be the set of edges from $F$
  that were not already present in $G$. If an edge $e\in\Fnew$ belongs also
  to $F_u$ for some $u\in V(G)$, then we know that $u$ cannot be an
  endpoint of $e$. This is because edges joining a vertex $u$ with its
  children in~$U$ were already present in $G$. We say that the vertex~$u$
  is the \emph{origin} of an edge $e\in\Fnew\cap F_u$, and denote it by
  $a(e)$. Observe that $a(e)$ is adjacent to both endpoints of $e$ in~$G$
  by construction. Also observe that if the endpoints of~$e$ belong to
  $\Zup_{a(e)}$, then they are both ancestors of $a(e)$ in $S$, and thus
  are both smaller than~$a(e)$ in $L$. Otherwise, if the endpoints of~$e$
  belong to~$\Zdown_{a(e)}$, then they are not bound by the
  ancestor-descendant relation in $S$ and $a(e)$ is their lowest common
  ancestor in $S$.

  To give an upper bound on $\adm_r(H,L)$, let us fix a vertex $u\in V(G)$
  and a family of paths~$\Pp$ in $H$ such that
  \begin{itemize}
  \item each path in $\Pp$ has length at most $r$, starts in $u$, ends in a
    vertex smaller than $u$ in $L$, and all its internal vertices are
    larger than $u$ in $L$;
  \item the paths in $\Pp$ are pairwise vertex-disjoint, apart from the
    starting vertex $u$.
  \end{itemize}

  For each path $P\in \Pp$, we define a walk $P'$ in $G$ as follows. For
  every edge $e=xy$ from~$\Fnew$ traversed on $P$, replace the usage of
  this edge on $P$ by the following detour of length $2$: $x{-}a(e){-}y$.
  Observe that $P'$ is a walk in the graph $G$, it starts in $u$, ends in
  the same vertex as $P$, and has length at most~$2r$. Next, we define
  $v(P)$ to be the first vertex on $P'$ (that is, the closest to~$u$ on
  $P'$) that does not belong to $G_u$. Since the endpoint of $P'$ that is
  not $u$ does not belong to~$G_u$, such a vertex exists. Finally, let
  $P''$ be the prefix of $P'$ from $u$ to the first visit of $v(P)$ on $P'$
  (from the side of $u$). Observe that the predecessor of $v(P)$ on $P''$
  belongs to $G_u$ and is a neighbour of $v(P)$ in $G$, hence $v(P)$ has to
  be a strict ancestor of $u$ in~$S$. We find that $P''$ is a walk of
  length at most $2r$ in $G$, it starts in $u$, ends in $v(P)$, and all its
  internal vertices belong to $G_u$, so in particular they are not smaller
  than $u$ in $L$. This means that~$P''$ certifies that
  $v(P)\in\SReach_{2r}[G,L,u]$.

  Since $\bigl|\SReach_{2r}[G,L,u]\bigr|\le \col_{2r}(G,L)$, in order to
  prove the bound on $\adm_r(H,L)$, it suffices to prove the following
  claim: For each vertex $v$ that is a strict ancestor of $u$ in~$S$, there
  can be at most two paths $P\in\Pp$ for which $v=v(P)$. To this end, we
  fix a vertex $v$ that is a strict ancestor of~$u$ in~$S$ and proceed by a
  case distinction on how a path $P$ with $v=v(P)$ may behave.

  Suppose first that $v$ is the endpoint of $P$ other than $u$,
  equivalently the endpoint of $P'$ other than~$u$. (For example, $u=1$,
  $P= 1,11,21,0$, $P'=1,11,1,21,0$ and $v=0$, in
  figures~\ref{fig:G_S_U} and~\ref{fig:G_U_T}.) However, the paths of $\Pp$
  are pairwise vertex-disjoint, apart from the starting vertex $u$, hence
  there can be at most one path $P$ from $\Pp$ for which $v$ is an
  endpoint. Thus, this case contributes at most one path $P$ for which
  $v=v(P)$.

  Next suppose that $v$ is an internal vertex of the walk $P'$; in
  particular, it is not the endpoint of~$P$ other than $u$. (For example,
  $u=6$, $P= 6,11,21,0$, $P'= 6,11,1,21,0$ and $v=1$, in
  figures~\ref{fig:G_S_U} and~\ref{fig:G_U_T}.) Since the only vertex
  traversed by~$P$ that is smaller than $u$ in $L$ is this other endpoint
  of~$P$, and $v$ is smaller than $u$ in $L$ due to being its strict
  ancestor in $S$, it follows that each visit of $v$ on $P'$ is due to
  having $v=a(e)$ for some edge $e\in\Fnew$ traversed on~$P$. Select $e$ to
  be such an edge corresponding to the first visit of $v$ on $P'$. Let
  $e=xy$, where $x$ lies closer to $u$ on $P$ than $y$. (That is, in our
  figures, $x=11$ and $y=21$.) Since $v$ was chosen as the first
  vertex on $P'$ that does not belong to $G_u$, we have $x\in G_u$.

  Since $v=a(e)=a(xy)$, either $x\in\Zdown_v$ or $x\in\Zup_v$. Note that
  the second possibility cannot happen, because then~$v$ would be a
  descendant of $x$ in $S$, hence $v$ would belong to $G_u$, due to
  $x\in G_u$; a contradiction. We infer that $x\in\Zdown_v$.

  Recall that, by construction, $\Zdown_v$ contains at most one vertex from
  each subtree of $S$ rooted at a child of $v$. Since $v$ is a strict
  ancestor of $u$ in $S$, we infer that $x$ has to be the unique vertex of
  $\Zdown_v$ that belongs to $G_u$. In the construction of $F_v$, however,
  we added only at most two edges of $F_v$ incident to this unique vertex:
  at most one to its predecessor on the enumeration of the children of $v$,
  and at most one to its successor. Since paths from $\Pp$ are pairwise
  vertex-disjoint in~$H$, apart from the starting vertex $u$, only at most
  one path from $\Pp$ can use these two edges. We can have $v=a(e)$ for
  this path only. Thus, this case contributes at most one path $P$ for
  which $v=v(P)$, completing the proof of the claim.
\end{proof}

We conclude the proof by summarising the algorithm: first construct the
tree $U$, and then construct the tree $T$. As argued, these steps take time
$\Oh(m\cdot\alpha(m))$ and $\Oh(n)$, respectively. By
claims~\ref{lem:correctness} and~\ref{lem:adm-bound}, $T$ satisfies the
required properties.
\end{proof}

\subsection{Constructing a successor relation}

The preceding section provides us with a spanning tree of maximum degree at
most $3$. We now show how this can be used to obtain a successor relation
from this spanning tree.

We give two constructions: One which constructs an actual successor
relation, at the cost of possibly adding further edges. The added edges may
increase the admissibility, but in a way that preserves bounded expansion. We also give a second construction that does not add additional edges
  and hence preserves also other structural properties. Such a construction
  may thus be potentially used for model-checking on other graph classes. 
This construction shows how a successor relation may be first-order
interpreted in a graph with bounded-degree spanning tree, without adding
any edges.

\pagebreak
\paragraph*{Adding a successor relation.}
\begin{theorem}\label{thm:tree-to-succ}
  There exists an algorithm that, given a graph $G$, $r\in\N$, the
  edge set $F$ of a tree of maximum degree at most $3$, and an ordering $L$
  of $V(G)$, computes a set of ordered pairs $S\subseteq V(G)^2$
  such that~$S$ is a successor relation on $V(G)$ and
  \[\adm_r(G+\bar S,L)\le h\bigl(r,\adm_{40r+1}(G+F)\bigr),\]
  for an appropriately defined function $h$ where $\bar S = \left\{\{a,b\}
  \mid (a,b) \in S, a\neq b\right\}$. 
  The running time of the algorithm is $\Oh(m+n)$, where
  $m=|E(G)|$ and $n=|V(G)|$.
\end{theorem}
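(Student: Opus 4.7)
The plan is to use the spanning tree $T = (V(G), F)$ of maximum degree at most $3$ to build a successor relation $S$ on $V(G)$ whose edges $\bar S$ are ``short'' with respect to $T$, i.e.\ every $\bar S$-edge $\{a,b\}$ satisfies $d_T(a,b) \leq c$ for some constant $c$. Once this is achieved, any path of length $r$ in $G + \bar S$ can be expanded to a walk of length at most $cr$ in $G + F$ by replacing each $\bar S$-edge by a shortest $T$-path between its endpoints, and with $c$ chosen so that $cr \leq 40r+1$ this will enable transferring admissibility bounds from $G + \bar S$ to $G + F$.

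For the construction of $S$, I will invoke an algorithmic version of the classical Karaganis theorem, which asserts that the cube $H^3$ of any connected graph $H$ is Hamiltonian; equivalently, $T$ admits a vertex ordering $v_1, \ldots, v_n$ with $d_T(v_i, v_{i+1}) \leq 3$. Exploiting the bounded-degree structure of $T$ one can give a direct Euler-tour-based linear-time algorithm that produces such an ordering (with $c$ possibly larger but still constant). I set $S = \{(v_i, v_{i+1}) \mid 1 \leq i < n\}$, so that $\bar S$ is the edge set of a Hamiltonian path in $T^c$ and the algorithm runs in time $\mathcal{O}(n)=\mathcal{O}(m+n)$. For each $\bar S$-edge $e=\{a,b\}$ I fix once and for all a shortest $T$-path $\pi(e)$ from $a$ to $b$, of length at most $c$, which lies in $T \subseteq G + F$.

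For the admissibility bound, fix $v \in V(G)$ and let $\Pp$ be a family of paths certifying $\adm_r(G + \bar S, L, v)$. For each $P \in \Pp$ I replace every $\bar S$-edge by its corresponding shortest $T$-path to obtain a walk $\widetilde P$ in $G + F$ of length at most $cr$, starting in $v$ and ending in the endpoint $u_P <_L v$ of $P$. Letting $w(P)$ denote the first vertex on $\widetilde P$ after $v$ with $w(P) \leq_L v$, the prefix up to $w(P)$ is a walk of length at most $cr$ whose internal vertices are $>_L v$, so $w(P) \in \SReach_{cr}[G+F, L, v]$. The multiplicity of the map $P \mapsto w(P)$ is bounded by a constant depending only on $c$: either $w$ is the endpoint $u_P$ of $P$ (at most one such $P$ by vertex-disjointness of $\Pp$), or $w$ lies inside some $\pi(\{a,b\})$ with $a, b$ in the $T$-ball of radius $c$ around $w$, which has size at most $3 \cdot 2^c - 2$; this gives at most $\mathcal{O}(2^{2c})$ candidate edges, and each belongs to at most one path of $\Pp$ by vertex-disjointness. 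Hence $|\Pp| \leq \mathcal{O}(1) \cdot |\SReach_{cr}[G+F, L, v]| = \mathcal{O}(\col_{cr}(G + F, L, v))$, and \Cref{lem:gen-col-ineq} bounds this in terms of $\adm_{cr}(G + F, L)$, yielding the required bound for a suitable $h$.

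The main obstacle is Step 1: producing a linear-time construction of a successor relation on $V(G)$ whose shortcut edges correspond to bounded-length $T$-paths, and pinning down the exact constant $c$ that yields the bound $40r + 1$ in the statement. Once this construction is in hand, the admissibility analysis (walk expansion plus the multiplicity count of Step 4) is a routine application of the bounded degree of $T$ together with the vertex-disjointness of the witnessing family $\Pp$.
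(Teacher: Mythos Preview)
Your construction of $S$ as a Hamiltonian path in the cube of $T$ is exactly what the paper does (it cites Karaganis--Sekanina for the existence and \cite{lin1995algorithms} for the linear-time computation, so $c=3$ works and your ``main obstacle'' is already handled in the literature).

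Where you diverge from the paper is in the admissibility analysis. The paper does \emph{not} expand paths and count multiplicities; instead it observes that $G+\bar S$ is a depth-$3$ minor of $(G+F)\bullet K_9$ (each vertex is blown up into enough copies to host the short $T$-paths through it), and then chains the black-box lemmas on shallow minors (\Cref{lem:stability-minors}), lexicographic products (\Cref{lem:stability-lex}), $\nabla$ versus $\col$ (\Cref{lem:col-vs-nabla}), and $\adm$ versus $\nabla$ (\Cref{lem:adm-nabla}). The constant $40r+1$ falls out of this chain. Because the route goes through $\nabla_r$, which is ordering-free, the paper obtains a bound in terms of $\adm_{40r+1}(G+F)$, the minimum over all orderings.

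Your direct argument is correct and more elementary; it is essentially the same walk-expansion plus first-small-vertex technique the paper deploys inside the proof of \Cref{lem:connected}. With $c=3$ you even get a sharper radius. The one point to flag is that your final bound is in terms of $\col_{cr}(G+F,L)$ and hence $\adm_{cr}(G+F,L)$ for the \emph{given} ordering $L$, not the ordering-free $\adm_{40r+1}(G+F)$ that appears on the right-hand side of the statement. Since $\adm_{40r+1}(G+F)\le\adm_{40r+1}(G+F,L)$, what you prove is formally a weaker inequality than the one stated. This is harmless for the downstream application (in the proof of \Cref{thm:main} the ordering $L$ is chosen so that $\adm_{80r+2}(G,L)$, and hence $\adm_{40r+1}(G+F,L)$ via \Cref{thm:main-technical}, is controlled), but it is a genuine difference from the theorem as written.
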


As observed e.g.\ in~\cite{karaganis1968cube,sekanina1963ordering}, the
cube of every connected graph contains a Hamiltonian path. (The
  \emph{cube} of a graph $G$ is the graph on the same vertex set as $G$ and
  in which two vertices are connected if their distance in $G$ is at most
  $3$.) Furthermore, such a Hamiltonian path can be computed in linear
time in the size of the original graph~\cite{lin1995algorithms}. The set
$S$ of edges whose existence is stated in Theorem~\ref{thm:tree-to-succ}
will simply be the Hamiltonian path computed in the cube of the spanning
tree $F$ that we constructed above. It remains to prove the claimed
  bound on the $r$-admissibility of the new graph.

\begin{proof}[Proof of Theorem~\ref{thm:tree-to-succ}] \
  Observe that we can find $G+S$, where $S$ is as described above, as a
  depth-$3$ minor of $(G+F)\bullet K_9$. This is a simple consequence of
  the fact that $F$ has maximum degree~$3$. Now we have
  \begin{align*}
    \nabla_r(G+S)&\leq \nabla_r\bigl(\nabla_3((G+F)\bullet K_9)\bigr)\\
    &\leq \nabla_{10r}\bigl((G+F)\bullet K_9\bigr)
    &&\text{(by Lemma~\ref{lem:stability-minors})}\\
    &\leq 5\cdot 9^2\cdot (10r+1)^2\cdot \nabla_{10r}(G+F)
    &&\text{(by Lemma~\ref{lem:stability-lex})}\\
    &\leq 5\cdot 9^2\cdot (10r+1)^2\cdot \col_{40r+1}(G+F)
    &&\text{(by Lemma~\ref{lem:col-vs-nabla}).}
  \end{align*}
  Finally, by Lemma~\ref{lem:adm-nabla} we have
  $\adm_r(G+S)\leq 6r(\nabla_r(G+S))^3$, which gives us
  $\adm_r(G+S)\leq g\bigl(r,\col_{40r+1}(G+F)\bigr)$,
  for an appropriately defined function $g$. We have $\col_{40r+1}(G+F)
  \leq \adm_{40r+1}(G+F)^{40r+1}$ by
  Lemma~\ref{lem:gen-col-ineq}, which leads to the stated
  result. 
\end{proof}

\paragraph*{Interpreting a successor relation.}
%\label{sec:succfromkwalk}
We show how in a graph with a spanning tree of degree~$3$, a successor
relation can be interpreted after suitably colouring vertices and edges,
but without adding further edges. We first notice that existence of such a
spanning tree guarantees the existence of a $3$-walk, i.e.\ a walk through
the graph that visits each vertex at least once and at most three times.
The following lemma allows us to interpret a successor relation from a
$k$-walk in first-order logic, for arbitrary $k$. For a natural number
$\ell$, let $[\ell]$ be the set $\{1,\ldots,\ell\}$.

\begin{lemma}\label{lem:succfromkwalk}
  Let $\sigma$ be a finite relational signature, $\strA$ a finite
  $\sigma$-structure, and $w:[n]\to V(\strA)$ a $k$-walk through the
  Gaifman graph of $\strA$, where $n\coloneqq|V(\strA)|$.
  Then there is a finite relational signature $\sigma_k$ and a first-order
  formula $\varphi^{(k)}_{\mathrm{succ}}(x,y)$, both depending only on $k$,
  and a $(\sigma\cup\sigma_k)$-expansion $\strA'$ of $\strA$ which can be
  computed from $\strA$ and $w$ in polynomial time, such that
  \begin{itemize}
  \item the Gaifman-graphs of $\strA'$ and $\strA$ are the same;
  \item $\varphi^{(k)}_{\mathrm{succ}}$ defines a successor relation on
    $\strA'$.
  \end{itemize}
\end{lemma}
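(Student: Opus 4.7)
The plan is to build $\strA'$ by encoding $w$ as a bounded family of predicates on vertices and on Gaifman-adjacent pairs, to derive from $w$ a Hamiltonian path on $V(\strA)$ whose consecutive pairs lie at bounded Gaifman distance, and to let $\varphi^{(k)}_{\mathrm{succ}}$ be a fixed existential formula that reads this path off the marking.

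Concretely, the signature $\sigma_k$ will contain unary predicates $T_1,\dots,T_k$, with $T_j(v)$ meaning that $w$ visits $v$ exactly $j$ times; a unary predicate $\mathrm{Start}$ marking $w(1)$; and for each pair $(a,b)\in[k]\times[k]$ a binary predicate $W_{a,b}$, set on a Gaifman-adjacent pair $(u,v)$ whenever $w$ contains a step from the $a$-th visit of $u$ to the $b$-th visit of $v$. Since every walk step is already a Gaifman edge of $\strA$, all binary predicates of $\sigma_k$ live on existing Gaifman edges, and the unary predicates do not affect adjacency at all; hence the Gaifman graphs of $\strA$ and $\strA'$ coincide. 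The expansion $\strA'$ is computable from $\strA$ and $w$ in a single pass, hence in polynomial time.

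The successor I want $\varphi^{(k)}_{\mathrm{succ}}$ to define is the Hamiltonian path $v_1,\dots,v_n$ obtained by picking, for every vertex $v$, a canonical visit position $c(v)\in\{i\mid w(i)=v\}$ and ordering vertices by $c(v)$. The essential point is that one can choose $c$ so that consecutive canonical positions in $w$ are within some distance $L=L(k)$ of each other; the naive choices $c(v)=\min\{i\mid w(i)=v\}$ or $c(v)=\max\{i\mid w(i)=v\}$ do not achieve this. I would produce a good $c$ by a greedy sliding-window procedure that, inside every window of length $L$ of $w$, canonises an as-yet-uncanonised vertex in the window whose remaining visits are about to run out soonest. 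The multiplicity bound $t_v\le k$ makes such a vertex always available once $L$ is large enough (say $L=2k$), and this combinatorial lemma is the heart of the argument.

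With $c$ fixed, each Hamiltonian-path edge $(v_i,v_{i+1})$ is witnessed by a walk subpath of length at most $L$ in the Gaifman graph; I mark the edges of this subpath by a further bounded family of binary predicates in $\sigma_k$ that records each edge's role inside the subpath, together with a finite counter disambiguating intermediates shared by several subpaths (which can be bounded by a function of $k$, since each vertex appears in $w$ at most $k$ times). The formula $\varphi^{(k)}_{\mathrm{succ}}(x,y)$ is then the disjunction, over lengths $d\le L$ and the finite set of label configurations, of the existential statement $\exists z_1\dots z_{d-1}$ asserting the presence of the correspondingly labelled chain of Gaifman edges from $x$ to $y$; its quantifier depth depends only on $k$. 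The main obstacle is the greedy lemma on canonical visits; once it is in place, Gaifman preservation, the definition of the formula, and its correctness are all straightforward.
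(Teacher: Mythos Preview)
Your outline is a valid alternative to the paper's and differs genuinely in structure. The paper does not select canonical visits in one shot; it reduces a $k$-walk to a $(k-1)$-walk by deleting, for every vertex visited exactly $k$ times, either its $(k-1)$-th or its $k$-th visit, and then composes these first-order interpretations down to $k=1$. The key device is to guarantee that no three consecutive walk positions are deleted: pair up positions as $\{1,2\},\{3,4\},\dots$; matching the $(k-1)$-th with the $k$-th visit of each $k$-visited vertex yields a multigraph of maximum degree~$2$ on these pairs; orient it with in-degree at most~$1$ everywhere, and delete the position corresponding to each incoming edge. A single unary predicate $P_k$ records the choice, and a formula handling at most two consecutive jumps suffices at each level. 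Your direct route is conceptually cleaner once the lemma is in hand and yields a single-layer formula rather than a composed one.

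The genuine gap is the canonical-visit lemma. Your greedy is not specified precisely enough to verify, and the natural candidates fail: for $k=2$, the walk $z\,a_1\cdots a_m\,a_1\cdots a_m$ defeats last-occurrence and its reverse defeats first-occurrence, each producing an unbounded gap adjacent to~$z$. The lemma is nonetheless true with roughly the bound you guess, via Hall rather than a greedy: partition the walk into blocks of $k$ consecutive positions; any $s$ blocks cover $ks$ positions and hence at least $s$ distinct vertices (each vertex contributing at most $k$ positions), so Hall's condition gives a system of distinct representatives assigning one vertex to every block; canonise each such vertex at an occurrence inside its block and the remaining vertices arbitrarily, giving gaps at most $2k-1$. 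For the formula, the extra binary predicates and the counter are unnecessary. Your $W_{a,b}$ together with unary predicates $C_a$ (``the $a$-th visit of this vertex is canonical'') already suffice: a chain $W_{a,a_1}(x,z_1)\wedge W_{a_1,a_2}(z_1,z_2)\wedge\cdots$ traces a \emph{unique} walk segment, since the $a$-th visit of any vertex is a unique position. Thus $\varphi^{(k)}_{\mathrm{succ}}(x,y)$ can simply assert $C_a(x)$ for some $a$, follow the walk for at most $2k-1$ steps while checking $\neg C_{a_j}(z_j)$ at each intermediate, and end at $y$ with $C_b(y)$.
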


\begin{proof}
  We define a function $f:[n]\to [k]$ which counts how many times we have
  visited a vertex on the walk before, by
  \[f(i)\coloneqq |\{ j\leq i\mid w(i)=w(j) \}|.\]
  Furthermore, let $F:V(\strA)\to [k]$ count how many times we visit a
  vertex:
  \[F(v)\coloneqq |\{ i\in[n]\mid w(i)=v \}|.\]
  To simplify notation, if $i\in[n]$ we write $F(i)$ for $F(w(i))$.

  We encode the $k$-walk $w$ by binary relations $E_{ab}$ with
  $a,b=1,\ldots,k$, in such a way that $(u,v) \in E_{ab}$ if and only
    if there is some $i\in[n-1]$ such that
  \begin{itemize}
  \item $w(i)=u$ and $f(i)=a$, and
  \item $w(i+1)=v$ and $f(i+1)=b$.
  \end{itemize}
  That is, after visiting $u$ for the $a$-th time, the walk $w$ proceeds to
  $v$, visiting it for the $b$-th time. Note that if $k=1$, we can
  immediately define a successor relation by
  \[\varphi^{(1)}_{\mathrm{succ}}(x,y)\coloneqq E_{11}xy.\]

  If $k>1$, we show how to interpret a $(k-1)$-walk $w'$ in first-order
  logic, given a $k$-walk encoded by $\{E_{ab} \mid 1\leq a,b\leq k\}$ as
  above. By daisy-chaining these interpretations we end up with a $1$-walk
  (i.e.\ a Hamiltonian path). Plugging in the interpretation of this
  Hamiltonian path into $\varphi^{(1)}_{\mathrm{succ}}$ defined above
  gives the formulas $\varphi^{(k)}_{\mathrm{succ}}$.

  In order to get from a $k$-walk to a $(k-1)$-walk, we look at all
  vertices that are visited $k$ times, and ``jump'' over these vertices,
  either when they are visited for the $(k-1)$-th or for the $k$-th time.
  Jumping over a vertex can be done in first-order logic, but we must be
  careful to choose the vertices for jumping in such a way that we never
  jump over an unbounded number of vertices in a row, as this is not
  possible in first-order logic. We encode the information on whether to
  jump when visiting for the $(k-1)$-th or the $k$-th time in a new unary
  predicate $P_k$.

  To be precise, let $\varphi_{\text{$k$-times}}(x)$ be a formula which
  states that $x$ is visited $k$ times:
  \[\varphi_{\text{$k$-times}}(x)\coloneqq
  \bigvee_{a=1}^k\exists y\, E_{ka}xy.\]

  For those $u \in V(\strA)$ which are visited $k$-times, we agree to jump
  over them when they are visited for the $k$-th time if $u \in P_k$, and
  when they are visited for the $(k-1)$-th time otherwise. Thus, if $w(i) =
  u$, $f(i) = k$ and $u \in P_k$, we want to remove the $i$-th step.
%   from the walk $w$ and set
%   \[w_{-i}(j)\coloneqq \begin{cases}
%     w(j),&\text{if $j<i$};\\
%     w(j+1),&\text{if $j\geq i$}.
%   \end{cases}\]
  However, it may be the case that $w(i+1)$ is also visited $k$ times and
  needs to be jumped over. We define first-order formulas which carry out a
  bounded number of such jumps as follows.

  \begin{itemize}
  \item For $a\in [k]$, the formula $\varphi_{\text{jump},a}(x)$ holds if
    we jump over $x$ when visiting it for the $a$-th time:
    \begin{align*}
      \varphi_{\text{jump},1}(x),\ldots,\varphi_{\text{jump},k-2}(x)
      &\coloneqq \bot,\\
      \varphi_{\text{jump},k-1}(x)
      &\coloneqq \varphi_{\text{$k$-times}}(x)\wedge \neg P_kx,\\
      \varphi_{\text{jump},k}(x)
      &\coloneqq \varphi_{\text{$k$-times}}(x)\wedge P_kx.
    \end{align*}
  \item For $r\geq0$ and $a,b\in[k]$, the formula
    $\varphi_{\text{next},a,b}^{(r)}(x,y)$ holds if, when applying at most
    $r$ consecutive jumps on entering $x$ for the $a$-th time, we end up in
    node $y$ which is visited for the $b$-th time in the (original) walk.
    Specifically:
    \begin{align*}
      \varphi_{\text{next},a,b}^{(0)}(x,y)\coloneqq{}&
      x \dot= y\wedge \delta_{ab},\\
      \varphi_{\text{next},a,b}^{(r+1)}(x,y)\coloneqq{}&
      \bigl(\neg\varphi_{\text{jump},a}(x)\to
      (x\dot=y\wedge \delta_{ab})\bigr)\\
      &\wedge \Bigl(\varphi_{\text{jump},a}(x)\to
      \exists z\,\bigvee_{c=1}^k
      \bigl(E_{ac}xz\wedge \varphi_{\text{next},c,b}^{(r)}(z,y)\bigr)\Bigr).
    \end{align*}
    Here, $\delta_{ab}$ is true if the indices $a$ and $b$ are the same:
    \[\delta_{ab} \coloneqq \begin{cases}
      \top,&\text{if $a=b$};\\
      \bot,&\text{otherwise}.
    \end{cases}\]
  \item We will show below how to choose the predicate $P_k$ so that we
    never need to take more than two consecutive jumps. Thus, we can
    interpret a $(k-1)$-walk $w'$ using, for $a,b\in [k-2]$, the formulas
    \[\varphi_{E,a,b}(x,y)\coloneqq
    \exists z\, \bigvee_{c=1}^k
    \bigl(E_{ac}xz\wedge \varphi^{(2)}_{\text{next},c,b}(z,y)\bigr).\]
    For $a \in [k-2]$ we set
    \[\varphi_{E,a,k-1}(x,y)\coloneqq
    \exists z\, \bigvee_{c=1}^k
    \Bigl(E_{ac}xz\wedge
    \bigl(\varphi^{(2)}_{\text{next},c,k-1}(z,y)
    \vee\varphi^{(2)}_{\text{next},c,k}(z,y)\bigr)\Bigr).\]
    Next, for $b\in [k-2]$ we set
    \begin{align*}
      \varphi_{E,k-1,b}(x,y)\coloneqq{}&
      \Bigl(\neg\varphi_{\text{jump},k-1}(x)\to
      \exists z\,\bigvee_{c=1}^k
      \bigl(E_{k-1,c}xz\wedge \varphi^{(2)}_{\text{next},c,b}(z,y)\big)\Big)\\
      &\wedge \Bigl(\varphi_{\text{jump},k-1}(x)\to
      \exists z\,\bigvee_{c=1}^k \bigl(E_{k,c}xz\wedge
      \varphi^{(2)}_{\text{next},c,b}(z,y)\bigr)\Bigr),
    \end{align*}
    and finally we define
    \begin{align*}
      \varphi_{E,k-1,k-1}(x,y)\coloneqq{}&
      \biggl(\neg\varphi_{\text{jump},k-1}(x)\\
      &\qquad\to \exists z\,\bigvee_{c=1}^k
      \Bigl(E_{k-1,c}xz\wedge
      \bigl(\varphi^{(2)}_{\text{next},c,k-1}(z,y)\vee
      \varphi^{(2)}_{\text{next},c,k}(z,y)\bigr)\Bigr)\biggr)\\
      &\wedge \biggl(\varphi_{\text{jump},k-1}(x)\\
      &\qquad\to \exists z\,\bigvee_{c=1}^k
      \Bigl(E_{k,c}xz\wedge
      \bigl(\varphi^{(2)}_{\text{next},c,k-1}(z,y)\vee
      \varphi^{(2)}_{\text{next},c,k}(z,y)\bigr)\Bigr)\biggr).
    \end{align*}
  \end{itemize}
  
  % The successor relation can now be defined as
  % \begin{multline*}
  %   \tphi_{\mathrm{succ}}(x,y) \coloneqq
  %   (\neg\varphi_{\text{jump},1}(x) \wedge
  %   \varphi_{\text{next},1}^{(2)}(x,y)) \vee
  %   \\
  %   (\varphi_{\text{jump},1}(x)\wedge
  %   \varphi_{\text{next},2}^{(2)}(x,y)),
  % \end{multline*}
  % because our choice of $P$ guarantees that we never have to jump
  % more than twice in a row.

  To define the predicate $P_k$, let $T\subseteq [n]$ be the set of indices
  $i\in [n]$ for which $F(i)=k$ and $f(i)\in\{k-1,k\}$. We obtain a perfect
  matching $M$ on $T$ by matching $i$ and $j$ if and only if
  $w(i)=w(j)$ (cf.~Figure~\ref{fig:wheel}\,(a)). We define a subset
  $J\subset[n]$ with the intended meaning that if $i \in J$, we jump over
  the $i$-th step of $w$. The set $J$ will satisfy the following two
  conditions:
  \begin{itemize}
  \item every vertex $v$ with $F(v) = k$ is jumped over exactly once, i.e.\
    \[\bigl|\{i\in [n] \mid w(i)=v\}\cap J\,\bigr|=1,\quad \text{and}\]
  \item we never jump more than twice in a row, i.e.\ if $i,i+1\in J$, then
    $i+2\not\in J$.
  \end{itemize}

  \begin{figure}[tb]
    \begin{center}
      \input{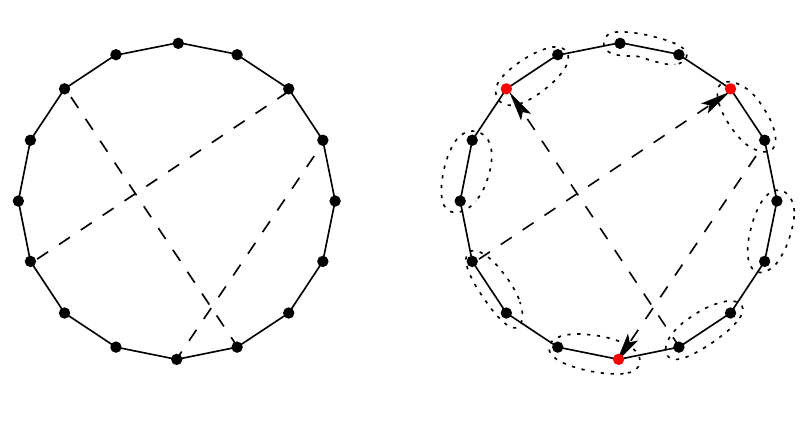_t}
    \end{center}
    \caption{Deciding when to jump over vertices in a $k$-walk.}
    \label{fig:wheel}
  \end{figure}
  
  We partition the set $[n]$ into intervals of size $2$, setting
  \[U \coloneqq \big\{ \{1,2\},\{3,4\},\ldots \big\},\]
  with the last set $\{n\}$ being a singleton if $n$ is odd. Then the
  matching $M$ defines a multigraph without loops on $U$, and the degree of
  $I \in U$ is at most 2. We direct the edges of $M$, viewed as edges in
  the multigraph $(U,M)$, in such a way that every $I \in U$ has at most
  one incoming edge. The edges incident with $I$ correspond to the elements
  of $I \cap T$, and we put $i \in I$ into~$J$ if and only if the
  edge corresponding to $i$ is directed towards $I$
  (cf.~Figure~\ref{fig:wheel}\,(b)). For every
  $k=1,\ldots,\bigl\lfloor\frac12(n-1)\bigr\rfloor$ at most one of $2k-1$
  and $2k$ is in $J$, and therefore $J$ satisfies the above requirements.

  The definition of $P_k \subseteq V(G)$ is now straightforward:
  \[P_k\coloneqq
  \{ v\in V(G)\mid \text{$F(v)=k$ and $f(i)= k$ for the $i\in J$ with
    $w(i)=v$} \}.\]
  In summary, we end up with
  \[\sigma_k\coloneqq \{ E_{ab}\mid a,b\in [k] \}\cup
  \{ P_a\mid a=2,\ldots,k \},\]
  and it is clear that our construction can be carried out in polynomial
  time.
\end{proof}

\subsection{Proof of Theorem~\ref{thm:main}}

Let us finally derive the main theorem, Theorem~\ref{thm:main}. We first
need to draw upon the literature on model-checking first-order logic on
classes of bounded expansion. The following statement encapsulates the
model-checking results of Dvo\v{r}\'ak et al.~\cite{dvovrak2013testing} and
of Grohe and Kreutzer~\cite{grohe2011methods}. 
We also refer to the new expositions given in~\cite{Gajarski18, Pilipczuk18}. 

\begin{theorem}\label{thm:plain-FO-MC}
  Let $\tau$ be a finite and purely relational signature. Then for every
  formula $\phi\in \FO[\tau]$ there exists a nonnegative integer $r(\phi)$,
  computable from $\phi$, such that the following holds. Given a
  $\tau$-structure $\strA$, it can be verified whether $\strA\models \phi$
  in time $f\bigl(|\phi|,\adm_{r(\phi)}(G(\strA))\bigr)\cdot n$,
  where $n$ is the size of the universe of $\strA$ and $f$ is a computable
  function.
\end{theorem}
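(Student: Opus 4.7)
The plan is to follow the quantifier-elimination strategy of Dvo\v{r}\'ak, Kr\'al and Thoma{\v s} for first-order model-checking on bounded-expansion classes, recast so that the sparsity parameter appearing in the running time is a single generalised colouring/admissibility number at a radius $r(\phi)$ computable from $\phi$. The target is to compute, from $\strA$ and $\phi$, an expansion $\strA^+$ of $\strA$ by a bounded (depending on $|\phi|$ and $\adm_{r(\phi)}(G(\strA))$) number of new relations and a quantifier-free formula $\widehat\phi$ over the extended signature such that $\strA\models\phi$ iff $\strA^+\models\widehat\phi$, where the time to build $\strA^+$ and to evaluate $\widehat\phi$ are both $f(|\phi|,\adm_{r(\phi)}(G(\strA)))\cdot n$. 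The key preprocessing step is to compute an ordering $L$ of $V(\strA)$ witnessing the admissibility bound, using the algorithm underlying \Cref{thm:adm-compute} applied to $G(\strA)$ (whose correctness and linear running time require only a bound on $\adm_{r(\phi)}(G(\strA),L)$, not membership in a fixed class).

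The heart of the argument is an inductive procedure eliminating one quantifier at a time. Given an innermost existential subformula $\exists x.\psi(x,\bar y)$ with $\psi$ quantifier-free over the current expansion, I would use $L$ together with the bound on $\adm_{r}(G(\strA^+),L)$ (for a suitable intermediate radius $r$) to show that witnesses $x$ may, without loss of generality, be sought among vertices strongly reachable from $\bar y$ within bounded radius; this is a Gaifman-style locality argument pushed through the ordering. The set of candidate types of such witnesses is of size bounded in the admissibility value, so one can enumerate them and add, for each type, a new relation symbol recording precisely those tuples $\bar y$ for which a witness of that type exists. Replacing $\exists x.\psi$ by a Boolean combination of these new atoms yields a formula of strictly smaller quantifier rank. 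Iterating gives $\widehat\phi$, with $r(\phi)$ growing roughly exponentially in the quantifier rank to absorb the increasing locality radii across levels.

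The main obstacle, and the reason a nontrivial $r(\phi)$ must be declared in advance, is to maintain the invariant that after each elimination step the augmented Gaifman graph still has bounded admissibility at a smaller but still nontrivial radius $r'<r$. This is where the augmented relations must be chosen to be definable by patterns of bounded radius, so that $G(\strA^+)$ is a shallow augmentation of $G(\strA)$ and bounded-expansion-type stability arguments (e.g.\ \Cref{lem:stability-minors} together with \Cref{lem:col-vs-nabla} and \Cref{lem:adm-nabla}) propagate $\adm_{r}(G(\strA))$ into $\adm_{r'}(G(\strA^+))$, possibly after translating via \Cref{lem:gen-col-ineq} between admissibility and colouring numbers. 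Once $\widehat\phi$ is produced, it is a Boolean combination of atomic formulas over $\strA^+$ and can be evaluated in time linear in $n$ by a single scan, giving the claimed bound $f(|\phi|,\adm_{r(\phi)}(G(\strA)))\cdot n$.
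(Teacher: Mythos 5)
Your overall plan---a Dvo\v{r}\'ak--Kr\'al--Thomas style quantifier elimination, adding relations level by level with a radius $r(\phi)$ fixed in advance---is the right family of argument, but two of your central steps fail as stated. First, the locality claim that witnesses for an innermost $\exists x\,\psi(x,\bar y)$ ``may, without loss of generality, be sought among vertices strongly reachable from $\bar y$ within bounded radius'' is false: a witness need not be smaller than any $y_i$ in the ordering $L$ (so it need not lie in any $\SReach_r[G,L,y_i]$), and it need not even lie in the same connected component as $\bar y$ (consider $\exists x\, R(x)$ for a unary predicate $R$). The eliminations in \cite{dvovrak2013testing,grohe2011methods} do not restrict the search space this way; they record witness information in new unary predicates via low treedepth colourings and functional augmentations, and they need a separate mechanism for ``far''/generic witnesses, which your sketch omits entirely. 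Second, your invariant-maintenance step---that because the added relations are ``definable by patterns of bounded radius,'' the augmented Gaifman graph $G(\strA^+)$ is a shallow augmentation of $G(\strA)$, so that Lemmas~\ref{lem:stability-minors}, \ref{lem:col-vs-nabla} and \ref{lem:adm-nabla} propagate admissibility---does not go through. Joining pairs of vertices at bounded distance is not a shallow-minor operation and can destroy sparsity outright: adding all distance-$2$ pairs to the star $K_{1,n}$ yields $K_{n+1}$, while every minor of $K_{1,n}$ is a forest. Controlling such augmentations requires extra structure (e.g.\ orientations of bounded in-degree as in fraternal augmentations), which your proposal does not supply.

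For comparison, the paper sidesteps both issues by not redoing the elimination at all: it inspects the algorithm of Grohe and Kreutzer~\cite{grohe2011methods} (equivalently, of Dvo\v{r}\'ak et al.~\cite{dvovrak2013testing}) and observes (i) the unary and binary relations introduced at each rewriting step are added so that the Gaifman graph remains unchanged, hence no propagation of admissibility under augmentation is ever needed, and (ii) the only use of the bounded expansion hypothesis is through treedepth-$p$ colourings where $p$ depends computably on the quantifier rank and the level, so by Zhu's bound (Theorem~\ref{lem:low-td-col}) together with Lemma~\ref{lem:gen-col-ineq} it suffices to bound $\adm_{r(\phi)}(G(\strA))$ for a single computable radius $r(\phi)$. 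A smaller point: you invoke Theorem~\ref{thm:adm-compute} while asserting that its correctness and linear running time need only a bound on $\adm_{r(\phi)}(G(\strA))$ rather than membership in a fixed bounded expansion class; that is plausible, but it is an additional claim about Dvo\v{r}\'ak's algorithm that you would have to justify rather than assume.
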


Observe that if $\strA$ is drawn from a fixed class of bounded expansion
$\CCC$, then $\adm_{r(\phi)}(G(\strA))$ is a parameter depending
only on $\phi$, hence we recover fixed-parameter tractability of
model-checking for \FO on any class of bounded expansion, parameterised by
the length of the formula. Theorem~\ref{thm:plain-FO-MC} is stronger than
this latter statement in that it says that the input structure does not
need to be drawn from a fixed class of bounded expansion, where the
colouring number is bounded in terms of the radius~$r$ for all values of
$r$, but it suffices to have a bound on the colouring numbers up to some
radius $r(\phi)$, which depends only on the formula $\phi$. We need
  this strengthening in our algorithm for the following reason. When adding
  a low-degree spanning tree to the Gaifman graph, we are not able to
  control all the colouring numbers at once, but only for some particular
  value of the radius. Theorem~\ref{thm:plain-FO-MC} ensures that this is
  sufficient for the model-checking problem to remain tractable.

We now sketch how Theorem~\ref{thm:plain-FO-MC} may be derived from the
works of Dvo\v{r}\'ak et al.\ \cite{dvovrak2013testing} and of Grohe and
Kreutzer~\cite{grohe2011methods}. We prefer to work with the algorithm of
Grohe and Kreutzer~\cite{grohe2011methods}, because we find it conceptually
simpler. For a given quantifier rank $q$ and an nonnegative integer
$i\le q$, the algorithm computes the set of all \emph{types}
$\mathfrak{R}_i^q$ realised by $i$-tuples in the input structure~$\strA$:
for a given $i$-tuple of elements $\overline{a}$, its \emph{type} is the
set of all \FO formulas $\phi(\overline{x})$ with~$i$ free variables and
quantifier rank at most $q-i$ for which $\phi(\overline{a})$ holds. Note
that for $i=0$ this corresponds to the set of sentences of quantifier rank
at most $q$ that hold in the structure, from which the answer to the
model-checking problem can be directly read; whereas for $i=q$ we consider
quantifier-free formulas with $q$ free variables. Essentially,
$\mathfrak{R}_q^q$ is computed explicitly, and then one inductively
computes $\mathfrak{R}_i^q$ based on $\mathfrak{R}_{i+1}^q$. The above
description is, however, a bit too simplified, as each step of the
inductive computation introduces new relations to the structure, but does
not change its Gaifman graph. We will explain this later.

When implementing the above strategy, the assumption that the structure is
drawn from a class of bounded expansion is used via \emph{treedepth-$p$
  colourings}, a colouring notion functionally equivalent to the
generalised colouring numbers. More precisely, a \emph{treedepth-$p$
  colouring} of a graph $G$ is a colouring $\gamma:V(G)\to\Gamma$, where
$\Gamma$ is a set of colours, such that for any subset $C\subseteq\Gamma$
of~$i$ colours, $i\le p$, the vertices with colours from~$C$ induce a
subgraph of treedepth at most~$i$. The \emph{treedepth-$p$ chromatic
  number} of a graph $G$, denoted $\chi_p(G)$, is the smallest number of
colours $|\Gamma|$ needed for a treedepth-$p$ colouring of~$G$. As proved
by Zhu \cite{zhu2009coloring}, the treedepth-$p$ chromatic numbers are
bounded in terms of \mbox{$r$-colouring} numbers as follows.

\begin{theorem}[\rm Zhu \cite{zhu2009coloring}]\label{lem:low-td-col}
  For any graph $G$ and $p\in\N$ we have
  \[\chi_p(G)\le \bigl(\col_{2^{p-2}}(G)\bigr)^{2^{p-2}}.\]
\end{theorem}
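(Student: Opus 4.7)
I would follow Zhu's approach by induction on $p$, exploiting the ordering $L$ witnessing $\col_{2^{p-2}}(G, L) = k$, so that $|\SReach_{2^{p-2}}[G,L,v]| \le k$ for every vertex $v$. The guiding observation is that treedepth has a clean characterization in terms of elimination trees: a graph has treedepth at most $p$ if and only if it admits a vertex ordering under which every vertex has at most $p-1$ ancestors it reaches through higher vertices. A treedepth-$p$ colouring should therefore assign colours whose $\le p$-fold restrictions inherit precisely this bounded-elimination structure from~$L$.

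\textbf{Base case.} For $p = 2$ we have $2^{p-2} = 1$ and the claim reduces to $\chi_2(G) \le \col_1(G)$. Greedy colouring that processes $V(G)$ from the largest to the smallest element of $L$, assigning each vertex the least colour from $\{1,\ldots,k\}$ not used by its strongly $1$-reachable successors, produces a proper $k$-colouring. Any single class is independent (treedepth $1$), and any two classes induce a subgraph that is $1$-degenerate in the ordering inherited from $L$; the greedy rule forces this subgraph to be a star forest, hence of treedepth at most $2$.

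\textbf{Inductive step.} The simultaneous doubling of the radius ($2^{p-2}\to2^{p-1}$) and of the exponent ($2^{p-2}\to2^{p-1}$) when going from $p$ to $p+1$ strongly suggests a \emph{squaring} construction: the colour set for $\chi_{p+1}$ should essentially be the product of two copies of the colour set for $\chi_p$. Concretely, I would combine the inductive treedepth-$p$ colouring of $G$ at radius $2^{p-2}$ with a second treedepth-$p$ colouring of an auxiliary graph $G^\ast$ on $V(G)$, in which $uv$ is an edge whenever $\SReach_{2^{p-2}}[G,L,u]\cap \SReach_{2^{p-2}}[G,L,v]\neq\emptyset$; note that $G^\ast$ still has $\col_{2^{p-2}}$ bounded in terms of $\col_{2^{p-1}}(G)$, so the inductive hypothesis applies. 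The product of the two colourings is then the candidate treedepth-$(p+1)$ colouring.

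\textbf{Main obstacle.} The technical heart is verifying that any choice of at most $p+1$ classes under the product colouring induces a subgraph of treedepth at most $p+1$. The natural strategy is to take the $L$-minimal vertex $v$ of the induced subgraph as the root of an elimination tree, remove it, and apply the inductive bound to each remaining component using one of the two coordinates of the colouring. The delicate point is that paths of length $\le 2^{p-2}$ in a component of the induced subgraph can correspond to paths of length up to $2\cdot 2^{p-2} = 2^{p-1}$ in the original graph $G$, which is precisely why the radius must double; controlling this blow-up through the auxiliary graph $G^\ast$ is where the main work lies, and I would expect to spend the bulk of the argument carefully matching ancestors in the elimination tree of the restricted subgraph with elements of the strongly reachable sets guaranteed by $L$.
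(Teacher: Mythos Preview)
The paper does not prove this theorem: it is quoted from Zhu and used as a black box in the discussion after Theorem~\ref{thm:plain-FO-MC}. There is therefore no proof in the paper against which to compare your proposal.

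Independently of that, your base case is incorrect. You claim that greedy colouring along $L$ makes any two colour classes induce a star forest, but take a path $v_1v_2\cdots v_n$ with its natural ordering: here $\col_1=2$, the greedy rule alternates colours $1,2,1,2,\ldots$, and the two classes together span the entire path, whose treedepth is $\lceil\log_2(n+1)\rceil$, not~$2$. (This example in fact shows that the inequality as literally printed fails at $p=2$; the indexing of the radius in this bound varies slightly across the literature.) More generally, Zhu's argument is not an induction on~$p$ with a squaring step: one assigns to each vertex a single vector colour of length~$2^{p-2}$ built directly from the ordering~$L$, and then verifies the treedepth condition for all $i\le p$ simultaneously by locating, in every connected subgraph spanned by~$i$ classes, a vertex (the $L$-minimum) whose colour is unique there, so that it can serve as the root of an elimination tree. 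Your auxiliary-graph idea is interesting, but the sketch does not address why the product colouring controls treedepth for \emph{every} $i\le p+1$ at once, which is what the definition of $\chi_{p+1}$ demands; handling only the top value $i=p+1$ is not enough.
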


\medskip
Moreover, an appropriate treedepth-$p$ colouring can be constructed in
polynomial time from an ordering $L\in\Pi(G)$, certifying an upper bound on
$\col_{2^{p-2}}(G)$.

The computation of both $\mathfrak{R}_q^q$ and $\mathfrak{R}_i^q$ from
$\mathfrak{R}_{i+1}^q$ is done by rewriting every possible type as a purely
existential formula. Each rewriting step, however, enriches the signature
by unary relations corresponding to colours of some \mbox{treedepth-$p$}
colouring $\gamma$, as well as binary relations representing edges of
appropriate treedepth decompositions certifying that $\gamma$ is correct.
However, the binary relations are added in a way that the Gaifman graph of
the structure remains intact. For us it is important that in all the steps,
the parameter $p$ used for the definition of $\gamma$ depends only on $q$
and $i$ in a computable manner. Thus, by
Theorem~\ref{lem:low-td-col}, to ensure that $\gamma$ uses a bounded
number of colours, we only need to ensure the boundedness of
$\col_{r(q)}(G(\strA))$ for some computable function $r(q)$. By
taking~$q$ to be the quantifier rank of the input formula, the statement of
Theorem~\ref{thm:plain-FO-MC} follows.

\medskip
We can now combine all the ingredients and show how our main result follows
from Theorem~\ref{thm:tree-to-succ}.

\begin{proof}[Proof of Theorem~\ref{thm:main}] \ Given a
  successor-invariant formula $\phi\in \FO[\tau\cup\{S\}]$, we first
  compute the integer $r\coloneqq r(\phi)$ whose existence and
  computability is stated in Theorem~\ref{thm:plain-FO-MC}. Given the
  structure $\strA$, we now use the algorithm of
  Theorem~\ref{thm:adm-compute} to compute an order $L$ of the vertex
    set~$V(\strA)$ of the Gaifman graph of $\strA$, which satisfies
  $\adm_{80r+2}(G(\strA))\leq c(r)$ for some constant $c(r)$. Such
  constant exists by the assumption that $\strA$ is from a class of bounded
  expansion. We use the algorithm of Theorem~\ref{thm:main-technical} to
  compute a set of unordered pairs $F\subseteq\binom{V(\strA)}{2}$
  such that the graph $T=(V(\strA),F)$ is a tree of maximum degree
  at most $3$ and
  \[\adm_{40r+1}(G(\strA)+F,L)\le
  2+2\cdot\col_{80r+2}(G(\strA),L).\]
  By Lemma~\ref{lem:gen-col-ineq} we find
  $\col_{80r+2}(G(\strA),L)\leq
  \adm_{80r+2}(G(\strA),L)^{80r+2}\leq c(r)^{80r+2}$. This
    means that $\adm_{40r+1}(G(\strA)+F,L)\leq g(r)$ for
  $g(r)=2+2c(r)^{80r+2}$. Now, using the algorithm of
  Theorem~\ref{thm:tree-to-succ} we compute a successor relation $S$ such
  that
  \[\adm_r(G(\strA)+S,L)\le
  h\bigl(r,\adm_{40r+1}(G(\strA)+F)\bigr),\]
  where $h$ is the function from Theorem~\ref{thm:tree-to-succ}.
  Finally, we apply the algorithm of Theorem~\ref{thm:plain-FO-MC}
  to decide whether $(\strA,S)\models\phi$ in time
  $f\bigl(|\phi|,\adm_r(G(\strA)+S)\bigr)\cdot n$. Since $\strA$ is
  drawn from a fixed class of bounded expansion $\CCC$,
  $\adm_r(G(\strA)+S)$ is a parameter depending only on $\phi$. This
  finishes the proof of the theorem.
\end{proof}

%%% Local Variables:
%%% mode: latex
%%% TeX-master: "main"
%%% End:

\section{Dense Graphs}
\label{sec:dense}

While model-checking for first-order logic has been studied rather
thoroughly for sparse graph classes, few results are known for dense
graphs.
\begin{itemize}
\item On classes of graphs with bounded clique-width (or, equivalently,
  bounded rank-width; cf.~\cite{OumS06}), model-checking even for monadic
  second-order logic has been shown to be fixed-parameter tractable by Courcelle et
  al.~\cite{CourcelleMakRot00}.
\item More recently, model-checking on coloured posets of bounded width has
  been shown to be in fixed-parameter tractable for existential \FO by Bova et
  al.~\cite{bova2015model} and for all of \FO by Gajarsk\'y et
  al.~\cite{GajarskyHLOORS15}.
\end{itemize}

Both of these results extend to order-invariant \FO, and therefore also to
successor-invariant~\FO. For bounded clique-width, this has already been
shown in Section~\ref{sec:order-inv} . For posets
of bounded width we give a proof here. We first review the necessary
definitions.

\begin{definition}
  A \emph{partially ordered set (poset)} $(P,\leq^P)$ is a set $P$ with a
  reflexive, transitive and antisymmetric binary relation $\leq^P$. A
  \emph{chain} $C \subseteq P$ is a totally ordered subset, i.e.\ for all
  $x, y \in C$ one of $x \leq^P y$ and $y \leq^P x$ holds. An
  \emph{antichain} is a set $A \subseteq P$ such that if $x \leq^P y$ for
  $x,y \in A$, then $x = y$. The \emph{width} of $(P,\leq^P)$ is the
  maximal size $|A|$ of an antichain $A \subseteq P$.

  A \emph{coloured} poset is a poset $(P,\leq^P)$ together with a function
  $\lambda : P \to \Lambda$ mapping $P$ to some set $\Lambda$ of
  \emph{colours}.

  By $|P|$ we denote the length of a suitable encoding of $(P,\leq^P)$.
\end{definition}

We will need Dilworth's Theorem, which relates the width of a poset to the
minimum number of chains needed to cover the poset.

\begin{theorem}[Dilworth's Theorem]
  Let $(P,\leq^P)$ be a poset. Then the width of $(P,\leq^P)$ is equal to
  the minimum number $k$ of disjoint chains $C_i,\ldots,C_k\subseteq P$
  needed to cover $P$, i.e.\ such that $\bigcup_i C_i = P$.
\end{theorem}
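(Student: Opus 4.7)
The plan is to prove Dilworth's Theorem by induction on $|P|$. The easy direction is that $w := \textrm{width}(P,\leq^P)$ chains are necessary for any cover: if $A$ is an antichain of size $w$ and $C$ is a chain, then $|A \cap C| \leq 1$, so covering $A$ requires at least $w$ chains. So the content of the statement is that $w$ chains suffice.

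For the sufficiency direction I would proceed by strong induction on $|P|$, with the base case $|P| = 0$ being trivial. Fix a maximum antichain $A$ of size $w$, and let $D = \{x \in P : x \leq^P a \text{ for some } a \in A\}$ and $U = \{x \in P : x \geq^P a \text{ for some } a \in A\}$. Then $D \cup U = P$ (otherwise any $z \in P \setminus (D \cup U)$ would be incomparable to every element of $A$, contradicting maximality of $A$) and $D \cap U = A$.

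The main case is when both $D$ and $U$ are proper subsets of $P$, which happens whenever $A$ is neither the set of all minimal nor the set of all maximal elements. Then $|D|, |U| < |P|$ and both have width exactly $w$ (they contain $A$). By the inductive hypothesis each admits a cover by $w$ chains; a pigeonhole argument shows that each such chain meets $A$ in exactly one element. For each $a \in A$, glue the chain in the cover of $D$ containing $a$ to the chain in the cover of $U$ containing $a$ into a single chain of $P$; this yields the desired cover of $P$ by $w$ chains.

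The degenerate case is when every maximum antichain of $P$ is either the set of all minimal elements or the set of all maximal elements — this is the obstacle. Here I would pick a minimal element $b$ and a maximal element $a$ of $P$ with $b \leq^P a$ (choosing $b = a$ if $P$ has a minimal element that is also maximal), form the chain $C = \{b, a\}$, and apply induction to $P' := P \setminus C$. The key claim is that $\textrm{width}(P') \leq w - 1$: any antichain $A' \subseteq P'$ of size $w$ would be a maximum antichain of $P$ avoiding both $b$ and $a$, violating the assumption that every maximum antichain is either all-minimal or all-maximal (since $b$ is minimal and $a$ maximal in $P$). By induction $P'$ has a chain cover of size $w - 1$, and adjoining $C$ produces a cover of $P$ of size $w$.

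The nontrivial step in the argument is verifying that in the degenerate case a comparable minimal-maximal pair exists and that removing it actually drops the width; I would handle the subcase in which no minimal element is comparable to any maximal element separately, noting that then the set of minimal elements and the set of maximal elements are disjoint antichains of size $w$ each, which immediately yields a chain cover by $w$ two-element chains together with any leftover elements slotted into these chains via the induction on $P'$ where one removes a single minimal element.
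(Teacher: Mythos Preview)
The paper does not actually prove Dilworth's Theorem; it simply cites Diestel's textbook. Your argument is the standard inductive proof (due to Perles), which is essentially what appears in that reference, so you are aligned with what the paper points to.

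One correction to your write-up: the final paragraph worries about a subcase that cannot occur. In any finite nonempty poset a comparable minimal--maximal pair always exists: pick any minimal element $b$ and follow the order upward until you reach a maximal element $a$; then $b \leq^P a$. (Formally, $a$ is any maximal element of $\{x \in P : b \leq^P x\}$, and such an $a$ is automatically maximal in $P$.) Hence the degenerate case is already completely handled by removing $C = \{a,b\}$ and applying induction to $P' = P \setminus C$, whose width drops to at most $w-1$ by exactly the argument you gave. The separate treatment of ``no minimal element comparable to any maximal element'' is vacuous and should be deleted; with that paragraph removed, the proof is clean.
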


A proof can be found e.g.\ in~\cite[Sec.~2.5]{diestel2012graph}. Moreover,
by a result of Felsner et al.~\cite{felsner2003recognition}, both the width
$w$ and a set of chains $C_1,\ldots,C_w$ covering $P$ can be computed from
$(P,\leq_P)$ in time $O(w\cdot|P|)$.

With this, we are ready to prove the following.

\begin{theorem}
  There is an algorithm which, on input a coloured poset $(P,\leq^P)$ with
  colouring $\lambda:P\to\Lambda$ and an order-invariant first-order
  formula $\varphi$, checks whether $P\models\varphi$ in time
  $f(w,|\varphi|)\cdot|P|^2$ where $w$ is the width of $(P,\leq^P)$.
\end{theorem}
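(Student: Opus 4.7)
The plan is to reduce the problem to plain first-order model-checking on coloured posets of bounded width, which is handled by Gajarsk\'y et al.~\cite{GajarskyHLOORS15}. The key point is that a chain cover of a bounded-width poset gives us enough extra information (via unary colours) to first-order define a linear order on $P$, which by order-invariance of $\varphi$ is just as good as any other linear order.

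First, using the algorithm of Felsner et al.~\cite{felsner2003recognition}, I would compute in time $O(w \cdot |P|)$ the width $w$ of $(P, \leq^P)$ together with a decomposition $C_1, \ldots, C_w$ of $P$ into disjoint chains (Dilworth's Theorem guarantees that $w$ such chains suffice). I would then refine the colouring $\lambda : P \to \Lambda$ to a colouring $\lambda' : P \to \Lambda \times \{1, \ldots, w\}$ by setting $\lambda'(x) = (\lambda(x), i)$ whenever $x \in C_i$. Since this merely adds unary marks, the augmented coloured poset $(P, \leq^P, \lambda')$ still has width exactly $w$, and hence remains amenable to the algorithm of Gajarsk\'y et al.

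Second, consider the linear order $<^*$ on $P$ defined lexicographically by the chain index and then by $\leq^P$ within a chain: $x <^* y$ iff $x \in C_i$, $y \in C_j$ with $i < j$, or $x, y \in C_i$ with $x \leq^P y$ and $x \neq y$. This is a total order because each $C_i$ is totally ordered by $\leq^P$. Writing $Q_i(x)$ as shorthand for ``the second coordinate of $\lambda'(x)$ equals $i$'', which is a unary atomic formula in the coloured-poset signature, $<^*$ is first-order definable in $(P, \leq^P, \lambda')$ via
\[\psi_<(x, y) \;:=\; \bigvee_{i=1}^{w}\bigl(Q_i(x) \wedge Q_i(y) \wedge x \leq^P y \wedge x \neq y\bigr) \;\vee\; \bigvee_{1 \leq i < j \leq w}\bigl(Q_i(x) \wedge Q_j(y)\bigr).\]

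Third, I would syntactically substitute $\psi_<(u, v)$ for every atom of the form $u < v$ appearing in $\varphi$, producing a formula $\varphi'$ over the signature of the coloured poset with $|\varphi'| = O(|\varphi| \cdot w^2)$. By construction, $(P, \leq^P, \lambda, <^*) \models \varphi$ iff $(P, \leq^P, \lambda') \models \varphi'$, and by order-invariance of $\varphi$ this is equivalent to $P \models_{\mathrm{ord-inv}} \varphi$. I would then feed $\varphi'$ and $(P, \leq^P, \lambda')$ to the Gajarsk\'y et al.\ algorithm, whose running time is $g(w, |\varphi'|) \cdot |P|^2$ for a suitable computable $g$; absorbing the dependence on $|\varphi'|$ into the parameter part gives the claimed bound $f(w, |\varphi|) \cdot |P|^2$. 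I expect no serious obstacle here: the construction of $<^*$ from a chain cover is natural, and the only subtle point is to confirm that enriching the colouring with chain indices does not change the width, which is immediate since width depends only on the relation $\leq^P$.
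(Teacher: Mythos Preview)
Your proposal is correct and follows essentially the same approach as the paper: compute a chain cover via Felsner et al., tag each element with its chain index as an additional unary colour, define a lexicographic linear order by a first-order formula in the enriched structure, substitute into $\varphi$, and hand the result to the Gajarsk\'y et al.\ algorithm. Your formula $\psi_<$ is in fact slightly cleaner than the paper's (which unnecessarily ranges over all pairs of original colours), but the argument is otherwise identical.
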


\begin{proof}
  Using the algorithm of~\cite{felsner2003recognition}, we compute a chain
  cover $C_1,\ldots,C_w$ of $(P,\leq^P)$. To obtain a linear order on $P$,
  we just need to arrange the chains in a suitable order, which can be done
  by colouring the vertices with colours $\Lambda \times [w]$ via
  \[\lambda'(v)=(\lambda(v),j),\quad \text{for $v\in C_j$}.\]
  Then
  \begin{align*}
    \varphi_\leq(x,y)\coloneqq{}&
    \Bigl(
    \bigvee_{\substack{\lambda_x,\lambda_y \in \Lambda,\\i<j}}
    \bigl(\lambda'(x)=(\lambda_x,i)\wedge \lambda'(y)=(\lambda_y,j)\bigr)
    \Bigr)\\
    &\vee
    \Bigl(
    \bigvee_{\substack{\lambda_x,\lambda_y \in \Lambda,\\i\in [w]}}
    \bigl(\lambda'(x)=(\lambda_x,i)\wedge \lambda'(y)=(\lambda_y,i)\wedge
    x \leq y\bigr)\Bigr)
  \end{align*}
  defines a linear order on $(P,\leq_P)$ with colouring $\lambda'$. After
  substituting $\varphi_\leq$ for $\leq$ in $\varphi$, we may apply
  the algorithm of Gajarsk\'y et al.~\cite{GajarskyHLOORS15} to check whether
  $P\models \varphi$.
\end{proof}

\bibliographystyle{plain}
\bibliography{ref} 

\end{document}